\definecolor{mygray}{gray}{0.9}
\numberwithin{equation}{section}
\newcommand{\stkout}[1]{\ifmmode\text{\sout{\ensuremath{#1}}}\else\sout{#1}\fi}
\theoremstyle{plain}
\newtheorem{thm}{Theorem}[section]
\newtheorem{cor}{Corollary}[section]
\newtheorem{lemma}{Lemma}[section]
\theoremstyle{remark}
\newtheorem{remark}{Remark}[section]
\newtheorem{exmp}{Example}[section]
\colorlet{Changes@Color}{red}
\newcommand{\R}{\mathbb{R}}
\newcommand{\boldB}{\boldsymbol{B}}
\newcommand{\boldD}{\boldsymbol{D}}
\newcommand{\boldI}{\boldsymbol{I}}
\newcommand{\boldP}{\boldsymbol{P}}
\newcommand{\boldX}{\boldsymbol{X}}
\newcommand{\boldY}{\boldsymbol{Y}}
\newcommand{\boldw}{\boldsymbol{w}}
\newcommand{\boldx}{\boldsymbol{x}}
\newcommand{\boldy}{\boldsymbol{y}}
\newcommand{\indicator}{\mathbbm{1}}
\newcommand{\bbeta}{\boldsymbol{\beta}}
\newcommand{\bdelta}{\boldsymbol{\delta}}
\newcommand{\bepsilon}{\boldsymbol{\varepsilon}}
\newcommand{\bnu}{\boldsymbol{\nu}}
\newcommand{\btheta}{\boldsymbol{\theta}}
\newcommand{\bmu}{\boldsymbol{\mu}}
\newcommand{\bSigma}{\boldsymbol{\Sigma}}
\newcommand{\boldzero}{\mathbf{0}}
\newcommand{\boldone}{\mathbf{1}}
\newcommand{\prob}{\mathbb{P}}
\DeclareMathOperator*{\argmin}{arg\,min}
\DeclareMathOperator*{\argmax}{arg\,max}
\definecolor{MSU}           {RGB}{24, 69, 59}
\newcommand\hl[1]{%
  \bgroup
  \hskip0pt\color{red!80!black}%
  #1%
  \egroup
}
\title{Ultra high dimensional generalised additive model: Unified Theory and Methods}
\author{}
\date{\today}
\begin{document}
\begin{titlepage}
\maketitle 
\begin{center}
(Running title: High dimensional GAM)
\end{center}
 
        \vfill
\begin{center}
\begin{multicols}{2}

        Kaixu Yang\\
        Department of Statistics and Probability\\
        Michigan State University\\
        USA\\
        
\columnbreak
  
        Tapabrata Maiti\\
        Department of Statistics and Probability\\
        Michigan State University\\
        USA
\end{multicols}
\end{center}
\end{titlepage}

\begin{abstract}
Generalised additive model is a powerful statistical learning and predictive modeling tool that has been applied in a wide range of applications. The need of high-dimensional additive modeling is eminent in the context of dealing with high through-put data such as \replaced{genetic data analysis. }{}In this article, we studied a two step selection and estimation method for ultra high dimensional generalised additive models. The first step applies group lasso on the expanded bases of the functions. With high probability this selects all nonzero functions without having too much over selection. The second step uses adaptive group lasso with any initial estimators, including the group lasso estimator, that satisfies some regular conditions. The adaptive group lasso estimator is shown to be selection consistent with improved convergence rates. Tuning parameter selection is also discussed and shown to select the true model consistently under GIC procedure. The theoretical properties are supported by extensive numerical study.
\end{abstract}

\begin{keywords}
Adaptive group lasso; Generalised additive model; High dimensional variable selection; Selection consistency; Tuning parameter selection.
\end{keywords}

\section{Introduction}
The main objective of this work is to establish theory driven high dimensional generalised additive modeling method with nonlinear links. The methodology includes convergence rate, variable selection consistency and tuning parameter selection consistency. Additive models play important roles in nonparametric statistical modeling and machine learning. Although this important statistical learning tool has been used in many important applications and there are free software available for implementing these models along with their variations, to our surprise, there is no literature that has studied the high-dimensional GAM with non-identity link systematically with theoretical foundation. Generalised additive modeling allows nonlinear relationship between a response variable and a set of predictor variables. This general set up includes the special case, namely, the generalised linear models, by letting each additive component be a linear function. In general, let $(y_i,\boldX_i),i=1,...,n$ be independent observations, where $y_i$'s are response variables whose corresponding $p\text{-dimensional}$ predictor vectors are $\boldX_i$'s. A generalised additive model \added{\citep{hastie1986}} is defined as
\begin{equation}
\label{GAM}
\mu_i=E(y_i|\boldX_i)=g^{-1}\left(\sum_{j=1}^{p_n}f_j(X_{ij})\right),
\end{equation}
where $g(\cdot)$ is a link function, $f_j$'s are unspecified smooth functions and $X_{ij}$ is the $j$th component of vector $\boldX_i$. One of the functions could be a constant, which is the intercept term, but this is not necessary. The number of additive components is written as $p_n$, since it sometimes (usually in high dimensional set up) increases as $n$ increases. A simple case that many people have studied is $p_n=p$, where the number of additive components is fixed and usually less than the sample size $n$. The choice of link function is as simple as in generalised linear models, where people prefer to choose link functions that make the distribution of the response variables belong to the popular exponential family. A widely used generalised additive model has the identity link function $g(\mu)=\mu$, which gives the classical additive model
\begin{equation}
\label{AM}
y_i=\sum_{j=1}^{p_n}f_j(X_{ij})+\epsilon_i,
\end{equation}
where $\epsilon_i$'s are i.i.d random variables with mean $0$ and finite variance $\sigma^2$.

On the other hand, high dimensional data analysis has become a part of many modern days scientific applications. Often the number of predictors $p_n$ is much larger than the number of observations $n$, which is usually written as $p_n\gg n$. One of the most interesting scale is $p_n$ increases exponentially as $n$ increases, i.e. $\log p_n=O(n^{\rho})$ for some constant $\rho>0$. \cite{fan2011nonconcave} called this as non-polynomial dimensionality or ultra high-dimensionality.

In this paper, we consider the generalised additive model in a high-dimensional set up. To avoid identification problems, the functions are assumed to be sparse, i.e. only a small proportion of the functions are non-zero and all others are exactly zero. A more generalised set up is that the number of nonzero functions, denoted $s_n$, also diverges as $n$ increases. This case is also considered in this paper.

Many others have worked on generalised additive models. Common approaches use basis expansion to deal with the nonparametric functions, and perform variable selection and estimation methods on the bases. \cite{meier2009high} considered a simpler case (\ref{AM}), with a new sparsity-smoothness penalty and proved it's oracle property. They also performed a simulation study under logit link with their new penalty, however, no theoretical support was provided. \cite{fan2011nonparametric} proposed the nonparametric independence screening (NIS) method in screening the model (\ref{AM}).  However, the selection consistency and the generalised link functions were not discussed. \cite{marra2011practical} discussed the practical variable selection in additive models, \replaced{but not in the high-dimensional set up}{}. \cite{liu2013oracally} considered a two-step oracally efficient approach in generalised additive models in the low dimensional set up, but no variable selection in the high dimensional set up was done. \cite{huang2010variable} focused on the variable selection of (\ref{AM}) with fixed number of nonzero functions and identity link function using a two step approach: \replaced{f}{}irst group lasso \citep{bakin1999adaptive,yuan2006model} on the bases to select the nonzero predictors and then use adaptive group lasso to estimate the bases coefficients. They then established the selection consistency and provided the rate of convergence of the estimation. \added{\cite{amato2016additive} reviewed several existing algorithms highlighting the connections between them, including the non-negative garrote, COSSO and adaptive shrinkage, and presented some computationally efficient algorithms for fitting the additive models.} \cite{nandy2017additive} extended the consistency and rate of convergence of \cite{huang2010variable} to spatial additive models. \added{\cite{fan2018nonparametric} studied the GAM with identity link under the endogeneity setting. It worth mentioning that alternative methods to penalization have also been studied, for example, \cite{tutz2006generalized} studied fitting GAM and perform variable seleciton implicitly through likelihood based boosting.}

However, though widely used, no systematic theory about selection and estimation consistency and rate of convergence has been established for generalised additive models with non-identity link functions in the high-dimensional set up.

In this paper, we establish the theory part for generalised additive models with non-identity link functions in high dimensional set up. We develop a two-step selection approach, where in the first step we use group lasso to perform a screening, which, under mild assumptions, is able to select all nonzero functions and not over-select too much. In the second step, the adaptive group lasso procedure is used and is proved to select the true predictors consistently.

Another important practical issue in variable selection and penalised optimization problems is tuning parameter selection. Various cross validation (CV) techniques have been used in practice for a long time. Information criteria such as Akaike information criterion (AIC), AICc, Bayesian information criterion (BIC), Mallow's $C_p$ and etc. have been used to select `the best' model as well. Many equivalences among the tuning parameter selection methods have been shown in the Gaussian linear regression case. However, the consistency of these selection methods were not established. Later some variations of the information criteria such as modified BIC \citep{zhang2007modified,wang2009shrinkage} extended BIC \citep{chen2008extended} and generalised information criterion (GIC) \citep{fan2013tuning} were proposed and shown to have good asymptotic properties in penalised linear models and penalised likelihoods. However, the results are not useful for grouped variables in additive models, for which basis expansion technique is usually used and thus brings grouped selection.

In this paper, we generalise the result of generalised information criterion (GIC) by \cite{fan2013tuning} to group-penalised likelihood problems and show that under some common conditions and with a good choice of the parameter in GIC, we are able to select the tuning parameter that corresponds to the true model.

In section \ref{Sec:Model}, the model is specified and basic approach is discussed. Notations and basic assumptions are also introduced in this section. Section \ref{Sec:Methodology} gives the main results of the two steps selection and estimation procedure. Section \ref{Sec:Tuning} develops the tuning parameter selection. Extensive simulation study and  real data example are presented in section \ref{Sec:Numerical} followed by a short discussion in section \ref{Sec:Discussion}. The proofs of all theorems are deferred to supplementary materials.

\section{Model}
\label{Sec:Model}
We consider the generalised additive model (\ref{GAM}) with the link function corresponding to an exponential family distribution of the response. For each of the $n$ independent observations, the density function is given as
\begin{equation}
\label{expfamilydensity}
f_{y_i}(y)=c(y)\exp\left[\frac{y\theta_i-b(\theta_i)}{\phi}\right],\ 1\leq i\leq n,\ \theta_i\in\mathbb{R}.
\end{equation}
Without loss of generality, we assume that the dispersion parameter $0<\phi<\infty$ is assumed to be a known constant. Specifically we assume  $\phi=1$. \added{We consider a fixed-design throughout this paper, i.e., the design matrix $\boldX$ is assumed to be fixed. However, we have shown in appendix \ref{appendix1} that the same theory works for a random design under simple assumptions on the distribution of $\boldX$.} The additive relationship assumes that the densities of $y_i$'s depend on $\boldX_i$'s through the additive structure $\theta_i=\sum_{j=1}^{p_n}f_j(X_{ij})$. This is the canonical link. If we use other link functions, for example, $A(\cdot)$, the theory also works as long as the functions $A(\cdot)$ satisfies the Lipschitz conditions for some order.
 Let $b^{(k)}(\cdot)$ be the $k$-th derivative of $b(\cdot)$, then by property of the exponential family, the expectation and variance matrix of $\boldy=(y_1,...,y_n)^T$, under mild assumptions of $b(\cdot)$, is given by $\bmu(\btheta)$ and $\phi\Sigma(\btheta)$, where
\begin{equation}
\bmu(\btheta)=(b^{(1)}(\theta_1),...,b^{(1)}(\theta_n))^T\ \ \text{and}\ \ \Sigma(\btheta)=\text{diag}\{b^{(2)}(\theta_1),...,b^{(2)}(\theta_n)\}.
\end{equation}

The log-likelihood (ignoring the term $c(y)$ which is not interesting to us in parameter estimation) can be written as
\begin{align}
l&=\sum_{i=1}^n\left[y_i\left(\sum_{j=1}^{p_n}f_j(X_{ij})\right)-b\left(\sum_{j=1}^{p_n}f_j(X_{ij})\right)\right].
\end{align}

\replaced{Assume that the additive components belong to the Sobolev space $W_2^d([a,b])$. According to \cite{schumaker1981spline}, see pages 268-270, there exists B-spline approximation
\begin{equation}\label{basis expansion}
f_{nj}(x)=\sum_{k=1}^{m_n}\beta_{jk}\phi_k(x),\text{\ \ \ \ \ }1\leq j\leq p.
\end{equation}
with $m_n=K_n+l$, where $K_n$ is the number of internal knots and $l\geq d$ is the degree of the splines. Generally, it is recommended that $d=2$ and $l=4$, i.e., cubic splines. }{}

Using the approximation above, \cite{huang2010variable} proved that $f_{nj}$ well approximates $f_j$ in the sense of rate of convergence that
\begin{equation}
\label{approxerr}
\|f_j-f_{nj}\|_2^2=\int_a^b(f_j(x)-f_{nj}(x))^2dx=O(m_n^{-2d}).
\end{equation}

Therefore, using the basis approximation, the log-likelihood (ignoring the term $c(y)$ which is not related to the parameters) can be written as
\begin{align}\label{loglikelihood basis}
l&=\sum_{i=1}^n\left[y_i\left(\sum_{j=1}^{p_n} \sum_{k=1}^{m_n}\beta^0_{jk}\Phi_k(x_{ij})\right)-b\left(\sum_{j=1}^{p_n} \sum_{k=1}^{m_n}\beta^0_{jk}\Phi_k(x_{ij})\right)\right]=\sum_{i=1}^n\left[y_i\left({{\bbeta^0}}^T\Phi_i\right)-b\left({{\bbeta^0}}^T\Phi_i\right)\right],
\end{align}
where ${\bbeta^0}$ and $\Phi_i$ are the vector basis coefficients and bases defined  below.

It's also worth noting that the number of bases $m_n$ increases as $n$ increases. This is necessary since \cite{schumaker1981spline} mentioned that one need to have sufficient partitions to well approximate $f_j$ by $f_{nj}$. If we fix $m_n$, i.e. let $m_n=m_0$, though in the later part we will show the approach to estimate the basis coefficients can have better rate of convergence, the approximation error between the additive components and the spline functions $\|f_j(x)-f_{nj}(x)\|_2=[\int_a^b(f_j(x)-f_{nj}(x))^2dx]^{1/2}=O(1)$ will increase and lead to inconsistent estimations. Therefore, $m_n$, or more precisely, $K_n$, need to increase with $n$.

Our selection and estimation approach will be based on the bases approximated log likelihood (\ref{loglikelihood basis}). Before starting the methodology, we list the notations and state the assumptions we need in this paper.
\subsection*{\textit{Notations}}
\label{Sec:Notations}

The design matrix is $\boldX_{(n\times {p_n})}=(\boldx_1, ..., \boldx_n)^T$. The basis matrix is $\Phi_{(n\times m_n{p_n})}=(\Phi_1, ..., \Phi_n)^T$, where
$\Phi_i=(\phi_1(x_{i1}), ..., \phi_{m_n}(x_{i1}), ..., \phi_1(x_{i{p_n}}), ..., \phi_{m_n}(x_{i{p_n}}))^T.$

The true basis parameters are ${\bbeta^0}=(\beta^0_{11}, ..., \beta^0_{1m_n}, ..., \beta^0_{{p_n}1}, ..., \beta^0_{{p_n}m_n})^T\in \mathbb{R}^{m_n{p_n}}$

We assume the functions $f_1, ..., f_{p_n}$ are sparse, then ${\bbeta^0}$ is block-wise sparse, i.e. the blocks ${\bbeta^0}_{1}=(\beta^0_{11}, ..., \beta^0_{1m_n})^T,..., {\bbeta^0}_{{p_n}}=(\beta^0_{{p_n}1}, ..., \beta^0_{{p_n}m_n})^T$ are sparse.

Let $\bmu_y$ be the expectation of $\boldy$ based on the true basis parameters and $\bepsilon=\boldy-\bmu_y$.

Define the relationship $a_n\preceq b_n$ as there exists a finite constant $c$ such that $a_n\leq cb_n$.

For any function $f$ define $\|f\|_2=[\int_a^bf^2(x)dx]^{1/2}$, whenever the integral exists.

For any two collections of indices $S,\tilde{S}\subseteq\{1,...,p_n\}$, the difference set is denoted $S-\tilde{S}$. The cardinality of $S$ is denoted card$(S)$.
For any $\bdelta\in\mathbb{R}^{m_np_n}$, define $\bdelta_{1},...,\bdelta_{p_n}$ as its sub-blocks, where $\bdelta_{i}\in\mathbb{R}^{m_n}$, and define the block-wise support
$$\text{supp}_B(\bdelta)=\{j\in\{1,...,p_n\};\bdelta_j\neq \boldzero\}.$$

Define the block-wise cardinality
$\text{card}_B(\bdelta)=\text{card}(\text{supp}_B(\bdelta)).$

For $S=\{s_1,...,s_q\}\subseteq \{1,...,p_n\}$, define sub-block vector
$\bdelta_{S}=(\bdelta_{s_1}^T,...,\bdelta_{s_q}^T)^T$.

The number of additive components is denoted $p_n$, which is possible to grow faster than the sample size $n$. Let $T=\text{supp}_B({\bbeta^0})$ and $T^c$ be the compliment set. Let $\text{card}(T)=s_n$, where $s_n$ is allowed to diverge slower than $n$.

For each $U\subseteq\{1,...,p_n\}$ with $\text{card}(U-T)\leq m$ for some $m$, define
$$\mathcal{B}(U)=\{\bdelta\in\mathbb{R}^{m_np_n};\text{supp}_B(\bdelta)\subseteq U\},$$
$$\mathcal{B}(m)=\{\mathcal{B}(U);\text{for\ any}\ U\subseteq \{1,...,p_n\};\text{Card}(U-T)\leq m\}.$$

Let $\replaced{q\ }{}$ be an integer such that $\replaced{q\ }{}>s_n$ and $\replaced{q\ }{}=o(n)$. Define
$$\mathcal{B}_1=\{\bbeta\in\mathcal{B}:\text{card}_B(\bbeta)\leq \replaced{q\ }{}\},$$
\added{where $\mathcal{B}$ is a sufficiently large, convex and compact set in $\R^d$.}

\subsection*{\textit{Assumptions}}
\textbf{Assumption 1} (On design matrix)
\begin{itemize}
\item[]
Using the normalised B-spline bases, the basis matrix $\Phi$ has each covariate vector $\Phi_j,j=1,...,p_n$ bounded, i.e., $\exists\ c_{\Phi}$ such that $\|\Phi_j\|_2\leq \sqrt{n}c_{\Phi},\forall j=1,...,m_n\times p_n$.
\end{itemize}
\textbf{Assumption 2} (Restricted Eigenvalues \textbf{RE})
\begin{itemize}
\item[]
For a given sequence $N_n$, there exist $\gamma_0$ and $\gamma_1$ such that
\begin{equation}
\gamma_0\gamma_2^{2s_n}m_n^{-1}\leq\frac{\bdelta^T\Phi^T\Phi\bdelta}{n\|\bdelta\|_2^2}\leq\gamma_1m_n^{-1},
\end{equation}
where $\gamma_2$ is a positive constant such that $0<\gamma_2<0.5$, for all
$\bdelta\in\mathcal{C}$, where $\bdelta^T=(\bdelta_1^T,...,\bdelta_{p_n}^T)$ and
\begin{equation}
\mathcal{C}=\left\{\bdelta\in\R^{p_nm_n}:\|\bdelta\|_2\neq 0,\ \|\bdelta\|_2\leq N_n\ \text{and}\ \text{card}_B(\bdelta)=o(s_n)\right\}.
\end{equation}
\end{itemize}
\textbf{Assumption 3} (On the exponential family distribution)
\begin{itemize}
\item[]
The function $b(\theta)$ is three times differentiable with $c_1\leq b''(\theta)\leq c_1^{-1}$ and $|b'''(\theta)|\leq c_1^{-1}$ in its domain for some constant $c_1>0$. For unbounded and non-Gaussian distributed $Y_i$, there exists a diverging sequence $M_n=o(\sqrt{n})$ such that
\begin{equation}
\label{assbddmean}
\sup_{\bbeta\in\mathcal{B}_1}\max_{1\leq i\leq n}\left|b'\left(\left|\Phi_i^T\bbeta\right|\right)\right|\leq M_n.
\end{equation}
Additionally the error term $\epsilon_i=y_i-\mu_{y_i}$'s follow the uniform sub-Gaussian distribution, i.e., there exist constants $c_2>0$ such that uniformly for all $i=1,..,n$, we have
\begin{equation}
\label{assbddtailprob}
P(|\epsilon_i|\geq t)\leq 2\exp(-c_2t^2)\ \text{for\ any\ }t>0.
\end{equation}
\end{itemize}
\textbf{Assumption 4} (On nonzero function coefficients)
\begin{itemize}
\item[]\replaced{There exist a sequence $c_{f,n}$ that may tend to zero as $n\rightarrow\infty$ such that for all $j\in T$, the true nonzero functions ${f_j}$ satisfy $$\min_{j\in T}\|{f_j}\|_2\geq c_{f, n}.$$}{}
\end{itemize}
We note that Assumption 1 is a standard assumption in high dimensional models, where the design matrix needs to be bounded from above. Assumption 2 is a well-known condition in high-dimension set up on the empirical Gram matrix \citep{bickel2009simultaneous}. It is different than the regular eigenvalue condition, since when $n<p$, the $p\times p$ Gram matrix has rank less than $p$, thus it must have zero eigenvalues. Therefore, it is not realistic to bound the eigenvalues away from zero for all $\bnu\in\R^{p_nm_n}$, but we need to restrict to some space $\mathcal{C}$. In our set up, $\mathcal{C}$ is the restricted sub-block eigenvalue condition on sub-blocks of the Gram matrix studied by \cite{belloni2013least}. Though the lower bound and upper bound are imposed on the fixed design matrix, we gave a derivation in supplementary materials that this condition holds when $\boldX$ is drawn from a continuously differentiable density function which is bounded away from 0 and infinity on the domain of $\boldX$. This result is similar to the results in \cite{huang2010variable}.

Assumption 3 is a standard assumption to generalised models. (\ref{assbddmean}) and (\ref{assbddtailprob}) together controls the tail behavior of the responses, and as mentioned by \cite{fan2013tuning}, ensure a general and broad applicability of the method. Analogous assumptions to (\ref{assbddmean}) can also be seen in \cite{fan2010sure} and \cite{Bhlmann:2011:SHD:2031491}. \added{Specifically, for example, we have $b(\btheta)=\log(1+\exp(\btheta))$. It's easy to verify that both its second and third derivatives have their absolute values all bounded from above by 1. For equation (\ref{assbddmean}), observe that the first derivative is the mean of Bernoulli distribution, and thus it is also bounded. The error term is also bounded by 1, therefore, taking $c_2=\log(2)$ will make equation (\ref{assbddtailprob}) satisfy all logistic regression cases. Moreover, bounded second moment in logistic regression ensure that there exist $\epsilon$ such that the probability $p_i$ of each observation satisfies $\epsilon<p<1-\epsilon$.}

Assumption 4 appears often in variable selection methodologies, because intuitively a nonzero function or covariate has to contribute enough to the response in order to be considered nonzero.

\begin{remark}
\label{REremark}
In assumption 2, $\added{\bdelta}=\bbeta-\bbeta_0$ is the difference vector between a $\bbeta$ and the true coefficients $\bbeta_0$, thus we can view $\mathcal{C}$ as a restricted neighborhood of $\bbeta_0$, i.e.,
$$\mathcal{N}^{RE}_{\bbeta_0}=\left\{\bbeta:\|\bbeta-\bbeta_0\|_2\leq N_n,\ m_n\times\text{card}_B(\bdelta)\leq n^*=o(n)\right\}.$$
If $\bbeta\in\mathcal{N}^{RE}_{\bbeta_0}$, then by assumption 2 we have
$$\frac{(\bbeta-\bbeta_0)^T\Phi^T\Phi(\bbeta-\bbeta_0)}{n\|\bbeta-\bbeta_0\|_2^2}\geq\gamma_0\gamma_2^{2s_n}m_n^{-1}.$$
This, together with the bounded variance assumption in assumption 3, ensures the restricted strong convexity of the target function, i.e., for a $\bbeta^*\in\mathcal{N}^{RE}_{\bbeta_0}$, we have
\begin{equation}
\label{restrictedstrongconvexity}
\frac{(\bbeta^*-\bbeta_0)^T\Phi^T\bSigma(\bbeta)\Phi(\bbeta^*-\bbeta_0)}{n\|\bbeta^*-\bbeta_0\|_2^2}\geq\gamma_0c_1\gamma_2^{2s_n}m_n^{-1},\ \forall\ \bbeta\in\mathcal{N}^{RE}_{\bbeta_0}.
\end{equation}
\end{remark}

\section{Methodology \& Theoretical Properties}
\label{Sec:Methodology}
We propose a two step procedure for selecting high dimensional additive models with generalised link that has improved convergence rates compared to single stage selection.
\subsection{\textit{First step: model screening}}
The objective of this step is to recover the true support $T$ of the additive components. Let $\hat{T}$ be a random support given by a model selection procedure and $|\hat{T}|$ be the number of variables selected. A good model selection procedure should satisfy the common screening consistency conditions
\begin{equation}
\label{firstrequirement}
T\subset \hat{T},\ |\hat{T}|=O(s_n),\ \text{w.p.\ converging\ to\ }1.
\end{equation}

There have been many variable selection penalization \citep{fan2004nonconcave,van2008high,fan2010sure,fan2011nonconcave} in generalised linear models and \citep{huang2010variable} in linear additive models where this condition holds. Specifically, \cite{fan2010sure} satisfies the requirements in (\ref{firstrequirement}) in generalised linear models and \cite{huang2010variable} also satisfies (\ref{firstrequirement}) in additive models with identity link function. In this paper, we show that under mild conditions, by maximizing the log-likelihood with group lasso-like penalization, we can select a model that satisfies (\ref{firstrequirement}). We also provide a rate of convergence of this first step selection.

Define the objective function to be
\begin{equation}
\label{grouplasso}
L(\bbeta;\lambda_{n1})=-\frac{1}{n}\sum_{i=1}^n\left[y_i\left(\bbeta^T\Phi_i\right)-b\left(\bbeta^T\Phi_i\right)\right]+\lambda_{n1}\sum_{j=1}^{p_n}\|\bbeta_{j}\|_2.
\end{equation}
Let $\hat{\bbeta}$ be the optimiser for (\ref{grouplasso}), i.e.
$$\hat{\bbeta}=\argmin_{\bbeta\in\mathbb{R}^{p_nm_n}}L(\bbeta;\lambda_{n1}).$$
Let $\hat{T}=\text{supp}_B(\hat{\bbeta})$.

The objective function is the negative log-likelihood plus the group lasso penalization term, and the parameters are estimated as the minimisers of the objective function. Here the negative log likelihood function is averaged among the n observations to ensure that it is under the same scale as the penalization function.

With this group lasso type penalised log-likelihood, the selected model has the following properties.

\begin{thm}
\label{screening1}
Consider the model $\hat{T}$ obtained by minimizing (\ref{grouplasso}). Under Assumptions 1-4, for some constant $C$ and any diverging sequence $\gamma_n>0$, choose the regularization parameter
$$\lambda_{n1}^b=C\sqrt{m_n}\sqrt{\frac{\gamma_n+\log(p_nm_n)}{n}}$$
for bounded response (i.e., $|y_i|<c$), and the regularization parameter
$$\lambda_{n1}^{ub}=\sqrt{m_n}\gamma_n\sqrt{\frac{\log(p_nm_n)}{n}}$$
for unbounded sub-Gaussian response, as the sample size increases,
\begin{enumerate}
\item[(i)]
With probability tending to 1,
$$|\hat{T}|=O(s_n)$$
\item[(ii)]
With probability tending to 1,
$$\sum_{j=1}^{p_n}\left\|\bbeta^0_j-\hat{\bbeta}_j\right\|_2^2=O_P\left(s_n\gamma_2^{-2s_n}\frac{m_n^2\log(p_nm_n)}{n}\right)+O({\lambda^b_{n1}}^2m_n^2s_n\gamma_2^{-2s_n})+O(s_n^2m_n^{1-2d}\gamma_2^{-2s_n})$$
for the bounded response and
$$\sum_{j=1}^{p_n}\left\|\bbeta^0_j-\hat{\bbeta}_j\right\|_2^2=O_P\left(s_n\gamma_2^{-2s_n}\gamma_n\frac{m_n^2\log(p_nm_n)}{n}\right)+O({\lambda^{ub}_{n1}}^2m_n^2s_n\gamma_2^{-2s_n})+O(s_n^2m_n^{1-2d}\gamma_2^{-2s_n})$$
for any diverging sequence $\gamma_n$ and unbounded sub-Gaussian response.
\item[(iii)]
If $s_n\gamma_2^{-2s_n}m_n^2\log(p_nm_n)/n\replaced{<<c_{f,n}}{}$ ($s_n\gamma_2^{-2s_n}m_n^2\gamma_n\log(p_nm_n)/n\replaced{<<c_{f,n}}{}$ in the unbounded case), $\gamma_2^{-2s_n}m_n^2\lambda_{n1}^2s_n/m_n\replaced{<<c_{f,n}}{}$ and $s_n^2m_n^{1-2d}\gamma_2^{-2s_n}\replaced{<<c_{f,n}}{}$, with probability tending to 1, all nonzero coefficients are selected.
\end{enumerate}
\end{thm}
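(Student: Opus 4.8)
The plan is to treat \eqref{grouplasso} as a convex $M$-estimation problem and run the standard basic-inequality argument, adapted to the group structure, the diverging group size $m_n$, the spline approximation bias, and the nonlinear cumulant $b$. Since $b$ is convex (Assumption 3 gives $b''\geq c_1>0$) and the group penalty is convex, the objective $L(\cdot;\lambda_{n1})$ is convex, so I would characterise $\hat\bbeta$ through its Karush--Kuhn--Tucker conditions: writing $\Phi_{(j)}$ for the $n\times m_n$ design block of the $j$-th function, the block gradient is saturated on the active set, $\tfrac1n\Phi_{(j)}^T(\boldy-b'(\Phi\hat\bbeta))=\lambda_{n1}\hat\bbeta_j/\|\hat\bbeta_j\|_2$ for $j\in\hat{T}$, and has norm at most $\lambda_{n1}$ off it. Starting from $L(\hat\bbeta)\leq L(\bbeta^0)$ and inserting a Taylor expansion of the negative log-likelihood around $\bbeta^0$, I would obtain the basic inequality that bounds the curvature remainder $\tfrac1n(\hat\bbeta-\bbeta^0)^T\Phi^T\bSigma(\tilde\bbeta)\Phi(\hat\bbeta-\bbeta^0)$ by the sum of a stochastic score term $\tfrac1n|\bepsilon^T\Phi(\hat\bbeta-\bbeta^0)|$, a deterministic bias term coming from the approximation error \eqref{approxerr} (since the true mean $\bmu_y$ equals $b'(\Phi\bbeta^0)$ only up to spline error), and the penalty difference $\lambda_{n1}\sum_j(\|\bbeta_j^0\|_2-\|\hat\bbeta_j\|_2)$.

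The first technical ingredient is the probabilistic control of the score, which fixes the scale of $\lambda_{n1}$. Each block score $\tfrac1n\Phi_{(j)}^T\bepsilon$ is an $m_n$-vector with sub-Gaussian coordinates by \eqref{assbddtailprob} and bounded design columns by Assumption 1. A chi-square--type concentration bound for the block norm, followed by a union bound over the $p_n$ blocks, yields $\max_{1\leq j\leq p_n}\tfrac1n\|\Phi_{(j)}^T\bepsilon\|_2\preceq \sqrt{m_n}\sqrt{(\gamma_n+\log(p_nm_n))/n}$ with probability tending to one; the bounded-response case uses a Hoeffding bound while the unbounded sub-Gaussian case carries the extra diverging factor $\gamma_n$, which is precisely why the two prescribed choices $\lambda_{n1}^b$ and $\lambda_{n1}^{ub}$ differ. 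Choosing $\lambda_{n1}$ to dominate $2\max_j\tfrac1n\|\Phi_{(j)}^T\bepsilon\|_2$ then lets me absorb the score term into the penalty and force $\hat\bbeta-\bbeta^0$ into the restricted cone on which Assumption 2 is active.

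Next I would upgrade the curvature remainder to a quadratic lower bound in $\|\hat\bbeta-\bbeta^0\|_2^2$. The uniform bound $b''\geq c_1$ together with the restricted-eigenvalue condition of Assumption 2 gives the restricted strong convexity \eqref{restrictedstrongconvexity} with modulus of order $\gamma_0c_1\gamma_2^{2s_n}m_n^{-1}$. Combining this with the cone condition and solving the resulting quadratic inequality for $\|\hat\bbeta-\bbeta^0\|_2^2$ yields part (ii): inverting the modulus contributes the $\gamma_2^{-2s_n}m_n$ blow-up, the score bound contributes the $\log(p_nm_n)/n$ (resp.\ $\gamma_n\log(p_nm_n)/n$) term, the penalty contributes the $\lambda_{n1}^2$ term, and \eqref{approxerr} contributes the $s_n^2m_n^{1-2d}\gamma_2^{-2s_n}$ bias term. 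For the over-selection bound (i) I would use the saturated KKT identity: each active block forces $\|\tfrac1n\Phi_{(j)}^T(\boldy-b'(\Phi\hat\bbeta))\|_2=\lambda_{n1}$, so $|\hat{T}|\lambda_{n1}^2\leq\|\tfrac1n\Phi^T(\boldy-b'(\Phi\hat\bbeta))\|_2^2$, and bounding the right side through the score and the fitted-value error $\tfrac1n\|\Phi(\hat\bbeta-\bbeta^0)\|_2^2$ (controlled by the RE upper bound and part (ii)) caps $|\hat{T}|$ at order $s_n$. Part (iii) is then a direct corollary: by Assumption 4 the signal satisfies $\min_{j\in T}\|f_j\|_2\geq c_{f,n}$, which via the basis normalisation translates into a block-norm lower bound on $\bbeta_j^0$; the stated smallness conditions make each term of the part-(ii) rate negligible against $c_{f,n}$, so $\hat\bbeta_j\neq\boldzero$ for every $j\in T$ and hence $T\subseteq\hat{T}$.

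The step I expect to be the main obstacle is the localization that makes the nonlinear curvature argument rigorous. The restricted strong convexity \eqref{restrictedstrongconvexity} holds only on the neighbourhood $\mathcal{N}^{RE}_{\bbeta_0}$, yet establishing that $\hat\bbeta$ lies there is what the error bound is meant to prove --- a potential circularity. Resolving it requires exploiting convexity (restricting the comparison to the boundary of $\mathcal{N}^{RE}_{\bbeta_0}$ and using that a convex objective exceeding $L(\bbeta^0)$ on that boundary cannot attain its minimum outside), while simultaneously keeping $\Phi_i^T\bbeta$ in the range where Assumption 3 controls $b''$ and $b'''$ --- this is the role of the envelope $M_n=o(\sqrt n)$ in \eqref{assbddmean}. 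Carrying the diverging group size $m_n$, the exponential factor $\gamma_2^{-2s_n}$, and the spline bias through this localized curvature estimate at once, without losing the $O(s_n)$ sparsity control, is the delicate part of the argument.
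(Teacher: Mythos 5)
Your plan for parts (ii) and (iii) is essentially the paper's own argument: the paper also localizes by the convex combination $\bbeta^*=t\hat{\bbeta}+(1-t)\bbeta^0$ with $t=N_n/(N_n+\|\hat{\bbeta}-\bbeta^0\|_2)$, which is the constructive form of your boundary-of-the-neighbourhood convexity device; it then runs the basic inequality with a Taylor expansion, peels off the spline bias by Cauchy--Schwarz, applies the restricted strong convexity (\ref{restrictedstrongconvexity}), and calibrates $\lambda_{n1}$ through exactly the two score bounds you describe (Hoeffding for bounded responses in lemma \ref{highprob1}; a sub-Gaussian maximal inequality plus Markov, producing the diverging factor $\gamma_n$, in lemma \ref{highprob2}). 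Part (iii) via the spline norm equivalence $c_1m_n^{-1}\|\bbeta_j^0\|_2^2\preceq\|f_{nj}\|_2^2$ is likewise the paper's argument.

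The genuine gap is part (i), in two respects. First, your logical order is circular under these assumptions. Assumption 2 is a sub-block (cardinality-restricted) eigenvalue condition, valid only for $\bdelta$ with $\text{card}_B(\bdelta)$ of order $s_n$; it is not a cone-type RE condition. Taking $\lambda_{n1}$ to dominate the score only places $\hat{\bbeta}-\bbeta^0$ in the $\ell_{2,1}$ cone $\sum_{j\in T^c}\|\bdelta_j\|_2\leq 3\sum_{j\in T}\|\bdelta_j\|_2$, whose members need not be block-sparse, so Assumption 2 cannot be invoked from the cone alone. That is why the paper proves (i) \emph{before} (ii): its proof of (ii) works with $\bbeta^*_{\{T\cup\hat{T}\}}$ and a $\sqrt{s_n}$ factor that are only legitimate once $|T\cup\hat{T}|=O(s_n)$ is known. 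Your proposal derives (i) \emph{from} (ii), inverting a dependence that cannot be inverted. Second, even granting (ii), your counting step loses the theorem's conclusion quantitatively: after absorbing the score with $\lambda_{n1}\geq 2\max_j n^{-1}\|\Phi_{(j)}^T\bepsilon\|_2$ (note the full-norm inequality $|\hat{T}|\lambda_{n1}^2\leq\|n^{-1}\Phi^T(\boldy-b'(\Phi\hat{\bbeta}))\|_2^2$ as literally written aggregates all $p_n$ blocks and is vacuous), the usable version gives $|\hat{T}|\lambda_{n1}^2\preceq n^{-2}\|\Phi_{\hat{T}}^T(\bmu_y-\hat{\bmu})\|_2^2\preceq m_n^{-2}\|\hat{\bbeta}-\bbeta^0\|_2^2$, and inserting the rate from (ii) yields $|\hat{T}|=O(s_n\gamma_2^{-2s_n}m_n^{-1})$: the exponential factor $\gamma_2^{-2s_n}$ survives, so you do not recover $|\hat{T}|=O(s_n)$ once $s_n$ diverges. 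The paper avoids both problems with a self-contained Zhang--Huang-style argument: it partitions indices into the sets $A_1,\dots,A_6$, works with the quantities $V_{13},V_{14}$, $w_2$ and the projection $\boldP_1$ built from $\bSigma_1$, bounds $(q_1-s_n)^+\leq\|V_{14}\|_2^2/(c_1n\lambda_{n1}^2)$ on an event $\mathcal{E}$ controlling $|u^T\bepsilon^*|$, and verifies $\mathcal{E}$ by a union/entropy bound over all candidate subsets (the maxima $x_m^*$ and events $\Omega_{m_0}$), so that $|\hat{T}|\leq(c_5+1)s_n$ holds with a constant depending only on $c_1$ and before any estimation rate is available. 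Some version of that uniform-over-supports empirical process control is the missing idea your sketch needs for (i).
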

The proof of this theorem is given in supplementary materials.

\begin{remark}
\label{numberofselected}
To avoid estimability issues, here the constants $C$ are selected to be large enough such that the number of parameters to be estimated, i.e., the number of selected nonzero functions $|\hat{T}|$ multiplied by the number of basis function $m_n$ should be less than or equal to $n$. Moreover, considering the multicollinearity in the design matrix, the constants are chosen such that $m_n\times|\hat{T}|=o(n)$.
\end{remark}

\begin{remark}
The additional term $\gamma_n$ in the convergence rate is due to unboundedness nature of the response variable rather than due to non-linear link function.
\end{remark}

\begin{remark}
For the special case, linear (Gaussian) additive model, our results coincide with \cite{huang2010variable}. The difference is \replaced{that we study a fixed design with assumptions on the eigenvalues of the design matrix and they studied a random design with assumption on the distribution of the design matrix. We have put further assumption on the eigenvalue due to the divergence of $s_n$, the number of nonzero variables. In the special case that $s_n$ is fixed, our assumptions coincides with the assumptions in \mbox{\cite{huang2010variable}}.}{} Another difference is that we include a diverging term $\gamma_n$ that establishes the rate of convergence with probability converging to one.
\end{remark}

There are three terms in the convergence rate: the first term comes from the regression itself, the second term comes from shrinkage, and the third term comes from the spline approximation error.

\begin{remark}
\label{remark:screeningfunction}
\added{Let $\hat{f}_{nj}(x)=\sum_{k=1}^{m_n}\hat{\beta}_{jk}\phi_k(x)$. We can also state the results of the first selection step in terms of functions, which is a direct consequence of theorem \ref{screening1}. First, we have (i) $|\hat{T}|=O(s_n)$ with probability tending to 1, and (ii) if $s_nm_n\gamma_2^{-2s_n}\log(p_nm_n)/n<<c_{f,n}$ ($s_nm_n\gamma_2^{-2s_n}\gamma_n\log(p_nm_n)/n<<c_{f,n}$ in the unbounded case), $\lambda_{n1}^2s_nm_n\gamma_2^{-2s_n}<<c_{f,n}$ and $s_n^2m_n^{-2d}\gamma_2^{-2s_n}<<c_{f,n}$, with probability tending to 1, all nonzero coefficients are selected.

Moreover, by the properties of spline in \mbox{\cite{de2001practical}}, see for example \mbox{\cite{stone1986dimensionality}} and \mbox{\cite{huang2010variable}}, there exist positive constants $c_1$ and $c_2$ such that
\begin{equation}
\label{eq:splinetofunctionrate}
c_1m_n^{-1}\|\hat{\bbeta}_{nj}-\bbeta_{nj}\|_2^2\leq\|\hat{f}_{nj}-f_{nj}\|_2^2\leq c_2m_n^{-1}\|\hat{\bbeta}_{nj}-\bbeta_{nj}\|_2^2,
\end{equation}
we have
$$\sum_{j=1}^{p_n}\left\|f_j-\hat{f}_{nj}\right\|_2^2=O_P\left(s_n\gamma_2^{-2s_n}\frac{m_n\log(p_nm_n)}{n}\right)+O({\lambda^b_{n1}}^2m_ns_n\gamma_2^{-2s_n})+O(s_n^2m_n^{-2d}\gamma_2^{-2s_n})$$
for the bounded response case and
$$\sum_{j=1}^{p_n}\left\|f_j-\hat{f}_{nj}\right\|_2^2=O_P\left(s_n\gamma_2^{-2s_n}\gamma_n\frac{m_n\log(p_nm_n)}{n}\right)+O({\lambda^{ub}_{n1}}^2m_ns_n\gamma_2^{-2s_n})+O(s_n^2m_n^{-2d}\gamma_2^{-2s_n})$$
for the unbounded case, for any diverging sequence $\gamma_n$.}
\end{remark}

\begin{remark}
\label{satisfactory}
The theorem and its remark together tell us under Assumptions 1-4, by choosing proper $\gamma_n$, the functions selected by minimizing the first target function satisfy
$$T\subset\hat{T}\ \ \text{and}\ \ |\hat{T}|=O(s_n)$$
with probability converging to 1, i.e. we obtained screening consistency.
\end{remark}

\subsection{\textit{Second step: Post selection}}
After we have a ``good" initial estimator, we use the adaptive group lasso to recover the true model \citep{huang2010variable} and we are able to achieve selection consistency in probability under some mild assumptions. The adaptive group lasso idea is similar to adaptive lasso \citep{zou2006adaptive} which enjoys better theoretical properties than simple lasso. \cite{chatterjee2013rates} and \cite{das2017perturbation} studied rate of convergence and other asymptotic properties of the adaptive lasso estimator.
Define the objective function to be
\begin{equation}
\label{adagrouplasso}
L_a({\bbeta};\lambda_{n2})=-\frac{1}{n}\sum_{i=1}^n\left[y_i\left({\bbeta}^T\Phi_i\right)-b\left(\bbeta^T\Phi_i\right)\right]+\lambda_{n2}\sum_{j=1}^{p_n}w_{nj}\|\bbeta_{j}\|_2,
\end{equation}
where the weights  depend on the screening stage group lasso estimator
\begin{equation}
w_{nj}=+
\begin{cases}
\|\hat{\bbeta}_{j}\|_2^{-1},\ \ \ \ \text{if}\ \|\hat{\bbeta}_{j}\|_2>0\\
\infty,\ \ \ \ \ \ \ \ \ \ \text{if}\ \|\hat{\bbeta}_{j}\|_2=0
\end{cases}.
\end{equation}
Let $\hat{\bbeta}_{AGL}$ be the optimiser for (\ref{adagrouplasso}), i.e.
$$\hat{\bbeta}_{AGL}=\argmin_{\bbeta\in\mathbb{R}^{m_np}}L_a(\bbeta;\lambda_{n2}).$$
For the choice of weights, the first stage estimators need not to be necessarily  the solution of group lasso, but could be more general estimators that satisfy following assumptions.

\noindent
\textbf{Assumption 5}
\begin{itemize}
\item[]
The initial estimator $\hat{\bbeta}$ is $r_n$ consistent at zero, i.e.,
\begin{equation}
\label{rnconsistent}
r_n\max_{j\in T^c}\|\hat{\bbeta}_j-\bbeta^0_j\|_2=O_P(1),
\end{equation}
and there exists a constant $c_3$ such that
\begin{equation}
\label{initialtrue}
\prob\left(\min_{j\in T}\|\hat{\bbeta}\|_2\geq c_3b_{n1}\right)\rightarrow 1,
\end{equation}
where $b_{n1}=\min_{j\in T}\|\bbeta_j^0\|_2$.
\end{itemize}
\textbf{Assumption 6}
\begin{itemize}
\item[]
Let $s_n^*=p_n-s_n$ be the number of zero components. The tuning parameter $\lambda_{n2}$ satisfies
\begin{equation}
\label{lambdan2}
\begin{split}
&\frac{\sqrt{\log(s_n^*m_n)}}{n^{1/2}\lambda_{n2}r_n}+\frac{s_n}{\lambda_{n2}r_nm_n^{d+1/2}}+\frac{\lambda_{n2}r_n}{\gamma_n\sqrt{s_n/n}}=o(1)
\end{split}
\end{equation}
for any diverging sequence $\gamma_n$.
\end{itemize}
Assumption 5 gives the restrictions on the initial estimator. We don't require our initial estimator to be the group lasso estimator. Any initial estimator satisfying assumption 5 will be able to make the adaptive group lasso estimator consistently selects and estimates the true nonzero components. However, the rate of convergence of the adaptive group lasso estimator  depends on the rate of convergence of the initial estimator, which is assumed to be $r_n$ in assumption 5. Moreover, the initial estimator mustn't have a 0 estimation for the nonzero components, otherwise it will mislead the results in the proceeding step. Assumption 6 put restrictions on the tuning parameter $\lambda_{n2}$ in the adaptive group lasso step. The first two terms gives the upper bound for $\lambda_{n2}$ and the third term gives the lower bound. Only with ``appropriate" choice of $\lambda_{n2}$  we can have the selection consistency and estimation consistency.

\added{It worth noting that if we take the group lasso estimator as our initial estimator, assumptions 5 and 6 are automatically satisfied. Specifically, a trivial choice of $r_n$ would be
$$r_n=O_P^{-1}\left(\sqrt{s_n}\gamma_2^{-s_n}\frac{m_n\sqrt{\log(p_nm_n)}}{\sqrt{n}}\right)+O^{-1}({\lambda^b_{n1}}m_n\sqrt{s_n}\gamma_2^{-s_n})+O^{-1}(s_nm_n^{0.5-d}\gamma_2^{-s_n})$$
for the bounded response and
$$r_n=O_P^{-1}\left(\sqrt{s_n}\gamma_2^{-s_n}\sqrt{\gamma_n}\frac{m_n\sqrt{\log(p_nm_n)}}{\sqrt{n}}\right)+O^{-1}({\lambda^{ub}_{n1}}m_n\sqrt{s_n}\gamma_2^{-s_n})+O^{-1}(s_nm_n^{0.5-d}\gamma_2^{-s_n})$$
for the unbounded case and any diverging sequence $\gamma_n$, since we observe that for $j\in T^c$, $\hat{\bbeta}_j$ is either estimated as zero, or has a rate of convergence to $\bbeta_j$ bounded by the rate of convergence in theorem (\ref{screening1}). For equation (\ref{initialtrue}), observe that the rate of convergence of the group lasso estimator is higher order infinitesimal of the minimal signal strength of nonzero coefficients, thus taking $c_3=0.5$ is sufficient. In assumption 6, with our trivial choice of $r_n$, we are able to find a range of tuning parameters that satisfy equation (\ref{lambdan2}). Therefore, it's reasonable to take the group lasso estimator as an initial estimator for the adaptive group lasso.
}

Let the notation $\hat{\bbeta}_n\stackrel{0}{=}\bbeta^0$ denote that the sign of each $\hat{\bbeta}_j$ and $\bbeta^0_j$ are either both zero or both nonzero. Then we have the following asymptotic properties for the adaptive group lasso estimator.
\begin{thm}
\label{selection1}
Assume assumptions 1-6 hold, consider the estimator $\hat{\bbeta}_{AGL}$ by minimizing (\ref{adagrouplasso}), we have
\begin{enumerate}
\item[(i)]
\added{If $f_{c,n} >> \sqrt{s_n/n}$}, the adaptive group lasso consistently selects the true active predictors with probability converging to 1, i.e.,
\begin{equation}
\prob\left(\hat{\bbeta}_{AGL}\overset{0}{=}{\bbeta^0}\right)\rightarrow 1.
\end{equation}
\item[(ii)]
The rate of convergence of the adaptive group lasso estimator is given by
\begin{equation*}
\sum_{j\in T}\|\hat{\bbeta}_{AGLj}-\bbeta^0_{j}\|_2^2=O_p\left(s_n\gamma_2^{-2s_n}m_n^2\frac{\log(s_nm_n)}{n}\right)+O(s_n^2\gamma_2^{-2s_n}m_n^{1-2d})+O(\lambda_{n2}^2m_n^2s_n\gamma_2^{-2s_n})
\end{equation*}
for the bounded response case and
\begin{equation*}
\sum_{j\in T}\|\hat{\bbeta}_{AGLj}-\bbeta^0_{j}\|_2^2=O_p\left(\gamma_ns_n\gamma_2^{-2s_n}m_n^2\frac{\log(s_nm_n)}{n}\right)+O(s_n^2\gamma_2^{-2s_n}m_n^{1-2d})+O(\lambda_{n2}^2m_n^2s_n\gamma_2^{-2s_n})
\end{equation*}
for the unbounded response case, where $\gamma_n$ is any diverging sequence.
\end{enumerate}
\end{thm}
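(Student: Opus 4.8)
The plan is to follow the standard oracle-plus-KKT route for adaptive penalties. I would first introduce the \emph{oracle estimator} $\tilde{\bbeta}$, defined as the minimiser of the adaptive objective (\ref{adagrouplasso}) subject to the a priori restriction $\tilde{\bbeta}_{T^c}=\boldzero$; by construction $\tilde{\bbeta}$ has the correct block-support, so it suffices to (A) establish for $\tilde{\bbeta}$ the rate claimed in part (ii), and (B) show that $\tilde{\bbeta}$ satisfies the Karush--Kuhn--Tucker (KKT) conditions of the \emph{full}, unrestricted problem with probability tending to $1$. Since $L_a(\cdot;\lambda_{n2})$ is convex (the averaged negative log-likelihood is convex because $b''\ge c_1>0$ by Assumption~3, and the group penalty is convex), any KKT point is a global minimiser; invoking the uniqueness granted by the restricted strong convexity (\ref{restrictedstrongconvexity}) on the relevant block-sparse cone then forces $\hat{\bbeta}_{AGL}=\tilde{\bbeta}$, hence $\hat{\bbeta}_{AGL}\overset{0}{=}\bbeta^0$, which is part (i).

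For step (A), I would run a basic-inequality argument confined to the support $T$. Writing the score as $\tfrac1n\Phi^T(\boldy-\bmu(\Phi\bbeta))$ and Taylor-expanding the averaged negative log-likelihood around $\bbeta^0$ to second order, the quadratic remainder is bounded below by the restricted strong convexity (\ref{restrictedstrongconvexity}) of Remark~\ref{REremark}, which supplies the factor $\gamma_0c_1\gamma_2^{2s_n}m_n^{-1}$ and hence the $\gamma_2^{-2s_n}m_n$ inflation in the final rate. The linear (score) term at $\bbeta^0$ splits into a stochastic part $\tfrac1n\Phi_T^T\bepsilon$ (with $\Phi_T$ the columns of $\Phi$ indexed by $T$), controlled by the sub-Gaussian tail bound (\ref{assbddtailprob}) together with Assumption~1 to give the $O_P(s_n\gamma_2^{-2s_n}m_n^2\log(s_nm_n)/n)$ contribution, and a deterministic spline part governed by (\ref{approxerr}), yielding the $O(s_n^2\gamma_2^{-2s_n}m_n^{1-2d})$ term. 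The penalty contributes $O(\lambda_{n2}^2m_n^2 s_n\gamma_2^{-2s_n})$, where I use that for $j\in T$ the adaptive weights $w_{nj}=\|\hat{\bbeta}_j\|_2^{-1}$ are bounded away from $0$ and $\infty$ with probability $\to1$, which is exactly the content of (\ref{initialtrue}) in Assumption~5. Summing the three bounds gives part (ii); the unbounded-response case only replaces the tail argument and attaches the extra diverging factor $\gamma_n$.

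For step (B), the active conditions on $T$ require $\min_{j\in T}\|\tilde{\bbeta}_j\|_2>0$, which holds because the part-(ii) rate is asymptotically negligible relative to $b_{n1}=\min_{j\in T}\|\bbeta_j^0\|_2$; via the spline norm equivalence (\ref{eq:splinetofunctionrate}) this minimal signal is of order $\sqrt{m_n}\,c_{f,n}$, and the hypothesis $c_{f,n}\gg\sqrt{s_n/n}$ ensures it dominates the parametric noise floor so that no true block is zeroed out. The decisive part is the inactive conditions: for each $j\in T^c$ I must verify $\|\tfrac1n\Phi_{(j)}^T(\boldy-\bmu(\Phi\tilde{\bbeta}))\|_2\le \lambda_{n2}w_{nj}$, where $\Phi_{(j)}$ is the $j$-th block of columns. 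On the right, the $r_n$-consistency-at-zero (\ref{rnconsistent}) makes $w_{nj}=\|\hat{\bbeta}_j\|_2^{-1}$ of order $r_n$, so the threshold to beat is $\asymp\lambda_{n2}r_n$. On the left I decompose $\boldy-\bmu(\Phi\tilde{\bbeta})=\bepsilon+(\bmu_y-\bmu(\Phi\bbeta^0))+(\bmu(\Phi\bbeta^0)-\bmu(\Phi\tilde{\bbeta}))$ into noise, spline bias, and estimation bias; a union bound over the $s_n^*m_n$ coordinates of $T^c$ (again via (\ref{assbddtailprob}) and Assumption~1) controls the first, the Lipschitz property $|b''|\le c_1^{-1}$ with (\ref{approxerr}) the second, and the part-(ii) rate the third. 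After division by $\lambda_{n2}r_n$ these three pieces match precisely the three summands of (\ref{lambdan2}) in Assumption~6, whose $o(1)$ decay closes the inequality uniformly over $T^c$.

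The main obstacle I anticipate is this uniform inactive-block control in step (B): the weights $w_{nj}$ are themselves random through the first-stage estimator, and the fitted-mean bias term $\bmu(\Phi\bbeta^0)-\bmu(\Phi\tilde{\bbeta})$ couples the $T^c$ analysis back to the oracle rate on $T$, so a self-consistent bookkeeping of the estimation error is needed. The union-bound cost $\log(s_n^*m_n)$ must be fully absorbed by $\lambda_{n2}r_n$ while the restricted-eigenvalue factor $\gamma_2^{-2s_n}$ and the diverging $s_n$ steadily erode the available slack; Assumption~6 is engineered exactly so that its lower bounds on $\lambda_{n2}$ defeat the score on $T^c$ while its upper bound keeps the shrinkage bias on $T$ from destroying the signal.
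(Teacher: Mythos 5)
Your proposal is correct and takes essentially the same route as the paper's proof: the paper likewise defines the restricted estimator $\hat{\bbeta}_{NZ}$ (your oracle $\tilde{\bbeta}$), verifies the full KKT conditions on $T^c$ via exactly your three-way decomposition into noise, spline bias, and estimation bias (its terms $P_1,P_2,P_3$, matched term-by-term to the three summands of Assumption 6), rules out zeroed true blocks via the $\sqrt{m_n}\,c_{f,n}$ signal floor, and derives part (ii) from a basic inequality combined with the restricted strong convexity (\ref{restrictedstrongconvexity}) and the sub-Gaussian score bounds. The only cosmetic differences are ordering (the paper establishes selection before the rate, localizing the estimator through a convex-combination device) and your slightly more explicit uniqueness step via strong convexity on the block-sparse cone.
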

The proof of this theorem is given in supplementary materials. \added{It's interesting to compare the adaptive group lasso results with \cite{wang2019adaptive}, who studied the asymptotic properties of the adaptive group lasso for generalized linear models. It worth noting that we considered a more general case by allowing the group size to diverge with $n$, and the eigenvalue to be bounded by sequences that depending on $n$ on a broader domain. In the special case that corresponds to their assumptions, our results (Theorem \ref{selection1}) coincides with their results.}

Similar to the group lasso estimator, we also derive the results for the non-parametric function estimations, stated in the following remark.

\begin{remark}
\added{Let $\hat{f}_{AGLj}(x)=\Phi_{j}(x)\hat{\bbeta}_{AGLj}$. We can also state the results of the first selection step in terms of functions, which is a direct consequence of theorem \ref{screening1}. First, we have the true nonzero subset is recovered with probability tending to 1. Moreover, by the same properties of spline as in Remark \ref{remark:screeningfunction}, we have
$$\sum_{j\in T}\|\hat{f}_{AGLj}-f_j\|_2^2=O_p\left(s_n\gamma_2^{-2s_n}m_n\frac{\log(s_nm_n)}{n}\right)+O(s_n^2\gamma_2^{-2s_n}m_n^{-2d})+O(\lambda_{n2}^2m_ns_n\gamma_2^{-2s_n})$$
for the bounded response case and
$$\sum_{j\in T}\|\hat{f}_{AGLj}-f_j\|_2^2=O_p\left(\gamma_ns_n\gamma_2^{-2s_n}m_n\frac{\log(s_nm_n)}{n}\right)+O(s_n^2\gamma_2^{-2s_n}m_n^{-2d})+O(\lambda_{n2}^2m_ns_n\gamma_2^{-2s_n})$$
for the unbounded case, for any diverging sequence $\gamma_n$.}
\end{remark}

\added{The convergence rate for the group lasso estimator is
$$\sum_{j=1}^{p_n}\left\|f_j-\hat{f}_{nj}\right\|_2^2=O_P\left(s_n\gamma_2^{-2s_n}\frac{m_n\log(p_nm_n)}{n}\right)+O({\lambda^b_{n1}}^2m_ns_n\gamma_2^{-2s_n})+O(s_n^2m_n^{-2d}\gamma_2^{-2s_n}),$$
while for the adaptive group lasso estimator is
$$\sum_{j\in T}\|\hat{f}_{AGLj}-f_j\|_2^2=O_p\left(s_n\gamma_2^{-2s_n}m_n\frac{\log(s_nm_n)}{n}\right)+O(\lambda_{n2}^2m_ns_n\gamma_2^{-2s_n})+O(s_n^2\gamma_2^{-2s_n}m_n^{-2d})$$
The regression term differs by the size of candidate set. The price we pay by not knowing the true set is $\log(pm_n)$ in the group lasso step, and becomes $\log(s_nm_n)$ in the adaptive group lasso step, since the initial estimator have recovered a super set of the true set with cardinality $O(s_n)$. The penalty term's difference appears on the tuning parameter, where $\lambda_{n2}$ is of a smaller order than $\lambda_{n1}$ with a multiplier of $r_n^{-1}$. According to our choice of $\lambda_{n2}$, it has a trivial upper bound which is of order $O(\lambda_{n1}^2)$. Therefore, the tuning parameter part in the penalty convergence rate term becomes quadratic. The approximation error term is not affected by the adaptive group lasso step.
}

The adaptive group lasso is important in two reasons: first, with probability tending to 1, this is enable to select the true nonzero components accurately, which is not always the case in group lasso; second, the rate of convergence of the adaptive group lasso estimator is faster than the rate of convergence of the group lasso estimator. The difference in the leading terms are in the order of $r_n^{-1}$.  This makes the adaptive group lasso estimator to achieve a better error with the same sample size, or the same error with a smaller sample size.

The theorem and remark in this section ensure that under mild assumptions, we are able to recover the true model with probability tending to 1 and achieve a rate of convergence better than the initial estimator. Particularly, if the restrictions of $n, p_n, m_n$ and $s_n$ in the previous section satisfy, the group lasso estimator is actually a good initial estimator. Therefore, this two step procedure actually is a complete procedure that gives us a way to do this model selection and estimation on any high-dimensional generalised additive model. However, the procedure is not practically complete without proper selection of the tuning parameter $\lambda$. Therefor, we propose a theoretically validated tuning parameter selection in the next section.

\section{Tuning parameter selection}
\label{Sec:Tuning}
One important issue in penalised methods is choosing a proper tuning parameter. It is known that the selection results are sensitive to the choice of tuning parameters. The theoretical results only provide the order of the tuning parameter, which is not very useful in practice. The reason is that the order of a sequence describes the limit properties when $n$ goes to infinity. In reality, our $n$ is a fixed number, so we must have a practical instruction on selecting the tuning parameter.

Despite its importance, there isn't much development for tuning parameter selection in the high dimensional literature. The conventional tuning parameter selection criteria tend to select too many predictors, thus is hard to reach selection consistency. Another reason, especially in group lasso problems, is that the solution path of group lasso is piecewise nonlinear, which makes the testing procedure even harder. Here, we propose the generalised information criterion (GIC) \citep{zhang2010regularization,fan2013tuning} that supports consistent model selection.

Let $\hat{\bbeta}^{\lambda}$ be the adaptive group lasso solution with tuning parameter $\lambda$. The generalised information criterion is defined as
\begin{equation}
GIC(\lambda)=\frac{1}{n}\{D(\hat{\mu}_{\lambda};\boldY)+a_n|\hat{T}\added{_{\lambda}}|\},
\end{equation}
where $D(\hat{\mu}_{\lambda};\boldY)=2\{l(\boldY;\boldY)-l(\hat{\mu}_{\lambda};\boldY)\}$\replaced{. Here the $l(\bmu;\boldY)$ is the log-likelihood function in equation (\ref{expfamilydensity}) expressed as a function of the expectation $\bmu$ and $\boldY$. $l(\boldY; \boldY)$ represents the saturated model with $\bmu=\boldY$, and $\hat{\mu}_{\lambda}=b'(\sum_{i=1}^{p_n}\hat{f}^{\lambda}_j(x_{ij}))=b'(\phi\hat{\bbeta}^{\lambda})$ is our estimated expectation when the tuning parameter is $\lambda$.
}{} The hyperparameter $a_n$ is to penalise the size of the model. Using GIC, under proper choice of $a_n$, we are able to select all active predictors consistently.

 The importance of  the following consistency theorem is that the result in the previous section guarantees that with probability converging to 1, there exists a $\lambda_{n0}$ that will be able to identify the true model. Therefore, a good choice of $a_n$ will be able to identify the true model with probability converging to 1. \added{For a support $A\subset\{1,...,p\}$ such that $|A|\leq q$, where $q\geq s_n$ and $q=o(n)$, }let
\begin{equation}
I(\bbeta(A))=E\left[\log(f^*/g_{A})\right]=\sum_{i=1}^n\left[b'(\Phi_i\bbeta^0)\Phi_i^T(\bbeta^0-\bbeta(A))-b(\Phi_i^T\bbeta^0)+b(\Phi_i^T\bbeta(A))\right]
\end{equation}
be the Kullback-Leibler (KL) divergence between the true model and the selected model, where $f^*$ is the density of the true model, and $g_{A}$ is the density of the model with population parameter $\bbeta(A)$. Let $\bbeta^*(A)$ be the model with the smallest KL divergence over all models with support $A$, and let
$$\delta_n=\inf_{\substack{A\not\supset T\\\replaced{|A|\leq q}{}}}\frac{1}{n}I(\bbeta^*(A)).$$
\added{Here we note that if $T\subset A$, the minimizer is automatically $\bbeta^0$ and thus the KL-divergence is zero. For an underfitted models $T\not\subset A$, $\delta_n$ describes how easily one can distinguish the models from the true model by measuring the minimum distance from the true model to the ``best estimated models''. Later in the theorems we will need to assume lower bounds on $\delta_n$ so that we will be able to reach our consistency results. The following theorem proves that GIC works under mild conditions.}
\begin{thm}
\label{GICconsistency}
Under assumptions 1-6, suppose that $\delta_nq^{-1}R_n^{-1}\rightarrow\infty$, $n\delta_ns_n^{-1}a_n^{-1}\rightarrow\infty$ and $a_n\psi^{-1}\rightarrow\infty$, where $R_n$ and $\psi_n$ are defined in lemma \ref{gicunderfitted} and lemma \ref{gicoverfitted}, we have, as $n\rightarrow \infty$,
\begin{equation}
\prob\{\inf_{\lambda\in\Omega_-\cup\Omega_+}GIC_{a_n}(\lambda)>GIC_{a_n}(\lambda_{n0})\}\rightarrow 1,
\end{equation}
where
$$\Omega_-=\{\lambda\in[\lambda_{min},\lambda_{max}]:T_{\lambda}\not\supset T\},$$
$$\Omega_+=\{\lambda\in[\lambda_{min},\lambda_{max}]:T_{\lambda}\supset T\ \text{and}\ T_{\lambda}\neq T\},$$
where
$T_{\lambda}$ is the set of predictors selected by tuning parameter $\lambda$. \replaced{$\lambda_{min}$ can be chosen as the smallest $\lambda$ such that the selected model has size $q$ that satisfies the theorem assumption, and $\lambda_{max}$ simply corresponds to a model with no variables.}{}
\end{thm}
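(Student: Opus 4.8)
The plan is to follow the standard two-regime decomposition for information-criterion consistency, treating the underfitted family $\Omega_-$ and the overfitted family $\Omega_+$ separately and showing in each case that $GIC_{a_n}(\lambda)-GIC_{a_n}(\lambda_{n0})>0$ uniformly with probability tending to one. The first reduction I would make is to replace the continuum $\lambda\in[\lambda_{min},\lambda_{max}]$ by the finite collection of distinct block-supports $T_\lambda$ realized along the solution path; since $GIC$ depends on $\lambda$ only through $\hat{\mu}_\lambda$ and $|\hat{T}_\lambda|$, and the number of realized supports is at most the number of subsets of size $\leq q$, this turns the two infima into maxima over finitely many models, at the cost of a union bound whose scale is governed by $q\log(p_nm_n)$. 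Writing $GIC_{a_n}(\lambda)-GIC_{a_n}(\lambda_{n0})=\frac{1}{n}[D(\hat{\mu}_\lambda)-D(\hat{\mu}_{\lambda_{n0}})]+\frac{a_n}{n}(|\hat{T}_\lambda|-s_n)$, the whole argument reduces to balancing a deviance term against a complexity term in each regime, using that $T_{\lambda_{n0}}=T$ so $|\hat{T}_{\lambda_{n0}}|=s_n$ with probability converging to one.

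For the underfitted regime $\Omega_-$ I would lower-bound the deviance gap. Because every $A\not\supset T$ omits at least one truly active block, the population log-likelihood at the best in-support fit $\bbeta^*(A)$ lies below that at $\bbeta^0$ by exactly $I(\bbeta^*(A))\geq n\delta_n$. Passing from this population statement to the empirical deviance $D(\hat{\mu}_\lambda)$ costs a stochastic fluctuation that Lemma \ref{gicunderfitted} controls uniformly over all supports of size $\leq q$ by $R_n$, so that $D(\hat{\mu}_\lambda)-D(\hat{\mu}_{\lambda_{n0}})\geq 2n\delta_n-O_p(R_n)$. The complexity difference is at worst $-a_ns_n$, since an underfitted model may carry fewer blocks than $T$; it therefore suffices that $n\delta_n$ dominate both $qR_n$ and $a_ns_n$, which are precisely the hypotheses $\delta_nq^{-1}R_n^{-1}\to\infty$ and $n\delta_ns_n^{-1}a_n^{-1}\to\infty$, closing this case.

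For the overfitted regime $\Omega_+$ the roles invert: now $\hat{T}_\lambda\supsetneq T$, so the deviance can only decrease relative to the true model and the gain is purely stochastic. Lemma \ref{gicoverfitted} bounds this deviance reduction per redundant block by $\psi_n$, giving $D(\hat{\mu}_{\lambda_{n0}})-D(\hat{\mu}_\lambda)\leq O_p(\psi_n)(|\hat{T}_\lambda|-s_n)$, while the complexity penalty increases by exactly $a_n(|\hat{T}_\lambda|-s_n)\geq a_n$. Factoring out $(|\hat{T}_\lambda|-s_n)$, the sign of the $GIC$ difference is governed by $a_n-O_p(\psi_n)$, which is positive under $a_n\psi_n^{-1}\to\infty$. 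Combining the two regimes by a final union bound, and noting that both failure probabilities vanish, yields the claim.

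The step I expect to be the main obstacle is the uniform stochastic control underlying Lemmas \ref{gicunderfitted} and \ref{gicoverfitted}: because $p_n$ may grow nearly exponentially in $n$, there are super-polynomially many candidate supports, so the fluctuations of the empirical deviance around its Kullback--Leibler mean must be bounded simultaneously over all of them. Establishing $R_n$ and $\psi_n$ therefore rests on maximal concentration inequalities for the sub-Gaussian errors of Assumption 3, combined with the restricted-eigenvalue lower bound of Assumption 2 and the spline-approximation rate in (\ref{approxerr}) to convert parameter-space fluctuations into deviance fluctuations; keeping the $\gamma_2^{-2s_n}$ and $m_n$ factors from eroding the margins $n\delta_n$ and $a_n$ is the delicate bookkeeping that the three rate hypotheses are designed to absorb.
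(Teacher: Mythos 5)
Your two-regime skeleton is the same as the paper's (underfitted: KL margin $\delta_n$ against the uniform fluctuation $R_n$ and the penalty deficit $a_ns_n/n$; overfitted: penalty gain $a_n/n$ per redundant block against the stochastic deviance gain $\psi_n$), and your rate bookkeeping essentially reproduces the three hypotheses. But there is a genuine gap in how you invoke the two lemmas: Lemmas \ref{gicunderfitted} and \ref{gicoverfitted} are statements about the \emph{restricted maximum likelihood} fits $\hat{\bmu}^*_A=b'(\Phi\hat{\bbeta}^*(A))$, not about the penalized estimators $\hat{\mu}_\lambda$ that actually enter $GIC_{a_n}(\lambda)$, and you apply them to the latter without justification. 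One direction is free: for any $\lambda$ the MLE on the support $T_\lambda$ has deviance no larger than the adaptive group lasso fit with the same support, so $GIC_{a_n}(\lambda)\geq GIC^*_{a_n}(T_\lambda)$, which also makes your reduction of the continuum $[\lambda_{min},\lambda_{max}]$ to path-realized supports unnecessary (the lemmas are already uniform over all supports of size at most $q$). The other direction is not free: at the reference value $\lambda_{n0}$ you must show that the shrinkage bias of $\hat{\bbeta}(\lambda_{n0})$ relative to $\hat{\bbeta}^*(T)$ does not erode the margins, i.e.\ that $GIC_{a_n}(\lambda_{n0})-GIC^*_{a_n}(T)$ is negligible. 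Your argument only uses $T_{\lambda_{n0}}=T$ with probability tending to one, which fixes the complexity term but says nothing about $D(\hat{\mu}_{\lambda_{n0}};\boldY)-D(\hat{\bmu}^*_T;\boldY)$.

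The paper closes exactly this hole with a proxy argument you are missing: comparing the KKT condition of the penalized problem restricted to $T$ with the score equation of the restricted MLE and Taylor-expanding gives $\|\hat{\bbeta}(\lambda_{n0})-\hat{\bbeta}^*(T)\|_2=O(\lambda_{n0}\sqrt{s_n})$, whence $GIC_{a_n}(\lambda_{n0})-GIC^*_{a_n}(T)=o_p(\lambda_{n0}\sqrt{s_na_n})$; see (\ref{GICproxybound})--(\ref{GICapprox}). This matters most in your overfitted regime, where the margin is only $a_n/n$ per redundant block: without the proxy bound, the excess deviance incurred by penalization at $\lambda_{n0}$ --- of order $n\lambda_{n0}^2 s_n$ before normalization --- could swamp $a_n$, so your factored inequality $a_n-O_p(\psi_n)>0$ does not by itself yield the claim. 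A smaller slip: Lemma \ref{gicunderfitted} is normalized per block, so the uniform deviance fluctuation over $|A|\leq q$ scales as $qR_n$ after dividing by $n$, and the requirement is $\delta_n\gg qR_n$ (i.e.\ $\delta_nq^{-1}R_n^{-1}\to\infty$), not ``$n\delta_n$ dominates $qR_n$''; your cited hypothesis is the right one, but the two normalizations are conflated in your text.
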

The proof of this theorem is given in supplementary materials. In practice, a choice of $a_n$ is proposed to be $m_n\log(\log(n))\log(p_n)$. We have
\begin{cor}
Under assumptions 1-6, with choice of $a_n=m_n\log(\log(n))\log(p_n)$, we have
$$\prob\{\inf_{\lambda\in\Omega_-\cup\Omega_+}GIC_{a_n}(\lambda)>GIC_{a_n}(\lambda_{n0})\}\rightarrow 1.$$
\end{cor}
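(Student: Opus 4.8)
The plan is to obtain the corollary directly from Theorem \ref{GICconsistency} by checking that the explicit sequence $a_n=m_n\log(\log(n))\log(p_n)$ satisfies the three hypotheses required there. Two of these hypotheses involve $a_n$ and one does not: the separation condition $\delta_nq^{-1}R_n^{-1}\to\infty$ is a statement purely about the distance between the true model and the best underfitted models, and under Assumptions 1--6 together with the scaling of $s_n,m_n,p_n$ maintained throughout the paper it is already the hypothesis carried by the theorem. Hence the real content of the corollary is to verify the two $a_n$-dependent inequalities: the overfitting bound $a_n\psi_n^{-1}\to\infty$ and the underfitting bound $n\delta_ns_n^{-1}a_n^{-1}\to\infty$.

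For the overfitting side I would substitute the order of $\psi_n$ furnished by Lemma \ref{gicoverfitted}. That lemma controls the maximal per-parameter reduction in deviance achievable by appending spurious groups; by the sub-Gaussian tail bound in Assumption 3 together with a union bound over the $O(p_n)$ candidate groups, each of dimension $m_n$, I expect $\psi_n$ to be of order $m_n\log(p_n)$ up to constants. Dividing then gives
\begin{equation}
\frac{a_n}{\psi_n}\asymp\frac{m_n\log(\log(n))\log(p_n)}{m_n\log(p_n)}=\log(\log(n))\to\infty,
\end{equation}
so that the factor $\log(\log(n))$ is precisely the minimal diverging slack forcing $a_n$ to dominate $\psi_n$. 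This is the standard Fan--Tang device, here rescaled by $m_n$ to account for grouped selection.

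For the underfitting side, plugging the same $a_n$ into $n\delta_ns_n^{-1}a_n^{-1}$ reduces the requirement to
\begin{equation}
\frac{n\delta_n}{s_nm_n\log(\log(n))\log(p_n)}\to\infty,
\end{equation}
i.e.\ a lower bound on the Kullback--Leibler separation $\delta_n$. I would establish this using the restricted strong convexity of Remark \ref{REremark} together with Assumption 4: since any underfitted support omits at least one true group with $\|\bbeta^0_j\|_2^2\asymp m_nc_{f,n}^2$, the quadratic lower bound on the deviance shows $\delta_n$ is bounded below by a quantity of order $\gamma_2^{2s_n}c_{f,n}^2$ (the restricted strong convexity constant times the squared minimal signal). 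The scaling conditions already imposed for selection consistency then make this lower bound dominate $s_nm_n\log(\log(n))\log(p_n)/n$, so condition two holds as well.

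The main obstacle is pinning down the exact order of $\psi_n$ (and, to a lesser extent, $R_n$) from the overfitting lemma so that the cancellation in the first display is legitimate, and then confirming that the window $\psi_n\ll a_n\ll n\delta_n/s_n$ is genuinely nonempty. The delicate point is the simultaneous appearance of $m_n$ on both sides: it cancels cleanly in the overfitting ratio but must be absorbed into the signal condition on $\delta_n$ on the underfitting side, so the argument closes only if $\delta_n$ carries enough strength relative to $s_nm_n\log(p_n)/n$. Once this compatibility is verified, the corollary follows immediately by invoking Theorem \ref{GICconsistency}.
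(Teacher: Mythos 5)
Your overall strategy is the right one and is exactly what the paper (implicitly) does: the corollary carries no separate proof, it is obtained by substituting $a_n=m_n\log(\log(n))\log(p_n)$ into the hypotheses of Theorem \ref{GICconsistency}. On the overfitting side, however, your claimed order $\psi_n\asymp m_n\log(p_n)$ is not what Lemma \ref{gicoverfitted} delivers: it gives $\psi_n=m_n\sqrt{\gamma_n\log(p_n)}$ for bounded responses and $\psi_n=m_n\gamma_n\log(p_n)$ for the Gaussian/unbounded cases, where $\gamma_n$ is a diverging sequence. In the unbounded case your ratio becomes $a_n/\psi_n=\log(\log(n))/\gamma_n$, which diverges only if one exercises the freedom to take $\gamma_n$ diverging strictly more slowly than $\log(\log(n))$. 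This is permissible, since $\gamma_n$ is the analyst's choice of a slowly diverging sequence, but the cancellation in your first display is not automatic and must be stated; as written, the $\gamma_n$ factor is silently dropped.

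The genuine gap is on the underfitting side. You assert that restricted strong convexity (Remark \ref{REremark}) plus Assumption 4 gives $\delta_n\gtrsim\gamma_2^{2s_n}c_{f,n}^2$ and that ``the scaling conditions already imposed for selection consistency'' then force $n\delta_n/(s_nm_n\log(\log(n))\log(p_n))\to\infty$. The lower bound itself is plausible (an omitted group $j\in T$ forces $\|\bbeta^*(A)-\bbeta^0\|_2^2\geq\|\bbeta^0_j\|_2^2\gtrsim m_nc_{f,n}^2$, and the restricted eigenvalue contributes $\gamma_0c_1\gamma_2^{2s_n}m_n^{-1}$), but the resulting condition $n\gamma_2^{2s_n}c_{f,n}^2\gg s_nm_n\log(\log(n))\log(p_n)$ does \emph{not} follow from Assumptions 1--6 or from the selection-consistency requirement $c_{f,n}\gg\sqrt{s_n/n}$: since $0<\gamma_2<1/2$, the factor $\gamma_2^{2s_n}$ decays geometrically whenever $s_n\to\infty$, so $n\gamma_2^{2s_n}c_{f,n}^2\gg\gamma_2^{2s_n}s_n$ is far weaker than what is needed, and the window $\psi_n\ll a_n\ll n\delta_n/s_n$ is not guaranteed to be nonempty under the paper's stated conditions. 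The paper handles this by keeping $n\delta_ns_n^{-1}a_n^{-1}\to\infty$ (along with $\delta_nq^{-1}R_n^{-1}\to\infty$) as explicit hypotheses of the theorem, which the corollary tacitly inherits; your proposal instead attempts to discharge the $\delta_n$ hypothesis and overclaims. To close the argument you should either retain the $\delta_n$ conditions as assumptions, as the paper does, or impose an explicit signal-strength condition such as $n\gamma_2^{2s_n}c_{f,n}^2/(s_nm_n\log(\log(n))\log(p_n))\to\infty$, which is strictly stronger than anything appearing in Assumptions 1--6.
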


In our two step procedure, there are two tuning parameters to be selected: $\lambda_{n1}$ in the group lasso step and $\lambda_{n2}$ in the adaptive group lasso step. The choice of $\lambda_{n2}$ is of more importance, since $\lambda_{n1}$ only serve as the parameter in screening. As long as we have a screening step that satisfies (\ref{firstrequirement}), we are ready for the adaptive group lasso step. To be simple, we propose to use GIC for selecting both $\lambda_{n1}$ and $\lambda_{n2}$. As a result of the previous theorem, we are able to reach selection consistency.

\section{Numerical Properties}
\label{Sec:Numerical}
In this section we conduct various empirical exercises to illustrate our theoretically guided method in practice. \added{To optimize the group lasso problems, we apply the algorithm named groupwise-majorization-descent (GMD) by \cite{yang2015fast}, which approximates the convex log-likelihood part with second order Taylor expansion and solve it with a quadratic function's closed form solution, wrapped in a block coordinate descent algorithm. We made the algorithm in GAM available as a python class, which is accessible at https://github.com/KaixuYang/PenalizedGAM.

As smoothness is a  concern in practical GAM computations, we bring the P-spline \citep{eilers1996flexible} penalty into the model while implementing the model numerically. The P-spline penalty controls the difference between coefficients of consecutive basis functions, and thus yields smoother spline functions.

Specifically, let $l(\bbeta; \boldX, \boldy)$ be the loss function in section \ref{Sec:Methodology}, either the group lasso loss function or the adaptive group lasso loss function. The loss function with smoothness penalty is defined as
\begin{equation}
\label{equ:smoothloss}
l_s(\bbeta;\boldX,\boldy) = l(\bbeta; \boldX, \boldy) + \lambda_s\sum_{j=1}^p\bbeta_j^T\boldD\bbeta_j,
\end{equation}
where
$$\boldD=
\begin{bmatrix}
1&-1&0&.\\
-1&2&-1&.\\
0&-1&2&.\\
.&.&.&.
\end{bmatrix}
$$
A slightly modified soft-thresholding function is used to handle the combination of group lasso penalty and the smoothness penalty.
}

\subsection{\textit{Simulated Examples}}
Here we undertake extensive simulation study to see the performance of our proposed two step selection and estimation approach. We investigate the performance of both uncorrelated and correlated covariates and we consider different sample sizes and varying number of predictors in each case.

\added{In this section, we  consider three different types of generalized models: the logistic regression (Bernoulli distribution), the Poisson regression (Poisson distribution) and the Gamma regression (Gamma distribution).} \added{Through the whole subsection, we choose $l=4$ which implies a cubic B-spline. We choose $m_n=9$ for most cases unless stated otherwise. The choice of $l$ and $m_n$ implies that there are $m_n-l=5$ inner knots, which are evenly placed over the empirical percentiles of the training data.} \added{In this subsection, we  compare the performance of the two-step approach with the Lasso \citep{tibshirani1996regression}, the GAMBoost \mbox{\citep{tutz2006generalized}} and the GAMSEL \citep{chouldechova2015generalized}. We implement our two-step approach with our own package mentioned above. The Lasso is implemented with the scikit-learn package in python. The GAMBoost and GAMSEL methods are implemented using their packages in R.} \added{In the group lasso step, we choose the tuning parameter corresponding to $n_g$ variables, where $n_g$ is the largest number such that $n_g\times m_n <= n$. This choice prevents estimation issues when we have too many parameters. The GIC procedure is applied in the adaptive group lasso step to select tuning parameters. In the GIC procedure, the tuning parameter selection criterion is defined as
\begin{equation}
\label{simgic}
GIC(\lambda)=\frac{1}{n}\{D(\hat{\mu}_{\lambda};\boldY)+a_n|\hat{T}|\}.
\end{equation}
From our results in the previous section, we choose $a_n=(\log\log n)(\log p)m_n$.}
\subsubsection{\added{Logistic Regression}}
\added{First, }we consider the logistic regression
\begin{equation}
\label{simlog}
y_i\sim Bernoulli(\theta_i),\ i=1,...,n,
\end{equation}
where $\theta_i=logit^{-1}[\alpha+\sum_{j=1}^pf_j(x_{ij})]$ and $x_{ij}$ is the $(i,j)-th$ element of the design matrix $X$.  
\begin{exmp}
\label{uncorex}
We first consider the logistic additive model on an independent design matrix case, where each predictor in X is independent of other predictors. Each element of the design matrix is generated from a $Unif(-1,1)$ distribution. \added{We consider 3 different cases with all $n$, $p$ and $s$ increasing, which coincides with our theory in section \ref{Sec:Methodology}. Specifically, the three cases are: $n=100$, $p=200$ and $s=3$; $n=200$, $p=500$ and $s=4$; $n=300$, $p=3000$ and $s=5$. A testing sample of size $1000$ is generated independently to measure the performance. For all three cases, we have nonzero functions $f_1(x)=5\sin(3x)$, $f_2(x)=-4x^4+9.33x^3+5x^2-8.33x$ and $f_3(x)=x(1-x^2)\exp(3x)-4$. These three general terms include a periodic term, a polynomial term and an exponential term. The last two cases have one more function of $f_4(x)=4x$, a linear term. Finally, the last case has an addition $f_5(x)=4\sin(-5\log(\sqrt{x+3})$, a complicated composite function. Without loss of generality, the first $s$ functions are set to be nonzero.} The constants in the functions are to ensure similar signal strength and smoothness. The other functions $f_{s+1}(x)=...=f_{p}(x)=0$.

\added{Our results focus on NV, the average number of variables being selected; TPR, the true positive rate (what percent of the truly nonzero variables are selected); FPR, the false positive rate (where percent of the zero variables are selected); and PE, the prediction error. In the logistic regression problem, our metric to measure the prediction error will be the misclassification rate, which is also the measurement in \cite{chouldechova2015generalized}. The simulation results are averaged over 100 repetitions.}

The simulation results are summarised in table \ref{uncorextable} on page \pageref{uncorextable}. \added{Compared with the classical method Lasso and the existing GAM methods GAMSEL and GAMBoost, the two-step approach performs the best in terms of both variable selection and estimation in the high-dimensional set up. The two-step approach performs significantly better in prediction errors. In variable selection, the two-step approach selects the closest number of variables to the ground truth, while keeping the TPR high and FPR low. The existing GAM algorithms have similar TPR but includes too many false positives. The existing GAM algorithms were not intended for very high-dimensional data, and thus fails to handle the variable selection and prediction at the same time. As mentioned in \cite{fan2001variable}, the tuning parameter in the Lasso for consistent variable selection is not the same as the tuning parameter for best prediction. We can see this may also be true for the group lasso case, since the estimated nonzero coefficients in the group lasso step are  over-penalized. This also proves that an adaptive group lasso step is important, in terms of both variable selection and prediction.}

\end{exmp}

\begin{landscape}
\begin{table}
\centering
\caption{Simulation results for the two-step approach compared with the Lasso, GAMSEL and GAMBoost in the three cases of Example \ref{uncorex}. NV, average number of the variables being selected; TPR, the true positive rate; FPR, the false positive rate; and PE, prediction error (here is the misclassification rate). Results are averaged over 100 repetitions. Enclosed in parentheses are the corresponding standard errors.}
\begin{tabular}{c|cccc|cccc|cccc}
\hline
&\multicolumn{4}{c}{\makecell{n=100\\p=200\\s=3}}&\multicolumn{4}{c}{\makecell{n=200\\p=500\\s=4}}&\multicolumn{4}{c}{\makecell{n=300\\p=3000\\s=5}}\\
\hhline{~------------}
&NV&TPR&FPR&PE&NV&TPR&FPR&PE&NV&TPR&FPR&PE\\
\hline
Two-step&\makecell{3.56\\(1.19)}&\makecell{.920\\(.146)}&\makecell{.004\\(.005)}&\makecell{.148\\(.027)}&\makecell{4.82\\(1.02)}&\makecell{.989\\(.057)}&\makecell{.002\\(.002)}&\makecell{.128\\(.018)}&\makecell{4.92\\(0.535)}&\makecell{.968\\(.086)}&\makecell{.000\\(.000)}&\makecell{.122\\(.018)}\\
Lasso&\makecell{30.0\\(17.9)}&\makecell{.920\\(.144)}&\makecell{.138\\(.090)}&\makecell{.249\\(.041)}&\makecell{64.7\\(19.2)}&\makecell{.978\\(.452)}&\makecell{.122\\(.039)}&\makecell{.229\\(.024)}&\makecell{85.2\\(68.3)}&\makecell{.816\\(.243)}&\makecell{.027\\(.022)}&\makecell{.211\\(.024)}\\
GAMSEL&\makecell{10.1\\(11.1)}&\makecell{.820\\(.209)}&\makecell{.039\\(.055)}&\makecell{.241\\(.035)}&\makecell{14.0\\(12.6)}&\makecell{.943\\(.112)}&\makecell{.021\\(.025)}&\makecell{.214\\(.023)}&\makecell{33.9\\(27.9)}&\makecell{.986\\(.065)}&\makecell{.010\\(.009)}&\makecell{.208\\(.016)}\\
GAMBoost&\makecell{44.7\\(4.84)}&\makecell{.738\\(.055)}&\makecell{.213\\(.025)}&\makecell{.231\\(.027)}&\makecell{85.4\\(6.88)}&\makecell{1.00\\(.000)}&\makecell{.164\\(.014)}&\makecell{.196\\(.018)}&\makecell{138\\(9.64)}&\makecell{.996\\(.028)}&\makecell{.044\\(.003)}&\makecell{.186\\(.015)}\\
\hline
\end{tabular}
\label{uncorextable}
\end{table}
\end{landscape}

In practice, the predictors are sometimes correlated to each other. It's interesting to see how well the procedure performs in correlated predictor cases. Therefore, we also perform the same comparison on correlated predictors.
\begin{exmp}
\label{corex}
In this example, we study the case where the design matrix contains correlated predictors. We generate the data in the following way. First we generate each element of $X_{n\times p}$ independently from $Unif(-1,1)$. Then we generate $u$ from $Unif(-1,1)$, \added{independently from $X_{n\times p}$}. Then all columns of $X$ are transformed using $X_j=(X_j+tu)/\sqrt{1+t^2}$. This procedure controls the correlation among predictors through $t$ such that $corr(x_{ik},x_{ij})=t^2/(1+t^2)$. \added{Here the simulation is run on $n=100$, $p=200$ and $s=3$. All other set-ups are kept  same as example \ref{uncorex}.} In our example, we choose $t=\sqrt{3/7}$, where the correlation is 0.3 and $t=\sqrt{7/3}$, where the correlation is 0.7. 

The results are summarised in table \ref{cor3table} on page \pageref{cor3table}. \added{In the correlated cases, all four methods are influenced, more or less. In terms of variable selection, the two-step approach still has the closest number of selected variables. The methods behave differently in terms of TPR and FPR. GAMBoost tends to have greater numbers in both TPR and FPR, while GAMSEL tends to have both lower numbers. The two-step approach balances between those two methods, while maintaining the smallest FPR among all methods. In terms of the prediction error, the two-step approach significantly beats the other methods. The results show  good performance of the two-step approach, and again emphasize that the adaptive group lasso step is necessary for better selection and estimation.}

\begin{table}[h]
\centering
\caption{Simulation results for the two-step approach compared with the Lasso, GAMSEL and GAMBoost in Example \ref{corex} with correlation 0.3 and 0.7 for $n=100$, $p=200$ and $s=3$. NV, average number of the variables being selected; TPR, the true positive rate; FPR, the false positive rate; and PE, prediction error (here is the misclassification rate). Results are averaged over 100 repetitions. Enclosed in parentheses are the corresponding standard errors.}
\begin{tabular}{c|cccc|cccc}
\hline
&\multicolumn{4}{c}{Cor=0.3}&\multicolumn{4}{c}{Cor=0.7}\\
\hhline{~--------}
&NV&TPR&FPR&PE&NV&TPR&FPR&PE\\
\hline
Two-step&\makecell{2.82\\(.994)}&\makecell{.753\\(.229)}&\makecell{.003\\(.004)}&\makecell{.171\\(.033)}&\makecell{2.05\\(.829)}&\makecell{.557\\(.170)}&\makecell{.002\\(.003)}&\makecell{.174\\(.022)}\\
Lasso&\makecell{37.0\\(38.2)}&\makecell{.690\\(.259)}&\makecell{.176\\(.194)}&\makecell{.312\\(.069)}&\makecell{21.9\\(37.9)}&\makecell{.327\\(.291)}&\makecell{.103\\(.193)}&\makecell{.288\\(.047)}\\
GAMSEL&\makecell{15.4\\(16.0)}&\makecell{.573\\(.285)}&\makecell{.069\\(.079)}&\makecell{.342\\(.065)}&\makecell{12.5\\(9.15)}&\makecell{.397\\(.271)}&\makecell{.057\\(.044)}&\makecell{.264\\(.033)}\\
GAMBoost&\makecell{44.2\\(5.21)}&\makecell{.977\\(.085)}&\makecell{.209\\(.026)}&\makecell{.268\\(.033)}&\makecell{33.7\\(4.52)}&\makecell{.860\\(.178)}&\makecell{.158\\(.014)}&\makecell{.203\\(.026)}\\
\hline
\end{tabular}
\label{cor3table}
\end{table}

This underselection for correlated predictors has been an issue for the lasso and adaptive lasso methods. For nonparametric additive models, \cite{huang2010variable} found the same issue when dealing with correlated predictors. Also the NIS proposed by \cite{fan2011nonparametric} did not perform well in correlated predictors compared to uncorrelated case. \added{Our two-step approach is not affected too much with the correlation, in terms of both variable selection and prediction.}

\end{exmp}

\added{It also happens in the real world that the signal strength is low. Therefore, it is interesting to consider a case where we have lower signal strength than in example \ref{uncorex}.}

\begin{exmp}
\label{ex:lowsignal}
\added{In this example, we reduce the signal strength of example \ref{uncorex} by a factor of 2, while all other assumptions are kept the same. The results are shown in Table \ref{table:lowsignal} on page \pageref{table:lowsignal}. From the table we see that minimal signal strength is an important factor to the performance of variable selection in the generalized models. The performance is impacted by the signal strength for all models. The two-step approach still have the closest number of nonzero variables to the ground truth. Though the true positive rate is lower than that of the Lasso or the GAMBoost, the latter two methods have too many false positives. The Lasso or GAMBoost selects too many variables and should not be considered as good variable selection methods. Moreover, the prediction error of the two-step approach remain the best among all four methods.}
\begin{table}[h]
\centering
\caption{Simulation results for the two-step approach compared with the Lasso, GAMSEL and GAMBoost in Example \ref{ex:lowsignal}, with $n=100$, $p=200$, $s=3$ and signal strength reduced. NV, average number of the variables being selected; TPR, the true positive rate; FPR, the false positive rate; and PE, prediction error (here is the misclassification rate). Results are averaged over 100 repetitions. Enclosed in parentheses are the corresponding standard errors.}
\begin{tabular}{c|cccc}
\hline
&NV&TPR&FPR&PE\\
\hline
Two-step&\makecell{3.91\\(2.05)}&\makecell{.703\\(.240)}&\makecell{.009\\(.009)}&\makecell{.218\\(.033)}\\
Lasso&\makecell{30.0\\(30.5)}&\makecell{.770\\(.304)}&\makecell{.142\\(.154)}&\makecell{.258\\(.036)}\\
GAMSEL&\makecell{15.3\\(18.0)}&\makecell{.510\\(.266)}&\makecell{.070\\(.090)}&\makecell{.377\\(.054)}\\
GAMBoost&\makecell{50.3\\(5.11)}&\makecell{.980\\(.079)}&\makecell{.240\\(.026)}&\makecell{.308\\(.028)}\\
\hline
\end{tabular}
\label{table:lowsignal}
\end{table}
\end{exmp}

\subsubsection{\added{Other link functions}}
\added{In this subsection, we study the performance of the two-step approach numerically on the Poisson regression and Gamma regression. In the Poisson regression, we have
\begin{equation}
\label{simlog}
y_i\sim Poisson(\theta_i),\ i=1,...,n,
\end{equation}
where $\theta_i=\exp[\alpha+\sum_{j=1}^pf_j(x_{ij})]$ and $x_{ij}$ is the $(i,j)-th$ element of the design matrix $X$. In the Gamma regression, we have
\begin{equation}
\label{simlog}
y_i\sim Gamma(\theta_i, \phi),\ i=1,...,n,
\end{equation}
where $\theta_i=\exp[\alpha+\sum_{j=1}^pf_j(x_{ij})]$ and $x_{ij}$ is the $(i,j)-th$ element of the design matrix $X$. The dispersion parameter $\phi$ is assumed to be known. Without loss of generality, we take $\phi=1$.}

\begin{exmp}
\label{ex:otherlink}
\added{In this example, we keep the same set up as in example \ref{uncorex} to generate the design matrix, and use the Poisson distribution/Gamma distribution  above to generate response variables. All other parameters are kept the same as in example \ref{uncorex}, but the signal strength is set to $1/4$ of the original signal strength, and we set $n=100$, $p=200$ and $s=3$. We compare the two-step approach with generalized linear models (GLM) and the GAMBoost. Note that the GAMSEL only supports Gaussian and Binomial link, thus is not used as a comparison here. The GAMBoost only supports generalized models with canonical link. The canonical link for Gamma regression suffers from the risk that the mean might fall outside of its range, thus the canonical link is not useful in practice. Therefore, we only use GAMBoost in Poisson regression as a comparison. Our algorithm works for both Gamma regression and Poisson regression, and to the best of our knowledge, is the only publicly available algorithm that supports both in the high-dimensional settings. The GLMs are run with the scikit-learn package in python.

The results are provided in Table \ref{table:otherlink}. We see the two-step approach works significantly better than the linear model, and than the GAMBoost in the Poisson regression case, except for the true positive rate. The GAMBoost has a perfect true positive rate, which is slightly better than that of our two-step approach. However, the same issue as before is that it selected too many variables and make the false positive rate much higher than tolerable. Moreover, the prediction performance on the two-step approach is also in the first place in both the cases.}

\begin{table}[h]
\centering
\caption{Simulation results for the two-step approach compared with the Lasso, GAMSEL and GAMBoost in Example \ref{ex:otherlink} for Poisson regression and Gamma regression with $n=100$, $p=200$ and $s=3$. NV, average number of the variables being selected; TPR, the true positive rate; FPR, the false positive rate; and PE, prediction error (here is the misclassification rate). Results are averaged over 100 repetitions. Enclosed in parentheses are the corresponding standard errors. The GAMBoost method does not support Gamma regression with non-canonical link function, while the canonical link falls outside of range, therefore it does not support Gamma regression.}
\begin{tabular}{c|cccc|cccc}
\hline
&\multicolumn{4}{c}{Poisson Regression}&\multicolumn{4}{c}{Gamma Regression}\\
\hhline{~--------}
&NV&TPR&FPR&PE&NV&TPR&FPR&PE\\
\hline
Two-step&\makecell{4.30\\(1.51)}&\makecell{.930\\(.172)}&\makecell{.008\\(.009)}&\makecell{2.34\\(.703)}&\makecell{3.57\\(0.98)}&\makecell{.997\\(.033)}&\makecell{.003\\(.005)}&\makecell{14.4\\(19.5)}\\
Lasso&\makecell{13.4\\(9.79)}&\makecell{.867\\(.189)}&\makecell{.054\\(.050)}&\makecell{3.51\\(.403)}&\makecell{12.5\\(7.72)}&\makecell{.887\\(.196)}&\makecell{.048\\(.039)}&\makecell{42.3\\(11.5)}\\
GAMBoost&\makecell{82.1\\(4.27)}&\makecell{1.00\\(.000)}&\makecell{.401\\(.022)}&\makecell{15.4\\(2.12)}&NA&NA&NA&NA\\
\hline
\end{tabular}
\label{table:otherlink}
\end{table}
\end{exmp}

\subsection{\textit{Real data examples}}

In this section, we provide \replaced{three }{}real data examples to illustrate our procedure. In the first example, we  consider the case $n>p$ \added{in the classification set up}, in the second example, we  consider the high-dimensional set up $n<p$ \added{in the classification set up, and in the third example, we consider a Gamma regression model.}

\begin{exmp}
\label{spam}
In this example, we use the data set in Example 1 of \cite{friedman2001elements}, the spam data as an example of the case $n>p$. The data set is available at https://web.stanford.edu/~hastie/ElemStatLearn/data.html. This data set has been studied in many different contexts with the objective being to predict whether an email is a spam or not based on a few features of the emails. There are $n=4601$ observations, among which 1813 (39.4\%) are spams. There are $p=57$ predictors, including 48 continuous real $[0,100]$ attributes of the relative frequency of 48 `spam' words out of the total number of words in the email, 6 continuous real $[0,100]$ attributes of the relative frequency of 6 `spam' characters out of the total number of characters in the email, 1 continuous real attribute of average length of uninterrupted sequences of capital letters, 1 continuous integer attribute of length of longest uninterrupted sequence of capital letters, and 1 continuous integer attribute of total number of capital letters in the e-mail. The data was first log transformed, since most of the predictors have long-tailed distribution, as mentioned in \cite{friedman2001elements}. They were then centered and standardised.

The data was split into a training data set with 3067 observations and a testing data set with 1534 observations. \added{We choose order $l=4$ which implies a cubis B-spline. We choose $m_n=15$, which implies there are $11$ inner knots, evenly placed over the empirical percentiles of the data. We compare the result with the logistic regression with Lasso penalty, the support vector machine (SVM) with Lasso penalty, and the sparse group lasso neural network (SGLNN, \cite{feng2017sparse}, see also \cite{yang2020statistical}). The Lasso and SMV are implemented with the skikit-learn module in python, and the SGLNN is implemented with the algorithm in the paper in python. By changing the tuning parameter or stopping criterion, we get estimations with different sparsity levels. All results are averaged over 50 repetitions. The classification error with different level of sparsity is shown in Figure \ref{fig:spamerr} on page \pageref{fig:spamerr}.} \added{The two-step approach and the neural network perform better than the linear models, which indicates a non-linear relationship. The two-step approach has maximum  accuracy 0.944, while that for the neural network is 0.946. The neural network performs a little better than the two-step approach due to its ability to model the interactions among predictors, but this difference is not significant. However, neural network has no interpretation and takes longer to train.  All four methods have performance increase as more predictors are included, which indicates that all predictors contributes to some effect to the prediction. However, we are able to reach more than 0.9 accuracy with only 15 predictors included. With the GIC criterion, the two-step approach selects $14.6\pm1.52$ predictors, with an average accuracy of $0.914\pm 0.015$.} The most frequently selected functions are shown in Figure \ref{fig:spamfunc} on page \pageref{fig:spamfunc}, \added{which also shows that these functions are truly non-linear}. The plots are of the original functions, i.e., before the logarithm transformation. \added{The estimated functions are close to the results in \cite{friedman2001elements}, Chapter 9, with slight scale difference due to different penalization.} The results show that the additive model by the adaptive group lasso is more suitable for this data than \replaced{linear models. }{}

\begin{figure}[h]
    \centering
    \includegraphics[width=\textwidth]{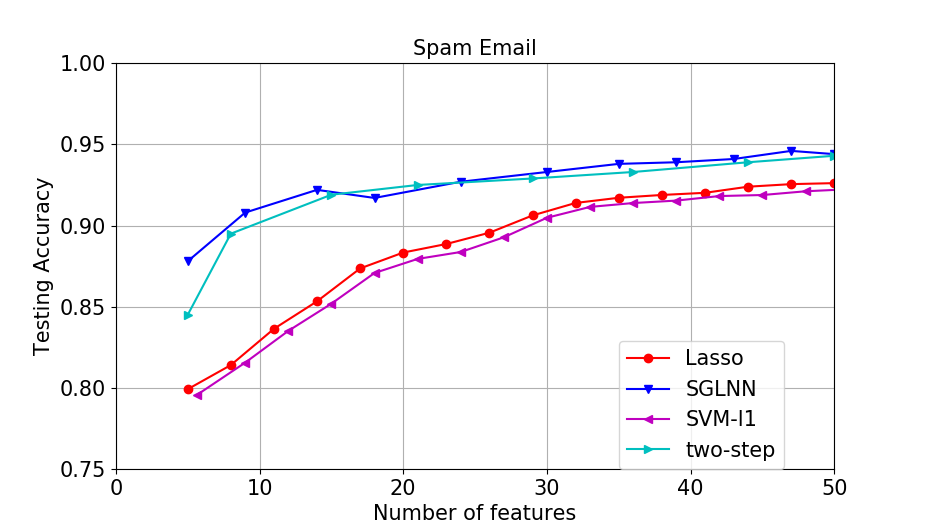}
    \caption{The classification accuracy against the number of nonzero variables measured on a testing set for Example \ref{spam} over 50 repetitions. The two-step approach, the logistic regression with Lasso, the $l_1$ norm penalized SVM and the sparse group lasso neural network are included in comparison.}
    \label{fig:spamerr}
\end{figure}
\begin{figure}[h]
    \centering
    \includegraphics[width=\textwidth]{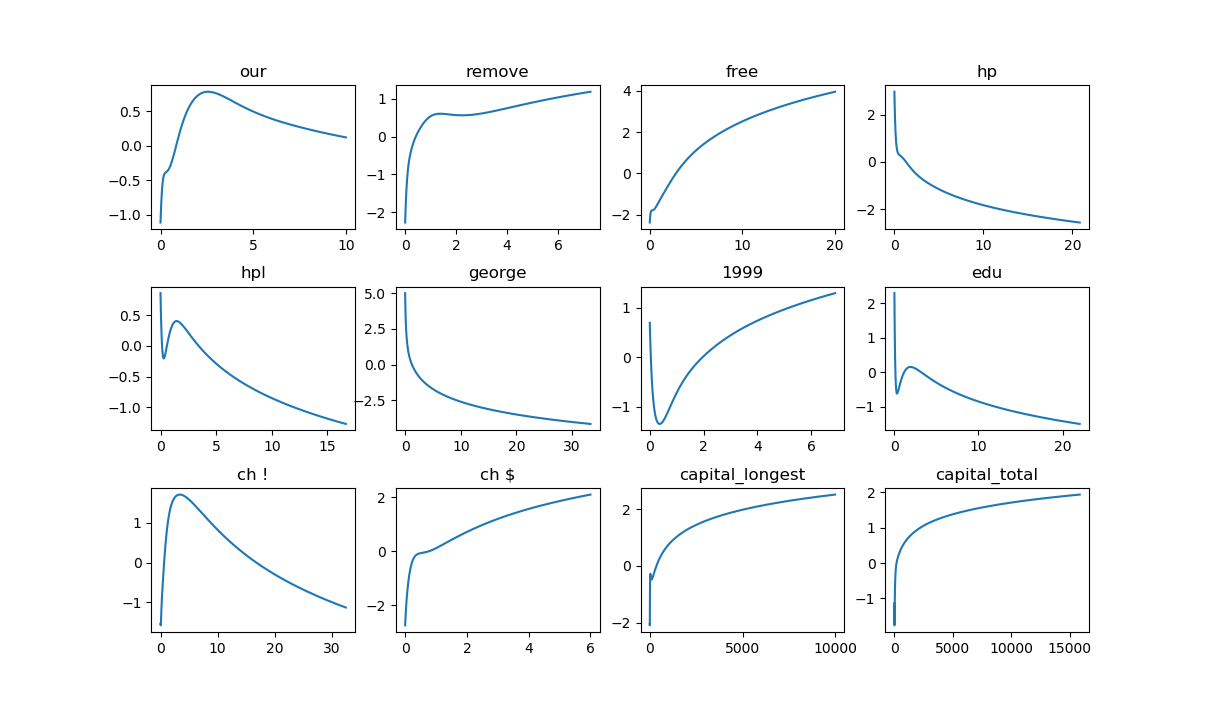}
    \caption{The estimated functions for the most frequently selected functions for Example \ref{spam}.}
    \label{fig:spamfunc}
\end{figure}
\end{exmp}

\begin{exmp}
\label{prostate}
For  high-dimensional classification example, we use the prostate cancer gene expression data described in http://featureselection.asu.edu/datasets.php. The data set has a binary response. 102 observations were studied on 5966 predictor variables, which indicates that the data set is really a high dimensional data set. The responses have values 1 (50 sample points) and 2 (52 sample points), where 1 indicates \replaced{normal }{} and 2 indicates \replaced{tumor }{}. All predictors are continuous predictors, with positive values.

To see the performance of our procedure, we ran 100 replications. In each replication, we randomly choose 76 of the observations as training data set and the rest 26 observations as testing data set. \added{We choose order $l=4$ which implies a cubis B-spline. We choose $m_n=9$, which implies there are $5$ inner knots, evenly placed over the empirical percentiles of the data. Similar to the last example, we compare the result with the logistic regression with Lasso penalty, the SVM with Lasso penalty, and SGLNN. The classification error with different level of sparsity is shown in Figure \ref{fig:prostateerr} on page \pageref{fig:prostateerr}. From the figure we see that compared with linear methods such as the logistic regression or support vector machine, the non-parametric approaches converges faster. The two-step approach reaches a testing accuracy of 0.945 when around 15 variables are included in the model, while the linear methods need over 30 variables to reach competitive results. Compared with neural network, the two-step approach is easier to implement with stabilized performances. A drawback of the non-parametric methods is to easily overfit for small sample, and that's the reason the performance drops as too many variables entered the into the model. With the GIC criterion, the two-step approach selects $3.25\pm 1.67$ predictors, with an average accuracy of $0.914\pm0.016$. To show the non-linear relationship, figure \ref{fig:prostatefunc} on page \pageref{fig:prostatefunc} shows the estimated functions for the 6 most frequently selected variables.}

\begin{figure}[h]
    \centering
    \includegraphics[width=\textwidth]{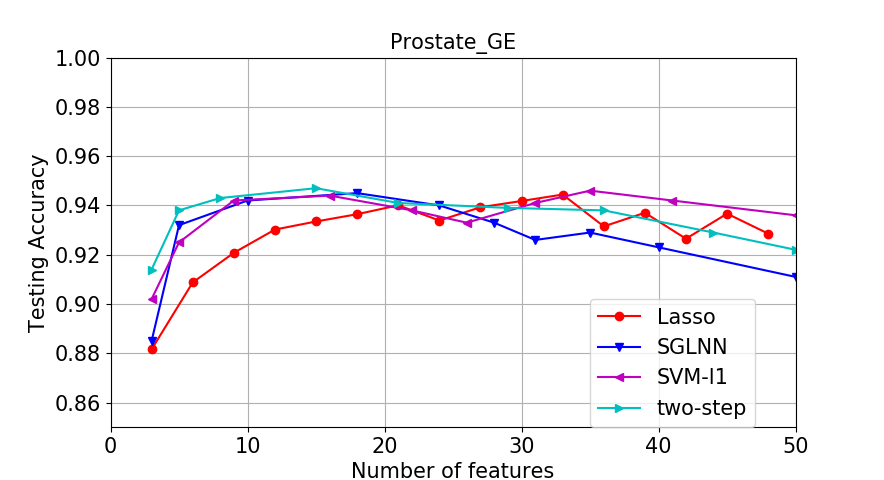}
    \caption{The classification accuracy against the number of nonzero variables measured on a testing set for Example \ref{prostate} over 500 repetitions. The two-step approach, the logistic regression with Lasso, the $l_1$ norm penalized SVM and the sparse group lasso neural network are included in comparison.}
    \label{fig:prostateerr}
\end{figure}

\begin{figure}[h]
    \centering
    \includegraphics[width=\textwidth]{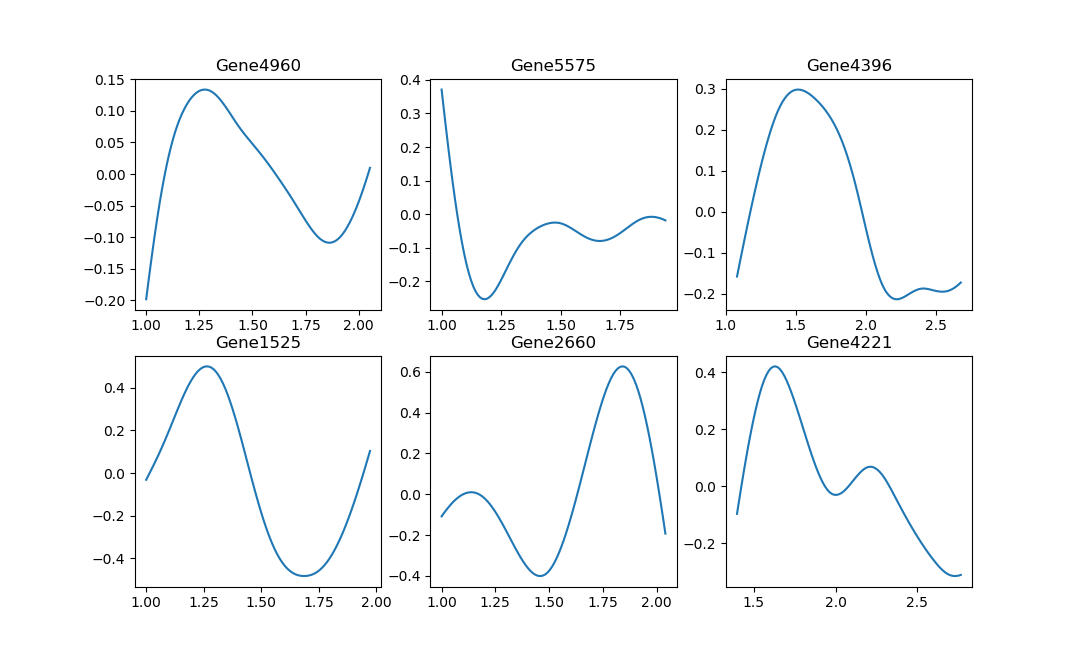}
    \caption{The estimated functions for the most frequently selected functions ordered by descending in frequency for Example \ref{prostate}.}
    \label{fig:prostatefunc}
\end{figure}

\end{exmp}

\begin{exmp}
\label{noaa}
\added{In this example, we investigate the performance of the two-step approach on Gamma regression. The data set is from National Oceanic and Atmospheric Administration (NOAA). We use the storm data, which includes the occurrence of storms in the United States with the time, location, property damage, a narrative description and etc. Here we only take the data in Michigan from 2010 to 2018 and keep the narrative description as our predictor variable and the property damage as our response variable. The description is in text, therefore we applied wording embedding algorithm Word2vec \citep{mikolov2013efficient} to transform each description into a numeric representation vector of length $p=701$, similar word embedding preprocessing can be found in \cite{lee2020actuarial}. The response variable property damage has a long tail distribution, thus we use a Gamma regression here. After removing outliers, the data set contains 3085 observations. In order to study the high-dimensional case, we randomly sample $10\%$ of the observations as our training data ($n=309$) and the rest are used for validation. Moreover, the response is normalized with the location and scale parameters of gamma distribution.

To see the performance of our procedure, we ran 50 replications. We choose order $l=4$ which implies a cubis B-spline. We choose $m_n=9$, which implies there are $5$ inner knots, evenly placed over the empirical percentiles of the data. Since there's limited libraries available for variable selection under high-dimensional gamma model, we compare the two-step approach with the linear regression with Lasso on a logarithm transformation on the response variable. The prediction error with different level of sparsity is shown in Figure \ref{fig:noaaerr} on page \pageref{fig:noaaerr}. With the GIC criterion, the two-step approach selects $34.45\pm 3.52$ predictors, with an average MSE of $0.004334\pm0.000115$. However, from the plot we see that the linear model was not able to reach this accuracy through the whole solution path, with the best accuracy of $0.004337$ at around $80$ nonzero variables. This example also shows the superior of the non-parametric model over linear models.}
\begin{figure}[h]
    \centering
    \includegraphics[width=\textwidth]{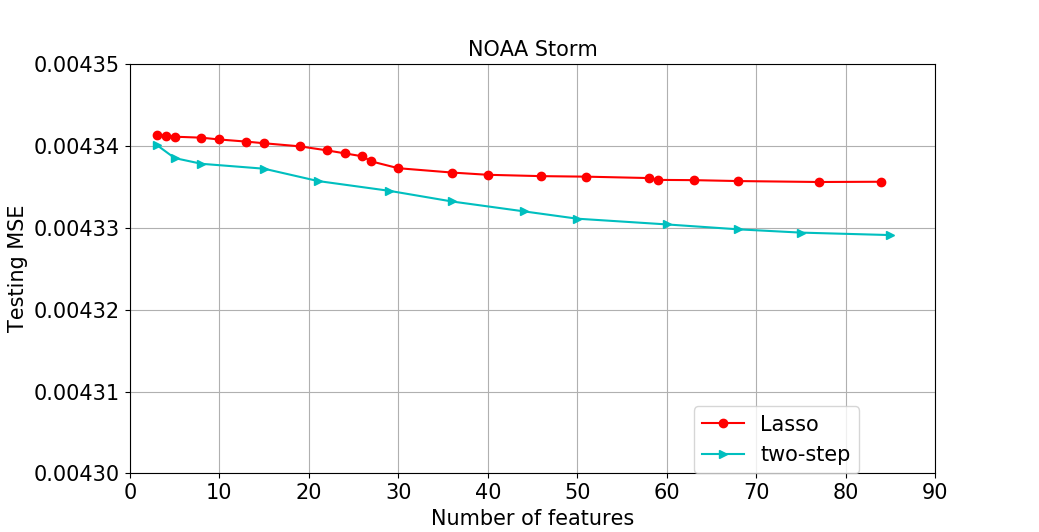}
    \caption{The testing MSE against the number of nonzero variables measured on a testing set for Example \ref{noaa} over 50 repetitions. The two-step approach and logarithm transformation with the Lasso are included in comparison.}
    \label{fig:noaaerr}
\end{figure}

\end{exmp}

\section{Discussion}
\label{Sec:Discussion}
In this paper, we considered ultra high-dimensional ($\log p_n=O(n^{\rho})$) generalised additive model with a diverging number of nonzero functions ($s_n\rightarrow 0\ as\ n\rightarrow\infty$). After using basis expansion on the nonparametric functions, we used two step procedures---group lasso and adaptive group lasso to select the true model. We have proved the screening consistency of the group lasso estimator and the selection consistency of the adaptive group lasso estimator. The rates of convergence of both estimators were also derived, which proved that the adaptive group lasso does have an improvement on the estimator. The whole paper provides a solid foundation for the existing methods. Finally we proved that under this nonparametric set up, the generalised information criterion (GIC) is a good way to select the tuning parameter that consistently selects the true model.

In this paper, we used a fixed design on the data matrix $\boldX$. A random design on $\boldX$ could be considered, i.e., $X$ has a continuous distribution function $f_X(X)$ on its interval $[a,b]$, however, extra assumptions such as the boundedness of the density function are needed to reach the same result. Also we proved the selection consistency of the GIC procedure on the adaptive group lasso estimator, conditioning  that the initial estimator satisfies (\ref{firstrequirement}), which is possessed by the group lasso procedure with probability tending to 1.  However, the theory of screening consistency for the group lasso estimator is still to be established. This is a challenging problem, since there doesn't have to exist a tuning parameter that gives selection consistency in the group lasso procedure, but this is an interesting problem that deserves further investigation. We also discussed the subset selection and subset selection with shrinkage under our set up. The theoretical investigation suggests the other penalty functions may not have clear advantages over the proposed procedure.

\added{Moreover, the heteroskedastic error case is also attracting in high-dimensional GAM. The square root Lasso \citep{belloni2011square} has been proved to overcome this issue, however, it hasn't been extended to the non-parametric set up. It could be interesting to apply square root Lasso on the GAM to incorporate this case. This is a demanding topic that deserves further investigation as well.}

\bibliographystyle{apalike}
\bibliography{ref}

\newpage

\begin{center}
\begin{multicols}{2}

        Kaixu Yang\\
        Department of Statistics and Probability\\
        Michigan State University\\
        619 Red Cedar Rd. Room 507\\
        East Lansing, MI, 48824\\
        yangkaix@msu.edu
        
\columnbreak
  
        Tapabrata Maiti\\
        Department of Statistics and Probability\\
        Michigan State University\\
        619 Red Cedar Rd. Room 424\\
        East Lansing, MI, 48824\\
        maiti@msu.edu
\end{multicols}
\end{center}
\newpage

\begin{appendices}

\section{Derivation of assumption 2}
\label{appendix1}
Though assumption 2 is imposed on the fixed design matrix, however, it holds if the design matrix $\boldX$ is drawn from a continuous density and the density $g_j$ of $X_j$ is bounded away from 0 and infinity by $b$ and $B$, respectively, on the interval $[a,b]$. Let $\bdelta_A$ be the sub-vector of $\bdelta$ which include all nonzero entries. Without loss of generality, let $\bdelta_A=\{\bdelta_1,...,\bdelta_k\}$, where $\bdelta_k\in\R^{m_n}$ and $k=O(s_n)$. Let $\Phi_A$ be the corresponding sub-matrix of $\Phi$.

By lemma 3 in \cite{stone1985additive}, if the design matrix $\boldX$ is drawn from a continuous density and the density $g_j$ of $X_j$ is bounded away from 0 and infinity by $b$ and $B$, respectively, on the interval $[a,b]$, and $\text{card}_B(\bdelta)=O(s_n)$, we have
$$\|\Phi_1\bdelta_1+...+\Phi_k\bdelta_k\|_2\geq \gamma_2^{k-1}(\|\Phi_1\bdelta_1\|_2+...+\|\Phi_k\bdelta_k\|_2)$$for some positive constant $\gamma_2$ such that $\delta_0<1-2\gamma_2^2<1$, where $\delta_0=((1-bB^{-1})/2)$. Together with the triangle inequality, we have
$$\gamma_2^{k-1}(\|\Phi_1\bdelta_1\|_2+...+\|\Phi_k\bdelta_k\|_2)\leq\|\Phi_A\bdelta_A\|_2\leq \|\Phi_1\bdelta_1\|_2+...+\|\Phi_k\bdelta_k\|_2$$
By simple algebra, we have
$$\gamma_2^{2k-2}(\|\Phi_1\bdelta_1\|_2^2+...+\|\Phi_k\bdelta_k\|_2^2)\leq\|\Phi_A\bdelta_A\|_2^2\leq 2(\|\Phi_1\bdelta_1\|_2^2+...+\|\Phi_k\bdelta_k\|_2^2)$$
For any $j=1,...,k$, by lemma 6.2 in \cite{zhou1998local}, we have
$$c_1m_n^{-1}\leq \lambda_{min}(n^{-1}\Phi_j^T\Phi_j)\leq \lambda_{max}(n^{-1}\Phi_j^T\Phi_j)\leq c_2m_n^{-1}$$
for some $c_1$ and $c_2$. Then we have
\begin{align*}
\frac{\bdelta^T\Phi^T\Phi\bdelta}{\|\bdelta\|_2^2}&=\frac{\|\Phi_A\bdelta_A\|_2^2}{\|\bdelta_A\|_2^2}\\
&\geq \frac{\gamma_2^{2k-2}(\|\Phi_1\bdelta_1\|_2^2+...+\|\Phi_k\bdelta_k\|_2^2)}{\|\bdelta_A\|_2^2}\\
&=\gamma_2^{2k-2}\left(\frac{\|\Phi_1\bdelta_1\|_2^2}{\|\bdelta_1\|_2^2}\frac{\|\bdelta_1\|_2^2}{\|\bdelta_A\|_2^2}+...+\frac{\|\Phi_k\bdelta_k\|_2^2}{\|\bdelta_k\|_2^2}\frac{\|\bdelta_k\|_2^2}{\|\bdelta_A\|_2^2}\right)\\
&\geq \gamma_2^{2k-2}c_1nm_n^{-1}\left(\frac{\|\bdelta_1\|_2^2}{\|\bdelta_A\|_2^2}+...+\frac{\|\bdelta_k\|_2^2}{\|\bdelta_A\|_2^2}\right)\\
&=\gamma_2^{2k-2}c_1nm_n^{-1}
\end{align*}
Let $\gamma_0=\gamma_2^{-2}c_1$ and observe that $k=O(s_n)$, we have
$$\frac{\bdelta^T\Phi^T\Phi\bdelta}{n\|\bdelta\|_2^2}\geq \gamma_0\gamma_2^{2s_n}m_n^{-1}$$
Similarly, we have
\begin{align*}
\frac{\bdelta^T\Phi^T\Phi\bdelta}{\|\bdelta\|_2^2}&=\frac{\|\Phi_A\bdelta_A\|_2^2}{\|\bdelta_A\|_2^2}\\
&\leq \frac{2(\|\Phi_1\bdelta_1\|_2^2+...+\|\Phi_k\bdelta_k\|_2^2)}{\|\bdelta_A\|_2^2}\\
&=2\left(\frac{\|\Phi_1\bdelta_1\|_2^2}{\|\bdelta_1\|_2^2}\frac{\|\bdelta_1\|_2^2}{\|\bdelta_A\|_2^2}+...+\frac{\|\Phi_k\bdelta_k\|_2^2}{\|\bdelta_k\|_2^2}\frac{\|\bdelta_k\|_2^2}{\|\bdelta_A\|_2^2}\right)\\
&\leq 2c_2nm_n^{-1}\left(\frac{\|\bdelta_1\|_2^2}{\|\bdelta_A\|_2^2}+...+\frac{\|\bdelta_k\|_2^2}{\|\bdelta_A\|_2^2}\right)\\
&=2c_2nm_n^{-1}
\end{align*}
Let $\gamma_1=c_2$, we have
$$\frac{\bdelta^T\Phi^T\Phi\bdelta}{n\|\bdelta\|_2^2}\leq \gamma_1m_n^{-1}$$

\section{Proofs and lemma and theorems}
\label{appendix2}

The following lemmas are needed in proving theorems.
\begin{lemma}
\label{highprob1}
For any sequence $r_n>0$, under assumption 1 and 3, we have for bounded response such that $|y_i|<c/2$ that
\begin{equation}
\prob\left(\left\|\frac{\Phi^T\left(\boldy-\bmu_y\right)}{n}\right\|_{\infty}\leq r_n\right)\geq 1-2p_nm_n\exp(-\frac{nr_n^2}{2c^2c_{\Phi}^2})
\end{equation}
Specifically, for a diverging sequence $t_n$, taking
$$r_n=\sqrt{2}cc_{\Phi}\sqrt{\frac{\log (p_nm_n)+t_n}{n}}$$
we have for response such that $|y_i|<c/2$ that
\begin{equation}
\prob\left(\left\|\frac{\Phi^T\left(\boldy-\bmu_y\right)}{n}\right\|_{\infty}\leq r_n\right)\geq 1-2\exp(-t_n)
\end{equation}
\end{lemma}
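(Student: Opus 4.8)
The plan is to prove a single-coordinate tail bound first and then assemble the $\ell_\infty$ bound by a union bound. Writing $\bepsilon=\boldy-\bmu_y$ with components $\epsilon_i=y_i-\mu_{y_i}$, the $\ell$-th entry of the vector $\Phi^T(\boldy-\bmu_y)/n$ is the scalar average $\frac{1}{n}\sum_{i=1}^n\Phi_{i\ell}\epsilon_i$, where $\Phi_{i\ell}$ is the $(i,\ell)$ entry of the basis matrix and $1\le\ell\le p_nm_n$. Because the observations are independent and the design is fixed, for each fixed $\ell$ the summands $\Phi_{i\ell}\epsilon_i$ are independent across $i$, and each has mean zero since $E[\epsilon_i]=E[y_i-\mu_{y_i}]=0$. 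So the problem reduces to a tail bound for an average of independent, mean-zero, bounded random variables.

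First I would record the boundedness of the centered residuals. Under the hypothesis $|y_i|<c/2$, the conditional mean obeys $|\mu_{y_i}|=|E[y_i\mid\boldX_i]|\le E[|y_i|\mid\boldX_i]<c/2$, so $|\epsilon_i|<c$ for every $i$. Consequently $\Phi_{i\ell}\epsilon_i$ takes values in an interval of length $2c|\Phi_{i\ell}|$, and Hoeffding's inequality for bounded independent variables gives, for any $t>0$,
\[
\prob\left(\left|\sum_{i=1}^n\Phi_{i\ell}\epsilon_i\right|\ge t\right)\le 2\exp\left(-\frac{2t^2}{\sum_{i=1}^n(2c|\Phi_{i\ell}|)^2}\right)=2\exp\left(-\frac{t^2}{2c^2\|\Phi_\ell\|_2^2}\right).
\]
Here Assumption 1 enters through $\|\Phi_\ell\|_2^2\le nc_\Phi^2$, so choosing $t=nr_n$ yields the single-coordinate bound $\prob\left(\left|\frac{1}{n}\sum_i\Phi_{i\ell}\epsilon_i\right|\ge r_n\right)\le 2\exp\left(-nr_n^2/(2c^2c_\Phi^2)\right)$.

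Finally I would take a union bound over the $p_nm_n$ coordinates, since the event $\{\|\cdot\|_\infty\ge r_n\}$ is the union over $\ell$ of the per-coordinate events; this multiplies the single-coordinate probability by $p_nm_n$ and gives the first displayed inequality after passing to the complement. The second statement is then pure substitution: plugging $r_n=\sqrt{2}cc_\Phi\sqrt{(\log(p_nm_n)+t_n)/n}$ into the exponent produces $\exp(-\log(p_nm_n)-t_n)=(p_nm_n)^{-1}e^{-t_n}$, and the prefactor $p_nm_n$ cancels to leave $2e^{-t_n}$. There is no serious obstacle here; the only points requiring care are verifying $|\epsilon_i|<c$ (rather than $c/2$), which is precisely why the hypothesis is stated as $|y_i|<c/2$, and tracking the factor $2$ through Hoeffding so that the $4c^2$ in the variance proxy halves to $2c^2$. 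Assumption 3 is used only implicitly, to guarantee that the exponential-family mean $\mu_{y_i}=b'(\theta_i)$ is well defined.
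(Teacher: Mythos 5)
Your proof is correct and follows essentially the same route as the paper's: a per-coordinate Hoeffding bound for the independent, mean-zero, bounded summands $\Phi_{i\ell}\epsilon_i$, Assumption 1 to control $\|\Phi_\ell\|_2^2\leq nc_\Phi^2$, a Bonferroni union bound over the $p_nm_n$ coordinates, and direct substitution of the stated $r_n$. If anything, your explicit verification that $|y_i|<c/2$ forces $|\epsilon_i|<c$, and your tracking of the factor $2$ through Hoeffding's variance proxy, are stated more carefully than in the paper's own argument.
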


\noindent
\begin{proof}
Observe that
$$\frac{\Phi_j^T\left(\boldy-\bmu_y\right)}{n}=\sum_{i=1}^n\left(\frac{\phi_{ij}(y_i-\mu_{y_i})}{n}\right):=\sum_{i=1}^n\gamma(y_i)$$
It's easy to verify that $E\gamma(y_i)=0$ for $i=1,...,n$ and $|\gamma(y_i)|=\left|\phi_{ij}(y_i-\mu_{y_i})/n\right|\leq cd_i$ for $i=1,...,n$. By assumption 1, we have $\sum_{i=1}^nd_i^2\leq c_{\Phi}^2/n$ for $i=1,...,n$. Apply Bonferroni's inequality and Hoeffding's inequality, we have
\begin{align*}
\prob\left(\left\|\frac{\Phi^T\left(\boldy-\bmu_y\right)}{n}\right\|_{\infty}\leq r_n\right)&=1-\prob\left(\left\|\frac{\Phi^T\left(\boldy-\bmu_y\right)}{n}\right\|_{\infty}\geq r_n\right)\\
&=1-\prob\left(\bigcup_{j=1}^{m_n\times p_n}\left\{\left|\frac{\Phi_j^T\left(\boldy-\bmu_y\right)}{n}\right|\geq r_n\right\}\right)\\
&\geq 1-\sum_{j=1}^{m_n\times p_n}\prob\left(\left|\frac{\Phi_j^T\left(\boldy-\bmu_y\right)}{n}\right|\geq r_n\right)\\
&\geq 1-m_n\times p_n\times 2\exp\left(-\frac{nr_n^2}{2u_n^2c_{\Phi}^2}\right)-c_2n^{1-c_3c_4^2}\\
\end{align*}
with our choice of
$$r_n=\sqrt{2}cc_{\Phi}\sqrt{\frac{\log (p_nm_n)+t_n}{n}}$$
we have
\begin{align*}
\prob\left(\left\|\frac{\Phi^T\left(\boldy-\bmu_y\right)}{n}\right\|_{\infty}\leq r_n\right)&\geq 1-m_n\times p_n\times 2\exp\left(-\frac{nr_n^2}{8c^2c_{\Phi}^2}\right)\\
&=1-m_n\times p_n\times 2\exp\left(-\frac{n2c^2c_{\Phi}^2(\log(p_nm_n)+t_n)}{2c^2c_{\Phi}^2n}\right)\\
&=1-2\exp(-t_n)
\end{align*}
\end{proof}

\begin{lemma}
\label{highprob2}
In the unbounded response case, under assumptions 1 and 3, let $T_n=n^{-1}\|\Phi_j^T(\boldy-\bmu_y)\|_{\infty}=\max_{j=1,...,p_nm_n}n^{-1}|\Phi_j^T(\boldy-\bmu_y)|$, we have
\begin{equation}
ET_n=O(1)n^{-1/2}\sqrt{p_nm_n}
\end{equation}
and then for any diverging sequence $a_n$,
\begin{equation}
\prob\left(T_n\geq a_n\sqrt{\frac{\log(p_nm_n)}{n}}\right)\rightarrow 0\ \text{as}\ n\rightarrow 0
\end{equation}
\end{lemma}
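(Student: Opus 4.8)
The plan is to treat the two assertions separately: the expectation bound follows from a crude second-moment estimate, while the tail bound requires a genuine sub-Gaussian concentration argument, since the first moment combined with Markov is not enough.

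First I would write $T_n=\max_{1\le j\le p_nm_n}|Z_j|$ with $Z_j=n^{-1}\Phi_j^T(\boldy-\bmu_y)=n^{-1}\sum_{i=1}^n\phi_{ij}\epsilon_i$, where $\epsilon_i=y_i-\mu_{y_i}$ is mean zero. Using $\max_j Z_j^2\le\sum_j Z_j^2$ and Jensen's inequality, $\E T_n\le(\E T_n^2)^{1/2}\le(\sum_j \E Z_j^2)^{1/2}$. By independence and mean-zero-ness, $\E Z_j^2=n^{-2}\sum_i\phi_{ij}^2\var(\epsilon_i)$; here $\var(\epsilon_i)=b''(\theta_i)\le c_1^{-1}$ by Assumption 3 and $\sum_i\phi_{ij}^2=\|\Phi_j\|_2^2\le nc_{\Phi}^2$ by Assumption 1, so $\E Z_j^2\le c_1^{-1}c_{\Phi}^2/n$. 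Summing over the $p_nm_n$ coordinates and taking the square root gives $\E T_n\le c_1^{-1/2}c_{\Phi}\sqrt{p_nm_n/n}=O(1)n^{-1/2}\sqrt{p_nm_n}$, which is the first claim.

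For the tail bound I would not use the first moment with Markov, since the ratio $\sqrt{p_nm_n}/\sqrt{\log(p_nm_n)}$ does not vanish. Instead I would exploit the sub-Gaussian tail (\ref{assbddtailprob}): a mean-zero variable with $\prob(|\epsilon_i|\ge t)\le 2\exp(-c_2t^2)$ admits an MGF bound $\E[\exp(\lambda\epsilon_i)]\le\exp(\sigma_0^2\lambda^2/2)$ for a variance proxy $\sigma_0^2$ depending only on $c_2$. Since the $\epsilon_i$ are independent and $Z_j$ is a linear combination with weights $\phi_{ij}/n$, the MGF multiplies to give $\E[\exp(\lambda Z_j)]\le\exp(\sigma_0^2\lambda^2\|\Phi_j\|_2^2/(2n^2))\le\exp(\sigma_0^2 c_{\Phi}^2\lambda^2/(2n))$, whence a Chernoff bound yields $\prob(|Z_j|\ge t)\le 2\exp(-nt^2/(2\sigma_0^2c_{\Phi}^2))$. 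Applying Bonferroni's inequality over the $p_nm_n$ coordinates gives $\prob(T_n\ge t)\le 2p_nm_n\exp(-nt^2/(2\sigma_0^2c_{\Phi}^2))$. Substituting $t=a_n\sqrt{\log(p_nm_n)/n}$ produces $\prob(T_n\ge t)\le 2\exp\{\log(p_nm_n)(1-a_n^2/(2\sigma_0^2c_{\Phi}^2))\}$, and because $a_n\to\infty$ the coefficient of $\log(p_nm_n)$ eventually diverges to $-\infty$, so the bound tends to $0$ (the ``$n\to 0$'' in the statement should read $n\to\infty$).

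The main obstacle is the second step: converting the raw two-sided tail hypothesis (\ref{assbddtailprob}) into a usable MGF/Chernoff bound for the weighted sums $Z_j$ with explicit control of the variance proxy $\sigma_0^2$ in terms of $c_2$ and $c_{\Phi}$. This is where the diverging factor $a_n$ must be shown to be exactly what forces the union bound over $p_nm_n$ coordinates to collapse; the expectation bound and the final substitution are routine by comparison.
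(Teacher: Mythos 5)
Your proof is correct, but it takes a genuinely different route from the paper's. The paper bounds the expectation via the maximal inequality for sub-Gaussian variables (Lemmas 2.2.1--2.2.2 of van der Vaart and Wellner, as in Lemma 2 of Huang et al.), which gives $\E T_n\leq Cn^{-1}\sqrt{\log(p_nm_n)}\max_j\|\Phi_j\|_2\leq Cc_{\Phi}\sqrt{\log(p_nm_n)/n}$, and then obtains the tail bound by a single application of Markov's inequality: $\prob\bigl(T_n\geq a_n\sqrt{\log(p_nm_n)/n}\bigr)\leq C/a_n\rightarrow 0$. Note that this means the displayed expectation bound $\E T_n=O(1)n^{-1/2}\sqrt{p_nm_n}$ in the lemma is inconsistent with the paper's own Markov step, which requires the sharper $O(1)n^{-1/2}\sqrt{\log(p_nm_n)}$; the $\sqrt{p_nm_n}$ appears to be a typo. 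Your second-moment computation proves the lemma's expectation bound exactly as literally stated, and your observation that Markov then fails (since $\sqrt{p_nm_n}/\sqrt{\log(p_nm_n)}\not\rightarrow 0$) is precisely why you were forced onto the Chernoff-plus-union-bound path, whereas the paper's sharper maximal-inequality bound makes Markov suffice. Your MGF conversion from the tail hypothesis (\ref{assbddtailprob}) is standard for mean-zero variables (and $\epsilon_i=y_i-\mu_{y_i}$ is indeed mean zero), so the step you flag as the main obstacle is routine. What each approach buys: the paper's argument is shorter once the maximal inequality is cited, but delivers only a polynomially decaying probability $O(1/a_n)$; yours is self-contained, essentially extends the Hoeffding--Bonferroni argument of Lemma \ref{highprob1} to the sub-Gaussian case, and yields an exponentially small tail $2\exp\{\log(p_nm_n)(1-a_n^2/(2\sigma_0^2c_{\Phi}^2))\}$, which is strictly stronger than what the lemma asserts. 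You are also right that ``$n\rightarrow 0$'' in the statement should read $n\rightarrow\infty$.
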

\begin{proof}
By the maximal inequality for sub-Gaussian random variables, for example, see Lemmas 2.2.1 and 2.2.2 in \cite{van1996weak} and application see lemma 2 of \cite{huang2010variable}, we have
$$ET_n\leq Cn^{-1}\sqrt{\log(p_nm_n)}\max_j\|\Phi_j\|_2$$
Then by assumption 1, we have
$$ET_n=O(1)n^{-1/2}\sqrt{p_nm_n}$$
Since $T_n\geq 0$, by Markov's inequality, we have
\begin{equation}
\prob\left(T_n\geq a_n\sqrt{\frac{\log(p_nm_n)}{n}}\right)\leq\frac{ET_n}{n^{-1/2}\sqrt{\log(p_nm_n)}}=\frac{C}{a_n}\rightarrow 0\ \text{as}\ n\rightarrow\infty
\end{equation}
\end{proof}

\begin{remark}
From the two lemmas we see that the difference between the bounded response case and the unbounded response case is the upper bound for the maximum of the random errors. For the bounded case, the error could be bounded by
$$r_n=C\sqrt{\frac{\log (p_nm_n)+t_n}{n}}$$ with any diverging sequence $t_n$. If we take $t_n=O(\log(p_nm_n))$, we have for a different $C$, the bounded response errors to be bounded by
$$r_n=C\sqrt{\frac{\log (p_nm_n)}{n}}$$
with probability converging to 1. For the unbounded response case, with probability converging to 1, we need a diverging sequence $a_n$ instead of a constant multiplied to the main term, i.e.,
$$r_n=a_n\sqrt{\frac{\log (p_nm_n)}{n}}$$
This difference is reflected on the choice of the tuning parameter $\lambda$.
\end{remark}

\noindent
\textbf{Proof of theorem \ref{screening1}}
\begin{proof}

First observe that due to the spline approximation, an error is bought into the model. Let $\theta=\sum_{j=1}^{p_n}f_j$ and $\theta^*=\sum_{j=1}^{p_n}f_{nj}$. By the proof of theorem 1 in \cite{huang2010variable}, we have
$$\|f_j-f_{nj}\|_{\infty}=O(m_n^{-d})$$
Therefore, we have
\begin{align*}
|\theta-\theta^*|\leq \|\sum_{j=1}^{p_n}(f_j-f_{nj})\|_{\infty}\leq\sum_{j=1}^{s_n}\|f_j-f_{nj}\|_{\infty}=O(s_nm_n^{-d})
\end{align*}
Use Taylor expansion on $b'(\theta)$ around $\theta^*$, we have
$$b'(\theta)-b'(\theta^*)=b''(\theta^{**})(\theta-\theta^*)$$
where $\theta^{**}$ lies between $\theta$ and $\theta^*$. By assumption 3, we have
\begin{equation}
|\mu_{y_i}-\mu_{y_i}^*|=|b'(\theta)-b'(\theta^*)|\leq c_1^{-1}|\theta-\theta^*|=O(s_nm_n^{-d}),\ i=1,...,n
\end{equation}
where $\mu_{y_i}^*$ is the mean of the $i^{th}$ observation evaluated at the spline approximated functions. Therefore, we have
$$\|\bmu_y-\bmu_y^*\|_{\infty}=O(s_nm_n^{-d})$$
As a direct result, we have
\begin{equation}
\label{splineapproxerr}
\frac{1}{n}\|\bmu_y-\bmu_y^*\|_2^2=O(s_n^2m_n^{-2d})
\end{equation}

We start with part (i). The proof of this part is similar to the proof of part (i) of
theorem 1 in \cite{huang2010variable}. But because of the non-identity link function, here we have to make some changes. By KKT conditions, a necessary and sufficient condition for $\hat{\bbeta}$ to be a minimiser of the target function is
\begin{equation}
\begin{dcases}
\frac{1}{n}\Phi_k^T(\boldy-\hat{\bmu}_{\boldy}^*)=\frac{\lambda_{n1}\hat{\bbeta}_k}{\|\hat{\bbeta}_k\|_2},\ \forall\ k\ s.t.\ \|\hat{\bbeta}_k\|_2>0\\
\frac{1}{n}\Phi_k^T(\boldy-\hat{\bmu}_{\boldy}^*)\in[-\lambda_{n1},\lambda_{n1}], \ \forall\ k\ s.t.\ \|\hat{\bbeta}_k\|_2=0
\end{dcases}
\end{equation}
where $\hat{\bmu}_{\boldy}^*$ is the mean of response approximated by splines and evaluated at the solution $\hat{\bbeta}$ and the second belonging relationship is element-wise. Let
$$s_k=\frac{\Phi_k^T(\boldy-\hat{\bmu}_{\boldy}^*)}{n\lambda_{n1}}$$
Then we have
\begin{equation}
\begin{dcases}
\|s_k\|_2=1,\ \forall\ k\ s.t.\ \|\hat{\bbeta}_k\|_2>0\\
\|s_k\|_2\leq 1,\ \forall\ k\ s.t.\ \|\hat{\bbeta}_k\|_2=0
\end{dcases}
\end{equation}
We consider the following subsets of $\{1,...,p\}$. Let $A_1$ be such that
\begin{equation}
\label{generalA1}
\left\{k:\|\hat{\bbeta}_k\|_2>0\right\}\subset A_1\subset\left\{k:\frac{1}{n}\Phi_k^T(\boldy-\hat{\bmu}_{\boldy}^*)=\frac{\lambda_{n1}\hat{\bbeta}_k}{\|\hat{\bbeta}_k\|_2}\right\}\cup\{1,...,s_n\}
\end{equation}
Let $A_2=\{1,...,p\}\backslash A_1$, $A_3=A_1\backslash T$, $A_4=A_1\cap T^c$, $A_5=A_2\backslash T^c$ and $A_6=A_2\cap T^c$. Therefore, the relationships are
\begin{center}
\begin{tabular}{ccc}
\hline
&$j\in T$&$j\in T^c$\\
$A_1$: selected $j$ and some $j\in T$&$A_3$&$A_4$\\
$A_2$: $j$ not in $A_1$ (includes unselected only)&$A_5$&$A_6$\\
\hline
\end{tabular}
\end{center}
Then we have
\begin{equation}
\label{kkt1}
\Phi_{A_1}^T(\boldy-\hat{\bmu}_{A_1}^*)=S_{A_1}
\end{equation}
where $S_{A_1}=(S_{K_1}^T,...,S_{K_{q_1}}^T)^T$, $S_{K_i}=n\lambda_{n1}s_{k_i}$ and $\hat{\bmu}_{A_1}^*=b'(\Phi_{A_1}\hat{\bbeta}_{A_1})$. Also from the inequality in KKT, we have
\begin{equation}
\label{kkt2}
-C_{A_2}\leq\Phi_{A_2}^T(\boldy-\hat{\bmu}_{A_1}^*)\leq C_{A_2}
\end{equation}
where $C_{A_2}=(C_{K_1}^T,...,C_{k_{q_2}}^T)^T$ and $C_{K_i}=n\lambda_{n1}\indicator_{\{\|\hat{\bbeta}_{K_i}\|_2=0\}}\cdot e_{m_n\times 1}$, where all the elements of $e$ are 1. Let $\bepsilon^*=\boldy-\bmu_{\boldy}^*$, then from (\ref{kkt1}) we have
$$\Phi_{A_1}^T(\bmu_{\boldy}^*+\bepsilon^*-\hat{\bmu}_{A_1}^*)=S_{A_1}$$
use Taylor expansion on $\bmu_{\boldy}^*$ around $\hat{\bmu}_{A_1}^*$, we have
$$\Phi_{A_1}^T\bSigma_1\Phi_{A_1}(\bbeta_{A_1}-\hat{\bbeta}_{A_1})+\Phi_{A_1}^T\bSigma_1\Phi_{A_2}\bbeta_{A_2}+\Phi_{A_1}^T\bepsilon^*=S_{A_1}$$
where $\bSigma_1=\bSigma(\btheta_1)$, $\btheta_1$ lies on the line segment joining $\Phi\bbeta$ and $\Phi_{A_1}\hat{\bbeta}_{A_1}$, and $\bSigma(\btheta)=\text{diag}(b''(\theta_1),...,b''(\theta_n))$ is the diagonal variance matrix evaluated at $\btheta$. From (\ref{kkt2}), we have
$$-C_{A_2}\leq\Phi_{A_2}^T\bSigma_1\Phi_{A_1}(\bbeta_{A_1}-\hat{\bbeta}_{A_1})+\Phi_{A_2}^T\bSigma_1\Phi_{A_2}\bbeta_{A_2}+\Phi_{A_2}^T\bepsilon^*\leq C_{A_2}$$
Let $\bSigma_{ij}=\Phi_{A_i}^T\bSigma(\btheta_1)\Phi_{A_j}/n$, we have
$$\bSigma_{11}(\bbeta_{A_1}-\hat{\bbeta}_{A_1})+\bSigma_{12}\bbeta_{A_2}=S_{A_1}$$
and
$$-C_{A_2}\leq\bSigma_{21}(\bbeta_{A_1}-\hat{\bbeta}_{A_1})+\bSigma_{22}\bbeta_{A_2}+\Phi_{A_2}^T\bepsilon^*\leq C_{A_2}$$
With our choice of $\lambda_{n1}$, the constants are sufficient large, by lemma 1 in \cite{wei2010consistent}, the eigenvalues of $\bSigma_{11}$ are bounded from below. Thus without loss of generality, we assume $\bSigma_{11}$ is invertible. Then we have
\begin{equation}
\label{kkt11}
\frac{\bSigma_{11}^{-1}S_{A_1}}{n}=\bbeta_{A_1}-\hat{\bbeta}_{A_1}+\bSigma_{11}^{-1}\bSigma_{12}\bbeta_{A_2}+\frac{\bSigma_{11}^{-1}}{n}\Phi_{A_1}^T\bepsilon^*
\end{equation}
and
\begin{equation}
\label{kkt22}\frac{\|\bSigma^{-1/2}(\bmu_y-\bmu_y^*)\|_2}{n}+
n\bSigma_{22}\bbeta_{A_2}-n\bSigma_{21}\bSigma{11}^{-1}\bSigma_{12}\bbeta_{A_2}\leq C_{A_2}-\Phi_{A_2}^T\bepsilon^*-\bSigma_{21}\bSigma_{11}^{-1}S_{A_1}+\bSigma_{21}\bSigma_{11}^{-1}\Phi_{A_1}^T\bepsilon^*
\end{equation}
Define
$$V_{1j}=\frac{1}{\sqrt{n}}\bSigma_{11}^{-1/2}Q_{A_j1}^TS_{A_j},\ j=1,3,4$$
and
$$w_k=\bSigma_1^{1/2}(\boldI-\boldP_1)\bSigma_1^{1/2}\Phi_{A_k}\bbeta_{A_k},k=2,...,6$$
where
$$\boldP_1=\bSigma^{1/2}\Phi_{A_1}(\Phi_{A_1}^T\bSigma\Phi_{A_1})^{-1}\Phi_{A_1}^T\bSigma^{1/2}$$
and $Q_{A_jk}$ is the matrix representing the selection of variables in $A_k$ from $A_j$.

Consider $j=4$. For any $k\in A_4$, we have $\|\hat{\bbeta}_k\|_2>0$, then $\|s_k\|_2^2=1$. Then we have $\|S_{A_4}\|_2^2=\sum_{k\in A_4}N(A_4)$, where $N(A_4)$ is the number of predictors in $A_4$. Thus
\begin{align*}
\|V_{14}\|_2^2&=\frac{1}{n}\|\bSigma_{11}^{-1/2}Q_{A_41}^TS_{A_4}\|_2^2\\
&\geq \frac{1}{n}c_1\|Q_{A_41}^TS_{A_4}\|_2^2\\
&=c_1n\sum_{k\in A_4}\|\lambda_{n1}s_k\|_2^2\\
&\geq c_1n\lambda_{n1}^2(q_1-s_n)
\end{align*}
That is
\begin{equation}
\label{numberofselectedbound}
(q_1-s_n)^+\leq\frac{\|V_{14}\|_2^2}{c_1n\lambda_{n1}^2}
\end{equation}
Then, we need to find a bound for $\|V_{14}\|_2^2$ and $q_1\leq (q_1-s_n)^++s_n$ will be bounded. Using (\ref{kkt11}) and consider
\begin{align*}
V_{14}^T(V_{14}+V_{13})&=S_{A_4}^TQ_{A_41}\frac{\bSigma_{11}^{-1}}{n}S_{A_1}\\
&=S_{A_4}^TQ_{A_41}(\bbeta_{A_1}-\hat{\bbeta}_{A_1}+\bSigma_{11}^{-1}\bSigma_{12}\bbeta_{A_2}+\frac{\bSigma_{11}^{-1}}{n}\Phi_{A_1}^T\bepsilon^*)\\
&=S_{A_4}^TQ_{A_41}\bSigma_{11}^{-1}\bSigma_{12}\bbeta_{A_2}+\frac{S_{A_4}^TQ_{A_41}\bSigma_{11}^{-1}}{n}\Phi_{A_1}^T\bepsilon^*+S_{A_4}^T(\bbeta_{A_4}-\hat{\bbeta}_{A_4})
\end{align*}
Observe $\bbeta_{A_4}=0$, and
$$S_{A_4}^T\hat{\bbeta}_{A_4}=\sum_{k\in A_4}\frac{\lambda_{n1}\hat{\bbeta}_{k}^T\hat{\bbeta}_{k}}{\|\hat{\bbeta}_{k}\|_2}=\sum_{k\in A_4}\lambda_{n1}\|\hat{\bbeta}_{k}\|_2>0$$
we have
$$V_{14}^T(V_{14}+V_{13})\leq S_{A_4}^TQ_{A_41}\bSigma_{11}^{-1}\bSigma_{12}\bbeta_{A_2}+\frac{S_{A_4}^TQ_{A_41}\bSigma_{11}^{-1}}{n}\Phi_{A_1}^T\bepsilon^*$$
On the other hand, by (\ref{kkt22}),
\begin{align*}
\|w_2\|_2^2&=\bbeta_{A_2}^T\Phi_{A_2}^T\bSigma_1^{1/2}(\boldI-\boldP_1)\bSigma_1(\boldI-\boldP_1)\bSigma_1^{1/2}\Phi_{A_2}\bbeta_{A_2}\\
&\leq c_1^{-1}\bbeta_{A_2}^T\Phi_{A_2}^T\bSigma_1^{1/2}(\boldI-\boldP_1)\bSigma_1^{1/2}\Phi_{A_2}\bbeta_{A_2}\\
&=c_1^{-1}\bbeta_{A_2}^T\Phi_{A_2}^T\bSigma_1\Phi_{A_2}\bbeta_{A_2}+c_1^{-1}\bbeta_{A_2}^T\Phi_{A_2}^T\bSigma_1\Phi_{A_1}(\Phi_{A_1}^T\bSigma_1\Phi_{A_1})^{-1}\Phi_{A_1}^T\bSigma_1\Phi_{A_2}\bbeta_{A_2}\\
&=c_1^{-1}\bbeta_{A_2}^T(n\bSigma_{22}\bbeta_{A_2}-n\bSigma_{21}\bSigma_{11}^{-1}\bSigma_{12}\bbeta_{A_2})\\
&\leq c_1^{-1}\bbeta_{A_2}^T(C_{A_2}-\Phi_{A_2}^T\bepsilon^*-\bSigma_{21}\bSigma_{11}^{-1}S_{A_1}+\bSigma_{21}\bSigma_{11}^{-1}\Phi_{A_1}^T\bepsilon^*)\\
&=c_1^{-1}\bbeta_{A_2}^TC_{A_2}-c_1^{-1}\bbeta_{A_2}^T(\Phi_{A_2}^T-\bSigma_{21}\bSigma_{11}^{-1}\Phi_{A_1}^T)\bepsilon^*-c_1^{-1}\bbeta_{A_2}^T\bSigma_{21}\bSigma_{11}^{-1}S_{A_1}\\
&=c_1^{-1}\bbeta_{A_2}^TC_{A_2}-c_1^{-1}\bbeta_{A_2}^T\bSigma_{21}\bSigma_{11}^{-1}S_{A_1}-c_1^{-1}w^T_2\bSigma^{-1}_1\bepsilon^*
\end{align*}
Then we have
$$V_{14}^T(V_{14}+V_{13})+c_1\|w_2\|_2^2\leq \left(\frac{S_{A_4}^TQ_{A_41}\bSigma_{11}^{-1}}{n}\Phi_{A_1}^T-w_2^T\bSigma_1^{-1}\right)\bepsilon^*-S_{A_3}^TQ_{A_31}\bSigma_{11}^{-1}\bSigma_{12}\bbeta_{A_2}+\bbeta_{A_2}^TC_{A_2}$$
Define
$$u=\frac{\Phi_{A_1}\bSigma_{11}^{-1}Q_{A_41}^TS_{A_4}/n-\bSigma_1^{-1}w_2}{\|\Phi_{A_1}\bSigma_{11}^{-1}Q_{A_41}^TS_{A_4}/n-\bSigma_1^{-1}w_2\|_2}$$
Observe
\begin{align*}
&\|\Phi_{A_1}^T\bSigma_{11}^{-1}Q_{A_41}^TS_{A_4}/n-\bSigma_1^{-1}w_2\|_2\\
\leq& 2(\|\Phi_{A_1}^T\bSigma_{11}^{-1}Q_{A_41}^TS_{A_4}/n\|_2^2+\|\bSigma_1^{-1}w_2\|_2^2)\\
\leq& 2\|\Phi_{A_1}^T\bSigma_{11}^{-1}Q_{A_41}^TS_{A_4}/n\|_2^2+2c_1^{-2}\|w_2\|_2^2\\
=&2\|V_{14}\|_2^2+2c_1^{-2}\|w_2\|_2^2
\end{align*}
Observe $c_1<c_1^{-1}$ implies $c_1<1$, then
\begin{align}
\label{boundv14w2v1413}
\|V_{14}\|_2^2+c_1\|w_2\|_2^2+V_{14}^TV_{13}\leq&(2c_1^{-2}\|V_{14}\|_2^2+2c_1^{-2}\|w_2\|_2^2)^{1/2}|u^T\bepsilon^*|\nonumber\\
&+\sqrt{n}\|V_{13}\|_2\|\bSigma_{11}^{-1/2}\bSigma_{12}\bbeta_{A_2}\|_2+\lambda_{n1}\|\bbeta_{A_5}\|_1
\end{align}
By (\ref{numberofselectedbound}), we have
\begin{align*}
\|V_{13}\|_2^2&=\frac{1}{n}\|\bSigma_{11}^{-1/2}Q_{A_31}^TS_{A_3}\|_2^2\\
&\leq c_1^{-1}\frac{\|Q_{A_31}S_{A_3}\|_2^2}{n}\\
&=c_1^{-1}\sum_{k\in A_3}\|\lambda_{n1}s_k\|_2^2\\
&\leq c_1^{-1}n\lambda_{n1}^2N(A_3)
\end{align*}
By (\ref{boundv14w2v1413}), we have
\begin{align}
&\|V_{14}\|_2^2+c_1\|w_2\|_2^2\nonumber\\
\leq&c_1^{-1}(2\|V_{14}\|_2^2+2\|w_2\|_2^2)^{1/2}|u^T\bepsilon^*|+\sqrt{c_1^{-1}n\lambda_{n1}^2N(A_3)}\|V_{14}\|_2\nonumber\\
&+\sqrt{c_1^{-1}n\lambda_{n1}^2N(A_3)}\|\bSigma_{11}^{-1/2}\bSigma_{12}\bbeta_{A_2}\|_2+\lambda_{n1}\|\bbeta_{A_5}\|_1
\end{align}
Define
$$B_1=\sqrt{c_1n\lambda_{n1}^2s_n}\ \text{and}\ B_2=\sqrt{c_1^{-1}n\lambda_{n1}^2s_n}$$
consider the event
$$\mathcal{E}=\left\{|u^T\bepsilon^*|^2\leq\frac{(|A_1|\vee m_n)c_1^2n\lambda_{n1}^2}{4m_n}=(|A_1|\vee m_n)\frac{c_1^3B_1^2}{4s_nm_n}\right\}$$
later we will show that this event holds with probability tending to 1. On the event $\mathcal{E}$, by (\ref{numberofselectedbound}), we have
$$\|V_{14}\|_2^2\geq\frac{q_1}{s_n}B_1^2-B_1^2$$
then
$$|u^T\bepsilon^*|^2\leq\frac{c_1^3q_1m_nB_1^2}{4s_nm_n}\leq\frac{c_1^3}{4}(\|V_{14}\|_2^2+B_1^2)$$
and we have
\begin{align*}
c_1^{-1}(2\|V_{14}\|_2^2+2\|w_2\|_2^2)^{1/2}|u^T\bepsilon^*|&\leq c_1^{-3}|u^T\bepsilon^*|^2+\frac{c_1^3}{4}c_1^{-2}(2\|V_{14}\|_2^2+2\|w_2\|_2^2)\\
&\leq \frac{1}{4}(\|V_{14}\|_2^2+B_1^2)+\frac{c_1^3}{4}c_1^{-2}(2\|V_{14}\|_2^2+2\|w_2\|_2^2)\\
&\leq \frac{3}{4}\|V_{14}\|_2^2+\frac{1}{4}B_1^2+\frac{c_1}{2}\|w_2\|_2^2
\end{align*}
Then we have
\begin{align*}
\|V_{14}\|_2^2+c_1\|w_2\|_2^2\leq&\frac{3}{4}\|V_{14}\|_2^2+\frac{1}{4}B_1^2+\frac{c_1}{2}\|w_2\|_2^2+\sqrt{c_1^{-1}n\lambda_{n1}^2N(A_3)}\|V_{14}\|_2\nonumber\\
&+\sqrt{c_1^{-1}n\lambda_{n1}^2N(A_3)}\|\bSigma_{11}^{-1/2}\bSigma_{12}\bbeta_{A_2}\|_2+\lambda_{n1}\|\bbeta_{A_5}\|_1
\end{align*}
i.e.
\begin{align*}
\|V_{14}\|_2^2+2c_1\|w_2\|_2^2\leq B_1^2+4\sqrt{c_1^{-1}n\lambda_{n1}^2N(A_3)}(\|V_{14}\|_2+\|\bSigma_{11}^{-1/2}\bSigma_{12}\bbeta_{A_2}\|_2)+\lambda_{n1}\|\bbeta_{A_5}\|_1
\end{align*}
Consider the set $A_1$ that contains all $\bbeta_k\neq 0$. We have $q_1\geq s_n$ and
\begin{equation}
\label{largeA1}
\left\{k:\|\hat{\bbeta}_k\|_2>0\ \text{or}\ k\notin T^c\right\}\subset A_1\subset\left\{k:\frac{1}{n}\Phi_k^T(\boldy-\hat{\bmu}_{\boldy}^*)=\frac{\lambda_{n1}\hat{\bbeta}_k}{\|\hat{\bbeta}_k\|_2}\ \text{or}\ k\notin T^c\right\}
\end{equation}
Then we have $A_5=\emptyset$, $N(A_3)=s_n\leq q_1$ and $\bbeta_{A_2}=\boldzero$. Then we have
$$\|V_{14}\|_2^2\leq B_1^2+4\sqrt{c_1^{-1}n\lambda_{n1}^2s_n}\|V_{14}\|_2=B_1^2+4B_2\|V_{14}\|_2$$
Use the truth that $x^2\leq c+2bx$ implies $x^2\leq 2c+4b^2$, we have
$$\|V_{14}\|_2^2\leq 2B_1^2+16B_2^2$$
Then we have from (\ref{numberofselectedbound}) that
\begin{equation*}
(q_1-s_n)^+\leq\frac{\|V_{14}\|_2^2}{c_1n\lambda_{n1}^2}\leq \frac{2B_1^2+16B_2^2}{c_1n\lambda_{n1}^2}=c_5s_n
\end{equation*}
where $c_5=(2c_1^2+16)/c_1^2$, i.e.
\begin{equation}
\label{boundnumberofvariables}
(q_1-s_n)^++s_n\leq (c_5+1)s_n
\end{equation}
We note that the constant $c_5$ only depends on $c_1$ and (\ref{largeA1}) simply requires larger $A_1$, (\ref{boundnumberofvariables}) holds for all $A_1$ satisfying (\ref{generalA1}). Note that (\ref{boundnumberofvariables}) holds if
\begin{equation}
\label{highprobabilitynumberofselection}
q_1\leq N(A_1\cup A_5)\leq\frac{n}{m_n}\ \text{and}\ |u^T\bepsilon^*|^2\leq\frac{(|A_1|\vee m_n)c_1^2n\lambda_{n1}^2}{4m_n}
\end{equation}
So it remains to show that (\ref{highprobabilitynumberofselection}) holds with probability tending to 1. Define
\begin{align}
x_m^*=&\max_{|A|=m}\max_{\|U_{A_k}\|_2=1,k=1,...,m}\left|\bepsilon^{*T}\right.\nonumber\\
&\left.\frac{\Phi_A(\Phi_A^T\bSigma_A\Phi_A)^{-1}\bar{S}_A-\bSigma_A^{-1/2}(\boldI-\bSigma_A^{1/2}\Phi_A(\Phi_A^T\bSigma_A\Phi_A)^{-1}\Phi_A^T\bSigma_A^{1/2})\bSigma_A^{1/2}\Phi\bbeta}{\|\Phi_A(\Phi_A^T\bSigma_A\Phi_A)^{-1}\bar{S}_A-\bSigma_A^{-1/2}(\boldI-\bSigma_A^{1/2}\Phi_A(\Phi_A^T\bSigma_A\Phi_A)^{-1}\Phi_A^T\bSigma_A^{1/2})\bSigma_A^{1/2}\Phi\bbeta\|_2}\right|
\end{align}
for $|A|=q_1=m\geq 0$, $\bar{S}_A=(\bar{S}_{A_1}^T,...,\bar{S}_{A_m}^T)^T$ where $\bar{S}_{A_k}=\lambda_{n1}U_{A_k}$, $\|U_{A_k}\|_2=1$ and $\bSigma_A$ is the variance matrix evaluated at some $\theta$ corresponding to the remainder of the Taylor expansion when the subset $A$ is considered. To simplify the notations, let $Q_A=\lambda_{n1}\Phi_A(\Phi_A^T\bSigma_A\Phi_A)^{-1}$ and $P_A=\bSigma_A^{1/2}\Phi_A(\Phi_A^T\bSigma_A\Phi_A)^{-1}\Phi_A^T\bSigma_A^{1/2}$, then we have
\begin{equation}
x_m^*=\max_{|A|=m}\max_{\|U_{A_k}\|_2=1,k=1,...,m}\left|\bepsilon^{*T}\frac{Q_AU_A-\bSigma_A^{-1/2}(\boldI-P_A)\bSigma_A^{1/2}\Phi\bbeta}{\|Q_AU_A-\bSigma_A^{-1/2}(\boldI-P_A)\bSigma_A^{1/2}\Phi\bbeta\|_2}\right|
\end{equation}

Define
$$\Omega^*_{m_0}=\{(U,\bepsilon^*):x_m^*\leq C\sqrt{(|A|\vee 1)m_n\log(p_nm_n)},\forall m=|A|\geq m_0\}$$
and
$$\Omega_{m_0}=\{(U,\bepsilon):x_m^{**}\leq C\sqrt{(|A|\vee 1)m_n\log(p_nm_n)},\forall m=|A|\geq m_0\}$$
for a large enough generic constant $C$, where
$$x_m^{**}=\max_{|A|=m}\max_{\|U_{A_k}\|_2=1,k=1,...,m}\left|\bepsilon^{T}\frac{Q_AU_A-\bSigma_A^{-1/2}(\boldI-P_A)\bSigma_A^{1/2}\Phi\bbeta}{\|Q_AU_A-\bSigma_A^{-1/2}(\boldI-P_A)\bSigma_A^{1/2}\Phi\bbeta\|_2}\right|$$
By triangle inequality and Cauchy-Schwarz inequality, we have
\begin{align*}
&\left|\bepsilon^{*T}\frac{Q_AU_A-\bSigma_A^{-1/2}(\boldI-P_A)\bSigma_A^{1/2}\Phi\bbeta}{\|Q_AU_A-\bSigma_A^{-1/2}(\boldI-P_A)\bSigma_A^{1/2}\Phi\bbeta\|_2}\right|\\
\leq &\left|\bepsilon^{T}\frac{Q_AU_A-\bSigma_A^{-1/2}(\boldI-P_A)\bSigma_A^{1/2}\Phi\bbeta}{\|Q_AU_A-\bSigma_A^{-1/2}(\boldI-P_A)\bSigma_A^{1/2}\Phi\bbeta\|_2}\right|+\|\btheta_n\|_2\\
\leq&\left|\bepsilon^{T}\frac{Q_AU_A-\bSigma_A^{-1/2}(\boldI-P_A)\bSigma_A^{1/2}\Phi\bbeta}{\|Q_AU_A-\bSigma_A^{-1/2}(\boldI-P_A)\bSigma_A^{1/2}\Phi\bbeta\|_2}\right|+Cn^{1/2}s_nm_n^{-d}\\
\leq&\left|\bepsilon^{T}\frac{Q_AU_A-\bSigma_A^{-1/2}(\boldI-P_A)\bSigma_A^{1/2}\Phi\bbeta}{\|Q_AU_A-\bSigma_A^{-1/2}(\boldI-P_A)\bSigma_A^{1/2}\Phi\bbeta\|_2}\right|+C\sqrt{(|A|\vee 1)m_n\log(p_nm_n)}
\end{align*}
Then we have
$$(U,\bepsilon)\in\Omega_{m_0}\ \Rightarrow\ (U,\bepsilon^*)\in\Omega_{m_0}^*\ \Rightarrow\ |u^T\bepsilon^*|^2\leq |x_m^*|^2\leq\frac{(|A_1|\vee m_n)c_1^2n\lambda_{n1}^2}{4m_n}\ \text{for\ }q_1\geq m_0\geq 0$$
Since $\epsilon_i$'s are sub-Gaussian random variables by assumption 2, we have
\begin{align*}
&1-\prob\left((U,\bepsilon)\in\Omega_q\right)\\
=&\prob\left(x_m^{**}>C\sqrt{(m\vee 1)m_n\log(p_nm_n)},\forall m=|A|\geq m_0\right)\\
\leq&\sum_{m=0}^{\infty}\prob\left(x_m^{**}>C\sqrt{(m\vee 1)m_n\log(p_nm_n)}\right)\\
\leq&\sum_{m=0}^{\infty}{p_n\choose m}\prob\left(\left|\bepsilon^{T}\frac{Q_AU_A-\bSigma_A^{-1/2}(\boldI-P_A)\bSigma_A^{1/2}\Phi\bbeta}{\|Q_AU_A-\bSigma_A^{-1/2}(\boldI-P_A)\bSigma_A^{1/2}\Phi\bbeta\|_2}\right|>C\sqrt{(m\vee 1)m_n\log(p_nm_n)}\right)\\
\leq&2\sum_{m=0}^{\infty}{p_n\choose m}\exp\left(-C(m\vee 1)m_n\log(p_nm_n)\right)\\
=&2(p_nm_n)^{-Cm_n}+2\sum_{m=1}^{\infty}{p_n\choose m}(p_nm_n)^{-Cmm_n}\\
\leq&2(p_nm_n)^{-Cm_n}+2\sum_{m=1}^{\infty}\frac{1}{m!}\left(\frac{p_n}{(p_nm_n)^{Cm_n}}\right)^m\\
=&2(p_nm_n)^{-Cm_n}+2\exp\left(\frac{p_n}{(p_nm_n)^{Cm_n}}\right)-2 \rightarrow 0\ \text{as}\ n\rightarrow\infty
\end{align*}
Therefore, the proof of part (i) is complete.

Then we prove part (ii). Consider the bounded response case. For a sequence $N_n$ such that $\|\hat{\bbeta}-\bbeta^0\|_2\leq N_n$, define $t=N_n/(N_n+\|\hat{\bbeta}-{\bbeta^0}\|_2)$, then consider the convex combination $\bbeta^*=t\hat{\bbeta}+(1-t){\bbeta^0}$. We have $\bbeta^*-{\bbeta^0}=t(\hat{\bbeta}-{\bbeta^0})$, which implies
\begin{equation}
\label{an}
\|\bbeta^*-{\bbeta^0}\|_2=t\|\hat{\bbeta}-{\bbeta^0}\|_2=\frac{N_n\|\hat{\bbeta}-{\bbeta^0}\|_2}{N_n+\|\hat{\bbeta}-{\bbeta^0}\|_2}\leq N_n
\end{equation}
Recall the log likelihood function
$$l_n(\bbeta)=\frac{1}{n}\sum_{i=1}^n\left[y_i\left(\alpha+{\bbeta}^T\Phi_i\right)-b\left(\alpha+{\bbeta}^T\Phi_i\right)\right]$$
\begin{align}
\label{grplassothmtaylor}
l_n(\bbeta^*)&=l_n({\bbeta^0})+\frac{1}{n}\sum_{i=1}^n\left[y_i\Phi_i-\mu_{y_i}^*\Phi_i\right]^T(\bbeta^*-{\bbeta^0})\nonumber\\
&-\frac{1}{2n}\sum_{i=1}^n(\bbeta^*-{\bbeta^0})^T\Phi_i^Tb''(\alpha+{\bbeta^{**}}^T\Phi_i)\Phi_i(\bbeta^*-{\bbeta^0})\nonumber\\
&=l_n({\bbeta^0})+\frac{(\bbeta^*-{\bbeta^0})^T\Phi^T(\boldy-\bmu_y^*)}{n}-\frac{1}{2n}(\bbeta^*-{\bbeta^0})^T\Phi^T\bSigma(\bbeta^{**})\Phi(\bbeta^*-{\bbeta^0})
\end{align}
where ${\bbeta}^{**}$ lines on the line joining $\bbeta^*$ and ${\bbeta^0}$, and
$$\bSigma(\bbeta^{**})=\text{diag}\left(b''(\alpha+{\bbeta^{**}}^T\Phi_1),...b''(\alpha+{\bbeta^{**}}^T\Phi_n)\right)$$
is the variance matrix of response when the coefficients take value on ${\bbeta}^{**}$. On the other hand, by convexity of the log likelihood function,
$$l_n({\bbeta^*})=l_n(t\hat{\bbeta}+(1-t){\bbeta^0})\geq tl_n(\hat{\bbeta})+(1-t)l_n(\bbeta^0)$$
by norm inequality, we have
$$\sum_{j=1}^{p_n}\|\bbeta^*_j\|_2=\sum_{j=1}^{p_n}\|t\hat{\bbeta}_j+(1-t){\bbeta^0_j}\|_2\leq \sum_{j=1}^{p_n}(t\|\hat{\bbeta}_j\|_2+(1-t)\|\bbeta^0_j\|_2)$$
joining the two inequalities above and by the definition of $\hat{\bbeta}$ gives
$$l_n({\bbeta^*})-\lambda_{n1}\sum_{j=1}^{p_n}\|\bbeta^*_j\|_2\geq tl_n(\hat{\bbeta})+(1-t)l_n(\bbeta^0_j)-\lambda_{n1}\sum_{j=1}^{p_n}(t\|\hat{\bbeta}_j\|_2+(1-t)\|{\bbeta^0_j}\|_2)\geq l_n(\bbeta^0)-\lambda_{n1}\sum_{j=1}^{p_n}\|\bbeta^0_j\|_2$$
which implies
\begin{equation}
\label{lhlb}
l_n({\bbeta^*})-l_n({\bbeta^0})\geq\lambda_{n1}\sum_{j=1}^{p_n}\|\bbeta^*_{j}\|_2-\lambda_{n1}\sum_{j=1}^{p_n}\|\bbeta^0_{j}\|_2
\end{equation}
By (\ref{grplassothmtaylor}) and (\ref{lhlb}) together we have
\begin{equation*}
\lambda_{n1}\sum_{j=1}^{p_n}\left(\|\bbeta^*_{j}\|_2-\|\bbeta^0_{j}\|_2\right)\leq\frac{(\bbeta^*-{\bbeta^0})^T\Phi^T(\boldy-\bmu_{\boldy}^*)}{n}-\frac{1}{2n}(\bbeta^*-{\bbeta^0})^T\Phi^T\bSigma(\bbeta^{**})\Phi(\bbeta^*-{\bbeta^0})
\end{equation*}
and move one term to the left hand side, we have
\begin{align}
\label{beforerhsbd}
&\frac{1}{2n}(\bbeta^*-{\bbeta^0})^T\Phi^T\bSigma(\bbeta^{**})\Phi(\bbeta^*-{\bbeta^0})\nonumber\\
\leq&\frac{\left(\bbeta^*-{\bbeta^0}\right)^T\Phi^T\left(\boldy-\bmu_y^*\right)}{n}+\lambda_{n1}\sum_{j=1}^{p_n}\left(\|\bbeta^0_{j}\|_2-\|\bbeta^*_{j}\|_2\right)\nonumber\\
=&\frac{\left(\bbeta^*-{\bbeta^0}\right)^T\Phi^T\left(\boldy-\bmu_y\right)}{n}+\frac{\left(\bbeta^*-{\bbeta^0}\right)^T\Phi^T\left(\bmu_y^*-\bmu_y\right)}{n}+\lambda_{n1}\sum_{j=1}^{p_n}\left(\|\bbeta^0_{j}\|_2-\|\bbeta^*_{j}\|_2\right)\nonumber\\
\end{align}
We have for the second term
\begin{align}
\label{rhsbd}
&\frac{\left(\bbeta^*-{\bbeta^0}\right)^T\Phi^T\left(\bmu_y^*-\bmu_y\right)}{n}\nonumber\\
=&\frac{\left(\bbeta^*-{\bbeta^0}\right)^T\Phi^T\bSigma(\bbeta^{**})^{1/2}\bSigma(\bbeta^{**})^{-1/2}\left(\bmu_y^*-\bmu_y\right)}{n}\nonumber\\
\leq&\frac{\|\bSigma(\bbeta^{**})^{1/2}\Phi\left(\bbeta^*-{\bbeta^0}\right)\|_2\|\bSigma(\bbeta^{**})^{-1/2}\left(\bmu_y^*-\bmu_y\right)\|_2}{n}\nonumber\\
\leq&\frac{(\bbeta^*-{\bbeta^0})^T\Phi^T\bSigma(\bbeta^{**})\Phi(\bbeta^*-{\bbeta^0})}{4n}+\frac{\|\bSigma(\bbeta^{**})^{-1/2}\left(\bmu_y^*-\bmu_y\right)\|_2^2}{n}\nonumber\\
\leq&\frac{(\bbeta^*-{\bbeta^0})^T\Phi^T\bSigma(\bbeta^{**})\Phi(\bbeta^*-{\bbeta^0})}{4n}+c_1d_n
\end{align}
where $d_n=O(s_n^2m_n^{-2d})$, the first inequality follows from Cauchy-Schwarz inequality, the second inequality follows from the identity $uv\leq u^2/4+v^2$, and the third inequality follow from assumption 3 and (\ref{splineapproxerr}).
Then joining (\ref{beforerhsbd}) and (\ref{rhsbd}), we have
\begin{equation}
\label{afterrhsbd}
\frac{(\bbeta^*-{\bbeta^0})^T\Phi^T\bSigma(\bbeta^{**})\Phi(\bbeta^*-{\bbeta^0})}{4n}\leq \frac{\left(\bbeta^*-{\bbeta^0}\right)^T\Phi^T\left(\boldy-\bmu_y\right)}{n}+\lambda_{n1}\sum_{j=1}^{p_n}\left(\|\bbeta^0_{j}\|_2-\|\bbeta^*_{j}\|_2\right)+c_1d_n
\end{equation}

For the first term on the right hand side of (\ref{afterrhsbd}), we have
\begin{align}
\label{rhsbd2}
&\frac{\left(\bbeta^*-{\bbeta^0}\right)^T\Phi^T\left(\boldy-\bmu_y\right)}{n}\nonumber\\
=&\frac{\left(\bbeta^*-{\bbeta^0}\right)^T\Phi^T\bSigma(\bbeta^{**})^{1/2}\bSigma(\bbeta^{**})^{-1/2}\left(\boldy-\bmu_y\right)}{n}\nonumber\\
\leq&\frac{(\bbeta^*-{\bbeta^0})^T\Phi^T\bSigma(\bbeta^{**})\Phi(\bbeta^*-{\bbeta^0})}{8n}+\frac{2\|\bSigma(\bbeta^{**})^{-1/2}\left(\boldy-\bmu_y\right)\|_2^2}{n}
\end{align}
where the inequality is by the identity $a^Tb\leq \|a\|_2^2/8+2\|b\|_2^2$. Joining (\ref{afterrhsbd2}) and (\ref{rhsbd2}), we have
\begin{equation}
\frac{(\bbeta^*-{\bbeta^0})^T\Phi^T\bSigma(\bbeta^{**})\Phi(\bbeta^*-{\bbeta^0})}{8n}\leq\frac{2\|\bSigma(\bbeta^{**})^{-1/2}\left(\boldy-\bmu_y\right)\|_2^2}{n}+\lambda_{n1}\sum_{j=1}^{p_n}\left(\|\bbeta^0_{j}\|_2-\|\bbeta^*_{j}\|_2\right)+c_1d_n
\end{equation}
By remark \ref{REremark}, we have
\begin{equation}
\label{afterrhsbd2}
\frac{\gamma_0c_1\gamma_2^{2s_n}m_n^{-1}}{8}\|\bbeta^*-\bbeta^0\|_2^2\leq\frac{2\|\bSigma(\bbeta^{**})^{-1/2}\left(\boldy-\bmu_y\right)\|_2^2}{n}+\lambda_{n1}\sum_{j=1}^{p_n}\left(\|\bbeta^0_{j}\|_2-\|\bbeta^*_{j}\|_2\right)+c_1d_n
\end{equation}
Observe that
\begin{align*}
\|\bSigma(\bbeta^{**})^{-1/2}\left(\boldy-\bmu_y\right)\|_2^2&\leq c_1^{-1}\|\boldy-\bmu_y\|_2^2\\
&\leq \frac{c_1^{-1}m_n}{\gamma_0\gamma_2^{2s_n}}\|\Phi^T(\boldy-\bmu_y)\|_2^2
\end{align*}
Then by lemma \ref{highprob1}, we have
\begin{equation}
\label{afterrhsbd3}
\frac{\gamma_0c_1\gamma_2^{2s_n}m_n^{-1}}{8}\|\bbeta^*_{\{T\cup\hat{T}\}}-\bbeta^0_{\{T\cup\hat{T}\}}\|_2^2\leq O_P\left(s_nm_n\frac{\log(p_nm_n)}{n\gamma_2^{2s_n}}\right)+\lambda_{n1}\sum_{j=1}^{p_n}\left(\|\bbeta^0_{j}\|_2-\|\bbeta^*_{j}\|_2\right)+O(s_n^2m_n^{-2d})
\end{equation}
Observe that
\begin{align}
\label{rhsbd3}
&\lambda_{n1}\sum_{j=1}^{p_n}\left(\|\bbeta^0_{j}\|_2-\|\bbeta^*_{j}\|_2\right)\nonumber\\
\leq&\lambda_{n1}\sum_{j\in T\cup\hat{T}}\left\|\bbeta^0_j-\bbeta^*_j\right\|_2\nonumber\\
\leq&\lambda_{n1}\sqrt{s_n}\left\|\bbeta^*_{\{T\cup\hat{T}\}}-\bbeta^0_{\{T\cup\hat{T}\}}\right\|_2\nonumber\\
\leq&\frac{\gamma_0c_1\gamma_2^{2s_n}m_n^{-1}}{16}\left\|\bbeta^*_{\{T\cup\hat{T}\}}-\bbeta^0_{\{T\cup\hat{T}\}}\right\|_2^2+\frac{4\lambda_{n1}^2s_n}{\gamma_0c_1\gamma_2^{2s_n}m_n^{-1}}
\end{align}
where the first two inequalities are by norm inequality, and the third inequality is by the identity $a^Tb\leq \|a\|_2^2+\|b\|_2^2/4$. Joining (\ref{afterrhsbd3}) and (\ref{rhsbd3}), we have
\begin{equation}
\label{afterrhsbd4}
\left\|\bbeta^*-\bbeta^0\right\|_2^2=O_P\left(s_n\gamma_2^{-2s_n}\frac{m_n^2\log(p_nm_n)}{n}\right)+O(\lambda_{n1}^2m_n^2s_n\gamma_2^{-2s_n})+O(s_n^2m_n^{1-2d}\gamma_2^{-2s_n})
\end{equation}
For some $N_n$ such that
$$\|{\bbeta}^*-{\bbeta^0}\|_2\leq N_n/2$$
By definition of ${\bbeta}^{*}$, we have
$$\|{\bbeta}^{*}-{\bbeta^0}\|_2=\frac{N_n}{N_n+\|\hat{\bbeta}-{\bbeta^0}\|_2}\|\hat{\bbeta}-{\bbeta^0}\|_2\leq\frac{N_n}{2}$$
The inequality above implies
$$\|\hat{\bbeta}-{\bbeta^0}\|_2\leq N_n$$
Therefore,
$$\|\hat{\bbeta}-{\bbeta^0}\|_2^2=O_P\left(s_n\gamma_2^{-2s_n}\frac{m_n^2\log(p_nm_n)}{n}\right)+O(\lambda_{n1}^2m_n^2s_n\gamma_2^{-2s_n})+O(s_n^2m_n^{1-2d}\gamma_2^{-2s_n})$$
In the unbounded response case, the only difference that we have to make is in (\ref{rhsbd2}), we have
\begin{align}
\label{rhsbd4}
&\frac{\left(\bbeta^*-{\bbeta^0}\right)^T\Phi^T\left(\boldy-\bmu_y\right)}{n}\nonumber\\
\leq&\frac{\gamma_0c_1}{8}\|\bbeta^*_{\{T\cup\hat{T}\}}-\bbeta^0_{\{T\cup\hat{T}\}}\|_2^2+O_P\left(s_nm_na_n\frac{\log(p_nm_n)}{n}\right)
\end{align}
where the convergence rate is by lemma \ref{highprob2}. Then with the choice of $\lambda_{n1}$ for this case, we have
$$\|\hat{\bbeta}-{\bbeta^0}\|_2^2=O_P\left(s_n\gamma_2^{-2s_n}\frac{m_n^2\log(p_nm_n)}{n}\right)+O(\lambda_{n1}^2m_n^2s_n\gamma_2^{-2s_n})+O(s_n^2m_n^{1-2d}\gamma_2^{-2s_n})$$
Part (iii) is a direct result of part (ii). By assumption 4, we have $\|f_j\|_2\geq c_{f,n}>0$, and we have
$$\|f_{nj}\|_2\geq\|f_j\|_2-\|f_j-f_{nj}\|_2\geq c_{f,n}-O(m_n^{-d})\geq\frac{1}{2}c_{f,n}$$
for large $n$. By the properties of spline in \cite{de2001practical}, see for example \cite{stone1986dimensionality} and \cite{huang2010variable}, there exist positive constants $c_1$ and $c_2$ such that
$$c_1m_n^{-1}\|\bbeta^0_{j}\|_2^2\leq\|f_{nj}\|_2\leq c_2m_n^{-1}\|\bbeta^0_{j}\|_2$$
Then we have $\|\bbeta^0_{j}\|_2^2\geq c_2^{-1}m_n\|f_{nj}\|_2^2\geq 0.25c_2^{-1}m_nc_{f,n}^2$. Suppose there is a $j\in T$ such that $\|\hat{\bbeta}_j\|_2=0$, then we have
$$\|\bbeta^0_{j}\|_2\geq 0.25c_2^{-1}m_nc_{f,n}^2$$
which is a contradiction to the result in (ii) and the theorem assumption. Therefore, part (iii) follows.
\end{proof}

\noindent
\textbf{Proof of theorem \ref{selection1}}
\begin{proof}
We start with part (i). To prove part (i), it's equivalent to prove that the selection is done as it is performed right on the active set, and none of the nonzero components are dropped with probability tending to 1. Let
$$\hat{\bbeta}_{NZ}=\argmin_{\bbeta\in\R^{p_nm_n}:\bbeta_{T^c}=0}L_a(\bbeta;\lambda_{n2})$$
be the adaptive group lasso estimator restricted to the true nonzero components. First we show that with probability converging to 1, $\hat{\bbeta}_{NZ}$ is the solution to minimizing (\ref{adagrouplasso}), i.e., with probability converging to 1, the minimiser of (\ref{adagrouplasso}) is $\hat{\bbeta}_{NZ}$. Note that the adaptive group lasso is a convex optimization problem with affine constraints, therefore the KKT conditions are necessary and sufficient. The KKT conditions for a vector $\bbeta\in\R^{p_nm_n}$ to be the solution of (\ref{adagrouplasso}) is
\begin{equation}
\label{adaKKT}
\begin{dcases}
\frac{1}{n}\Phi_j^T(\boldy-\bmu^*)=\lambda_{n2}w_{nj}\frac{\bbeta_{j}}{\|\bbeta_{j}\|_2},\ \text{if}\ \|\bbeta_{j}\|_2>0\\
\|\frac{1}{n}\Phi_j^T(\boldy-\bmu^*)\|_2\leq\lambda_{n2}w_{nj},\ \text{if}\ \|\bbeta_{j}\|_2=0
\end{dcases}
\end{equation}
where $\bmu^*=b'(\Phi\bbeta)$. It is sufficient to show that
$$\prob\left(\hat{\bbeta}_{NZ}\ \text{satisfies}\ (\ref{adaKKT})\right)\rightarrow 1$$
Note that for any $j\in T$, we have the KKT conditions for $\hat{\bbeta}_{NZ}$ that
\begin{equation}
\begin{dcases}
\frac{1}{n}\Phi_j^T(\boldy-\hat{\bmu}_{NZ}^*)=\lambda_{n2}w_{nj}\frac{\hat{\bbeta}_{NZj}}{\|\hat{\bbeta}_{NZj}\|_2},\ \text{if}\ \|\bbeta_{j}\|_2>0,\ j\in T\\
\|\frac{1}{n}\Phi_j^T(\boldy-\hat{\bmu}_{NZ}^*)\|_2\leq\lambda_{n2}w_{nj},\ \text{if}\ \|\bbeta_{j}\|_2=0,\ j\in T
\end{dcases}
\end{equation}
which are the equality condition in (\ref{adaKKT}) and part of the inequality condition in (\ref{adaKKT}). Therefore, it suffices to show that
\begin{equation}
\prob\left(\|\frac{1}{n}\Phi_j^T(\boldy-\hat{\bmu}_{NZ}^*)\|_2\leq\lambda_{n2}w_{nj},\ \forall\ j\notin T\right)\rightarrow 1
\end{equation}
This is equivalent to show that
\begin{equation}
\prob\left(\|\frac{1}{n}\Phi_j^T(\boldy-\hat{\bmu}_{NZ}^*)\|_2>\lambda_{n2}w_{nj},\ \exists\ j\notin T\right)\rightarrow 0
\end{equation}
Use Taylor expansion on $\frac{1}{n}\Phi_j^T(\boldy-\hat{\bmu}_{NZ}^*)$, we have
$$\frac{1}{n}\Phi_j^T(\boldy-\hat{\bmu}_{NZ}^*)=\frac{1}{n}\Phi_j(\boldy-\bmu_y)+\frac{1}{n}\Phi_j^T(\bmu_y-b'(\Phi\bbeta^0))+\frac{1}{n}\Phi_j^T\bSigma\Phi(\hat{\bbeta}_{NZ}-\bbeta^0)$$
where $\bSigma$ is the variance matrix evaluated at some $\bbeta^*$ located on the line segment joining $\bbeta^0$ and $\hat{\bbeta}_{NZ}$. Then we have
\begin{align*}
&\prob\left(\|\frac{1}{n}\Phi_j^T(\boldy-\hat{\bmu}_{NZ}^*)\|_2>\lambda_{n2}w_{nj},\ \exists\ j\notin T\right)\\
\leq&\prob\left(\|\frac{1}{n}\Phi_j(\boldy-\bmu_y)\|_2>\frac{\lambda_{n2}w_{nj}}{3},\ \exists\ j\notin T\right)+\prob\left(\|\frac{1}{n}\Phi_j^T(\bmu_y-b'(\Phi\bbeta^0))\|_2>\frac{\lambda_{n2}w_{nj}}{3},\ \exists\ j\notin T\right)\\
&+\prob\left(\|\frac{1}{n}\Phi_j^T\bSigma\Phi(\hat{\bbeta}_{NZ}-\bbeta^0)\|_2>\frac{\lambda_{n2}w_{nj}}{3},\ \exists\ j\notin T\right)\\
\equiv& P_1+P_2+P_3
\end{align*}
Now let's consider $P_1$. By assumption 3, the errors $y_i-\mu_{y_i}$'s are sub-Gaussian. For bounded responses, we have by lemma \ref{highprob1} and assumption 6 that
\begin{align*}
P_1&=\prob\left(\|\frac{1}{n}\Phi_j^T(\boldy-\hat{\bmu}_{NZ}^*)\|_2>\lambda_{n2}w_{nj},\ \exists\ j\notin T\right)\\
&\leq\prob\left(\|\frac{1}{n}\Phi_j^T(\boldy-\hat{\bmu}_{NZ}^*)\|_2>C\lambda_{n2}r_n,\ \exists\ j\notin T\right)+o(1)\\
&=\prob\left(\max_{j\notin T}\|\frac{1}{n}\Phi_j^T(\boldy-\hat{\bmu}_{NZ}^*)\|_2>C\lambda_{n2}r_n\right)+o(1)\\
&\leq\prob\left(\max_{j\notin T}\|\frac{1}{n}\Phi_j^T(\boldy-\hat{\bmu}_{NZ}^*)\|_2>C\lambda_{n2}r_n|\max_{j\notin T}\|\frac{1}{n}\Phi_j^T(\boldy-\hat{\bmu}_{NZ}^*)\|_2\leq Cn^{-1/2}\sqrt{\log(s_n^*m_n)}\right)+o(1)\\
&\rightarrow 0\ as\ n\rightarrow\infty
\end{align*}
By lemma \ref{highprob2}, we have
\begin{equation}
E\left(\max_{j\notin T,k=1,...,m_n}\left\|\frac{1}{n}\Phi_{jk}^T(\boldy-\bmu_y)\right\|_2\right)\leq c_6n^{-1/2}\sqrt{\log(s_n^*m_n)}
\end{equation}
for some constant $c_6$. Observe that by assumption 5, we have $w_{nj}=O_P(r_n)\leq Cr_n$ for some general constant $C$. Then we have by Markov's inequality and assumption 6 that
\begin{align*}
P_1&=\prob\left(\|\frac{1}{n}\Phi_j^T(\boldy-\hat{\bmu}_{NZ}^*)\|_2>\lambda_{n2}w_{nj},\ \exists\ j\notin T\right)\\
&\leq\prob\left(\|\frac{1}{n}\Phi_j^T(\boldy-\hat{\bmu}_{NZ}^*)\|_2>C\lambda_{n2}r_n,\ \exists\ j\notin T\right)+o(1)\\
&=\prob\left(\max_{j\notin T}\|\frac{1}{n}\Phi_j^T(\boldy-\hat{\bmu}_{NZ}^*)\|_2>C\lambda_{n2}r_n\right)+o(1)\\
&\leq\frac{E\left(\max_{j\notin T,k=1,...,m_n}\left\|\frac{1}{n}\Phi_{jk}^T(\boldy-\bmu_y)\right\|_2\right)}{C\lambda_{n2}r_n}+o(1)\\
&\leq\frac{c_6\sqrt{\log(s_n^*m_n)}}{Cn^{1/2}\lambda_{n2}r_n}+o(1)\rightarrow 0\ as\ n\rightarrow\infty
\end{align*}
Then we consider $P_2$. We have shown that
$$\frac{1}{n}\|\bmu_y-\bmu_y^*\|_2^2=O(s_n^2m_n^{-2d})$$
This implies that
$$\frac{1}{\sqrt{n}}\|\bmu_y-\bmu_y^*\|_2=O(s_nm_n^{-d})$$
Then by assumption 1,
\begin{align*}
&\max_{j\notin T}\left\|\frac{1}{n}\Phi_j(\bmu_y-\bmu_y^*)\right\|_2\\
\leq&Cm_n^{-1/2}\frac{1}{\sqrt{n}}\left\|\bmu_y-\bmu_y^*\right\|_2\\
=&O(s_nm_n^{-d-1/2})
\end{align*}
By assumption 6, we have $P_2\rightarrow 0$ as $n\rightarrow\infty$. Next, we look at $P_3$. By the definition of $\hat{\bbeta}_{NZ}$, we have by norm inequality
$$\frac{1}{n}\Phi_j^T\bSigma\Phi(\hat{\bbeta}_{NZ}-\bbeta^0)=\frac{1}{n}\Phi_j^T\bSigma\Phi_T(\hat{\bbeta}_{NZT}-\bbeta^0_T)$$
The MLE on the true nonzero set has a rate of convergence $\sqrt{s_nm_n/n}$. The penalised solution has been proved to be close to the MLE asymptotically (\cite{zhang2008sparsity}; \cite{fan2001variable}; \cite{lv2009unified}). Knowing the true nonzero set, the rate of convergence of $\hat{\bbeta}_{NZ}$ is $\sqrt{s_nm_n/n}$. Then we have
\begin{align*}
P_3=&\prob\left(\left\|\frac{1}{n}\Phi_j^T\bSigma\Phi_T(\hat{\bbeta}_{NZT}-\bbeta^0_T)\right\|_2>\frac{\lambda_{n2}w_{nj}}{3},\ \exists\ j\notin T\right)\\
\leq&\prob\left(\left\|\frac{1}{n}\Phi_j^T\bSigma\Phi_T(\hat{\bbeta}_{NZT}-\bbeta^0_T)\right\|_2>C\lambda_{n2}r_n,\ \exists\ j\notin T\right)+o(1)\\
\leq&\prob\left(\max_{j\notin T}\left\|\frac{1}{n}\Phi_j^T\bSigma\Phi_T\right\|_2>\frac{C\lambda_{n2}r_n}{a_n\sqrt{s_nm_n/n}}\right)+\prob\left(\left\|\hat{\bbeta}_{NZT}-\bbeta^0_{T}\right\|_2>a_n\sqrt{\frac{s_nm_n}{n}}\right)+o(1)\\
\rightarrow&0\ \text{as}\ n\rightarrow\infty
\end{align*}
for any diverging sequence $a_n$, where the first probability in the last step goes to 0 by assumption 1 that the left hand side is of order $m_n^{-1/2}$ and assumption 6. The second probability goes to 0 by the rate of convergence of $\hat{\bbeta}_{NZT}$.

Therefore, we have that $\hat{\bbeta}_{NZ}$ is our adaptive group lasso solution with probability converging to 1. The components selected by adaptive group lasso is asymptotically at most those which are actually nonzero. Then we want to prove that the true nonzero components are all selected with probability converging to 1. By our assumptions, we have
\begin{align*}
\min_{j\in T}\|\hat{\bbeta}_{NZj}\|_2&\geq\min_{j\in T}\|\bbeta_{j}^0\|_2-\|\hat{\bbeta}_{NZj}-\bbeta_j^0\|_2\\
&\geq c_2^{-1/2}m_n^{1/2}c_{f,n}-o_P(1)\\
&>0
\end{align*}
Therefore, none of the true nonzero components are estimated as zero. Combining the two results above, we have that with probability converging to 1, the components selected by the adaptive group lasso are exactly the true nonzero components, i.e.,
$$\prob\left(\hat{\bbeta}_{AGL}\overset{0}{=}\bbeta^0\right)\rightarrow 1\ \text{as}\ n\rightarrow\infty$$
Part (i) is proved. Then we look at part (ii), where based on the result in part (i), we only consider the high probability event that the selection of the adaptive group lasso estimator is perfect. Similar to part (ii) of theorem \ref{screening1}, we consider a convex combination of $\bbeta^0$ and $\hat{\bbeta}_{AGL}$
$$\bbeta^*=t\hat{\bbeta}_{AGL}+(1-t)\bbeta^0$$
where $t=N_n/(N_n+\|\hat{\bbeta}_{AGL}-\bbeta^0\|_2)$ for some sequence $N_n$. Similar to (\ref{beforerhsbd}), we have
\begin{equation}
\begin{split}
\frac{1}{2n}(\bbeta^*_T-\bbeta^0_T)^T&\Phi_T^T\bSigma\Phi_T(\bbeta^*_T-\bbeta^0_T)\leq\frac{(\bbeta^*_T-\bbeta^0_T)^T\Phi^T_T(\boldy-\bmu_y)}{n}\\
&+\frac{(\bbeta^*_T-\bbeta^0_T)^T\Phi^T_T(\bmu_y-\bmu_y^*)}{n}+\lambda_{n2}\sum_{j=1}^{s_n}w_{nj}(\|\bbeta^0\|_2-\|\bbeta^*\|_2)
\end{split}
\end{equation}
Then by the fact that $|a^Tb|\leq\|a\|_2^2+\|b\|_2^2/4$, we have
\begin{align*}
&\frac{1}{2n}(\bbeta^*_T-\bbeta^0_T)^T\Phi_T^T\bSigma\Phi_T(\bbeta^*_T-\bbeta^0_T)\\
\leq&\frac{(\bbeta^*_T-\bbeta^0_T)^T\Phi^T_T(\boldy-\bmu_y)}{n}+\frac{1}{4n}(\bbeta^*_T-\bbeta^0_T)^T\Phi_T^T\bSigma\Phi_T(\bbeta^*_T-\bbeta^0_T)\\
&+\frac{\|\bSigma^{-1/2}(\bmu_y-\bmu_y^*)\|_2^2}{n}+\lambda_{n2}\sum_{j=1}^{s_n}w_{nj}(\|\bbeta^0\|_2-\|\bbeta^*\|_2)
\end{align*}
Then by (\ref{splineapproxerr}),
$$\frac{1}{4n}(\bbeta^*_T-\bbeta^0_T)^T\Phi_T^T\bSigma\Phi_T(\bbeta^*_T-\bbeta^0_T)\leq\frac{(\bbeta^*_T-\bbeta^0_T)^T\Phi^T_T(\boldy-\bmu_y)}{n}+O(s_n^2m_n^{-2d})+\lambda_{n2}\sum_{j=1}^{s_n}w_{nj}(\|\bbeta^0\|_2-\|\bbeta^*\|_2)$$
By (\ref{restrictedstrongconvexity}), the fact that $|a^Tb|\leq\|a\|_2^2+\|b\|_2^2/4$ and norm inequality, we have
\begin{align*}
\frac{\gamma_0c_1\gamma_2^{2s_n}m_n^{-1}}{4}\|\bbeta^*-\bbeta^0\|_2^2\leq&\frac{(\bbeta^*_T-\bbeta^0_T)^T\Phi^T_T(\boldy-\bmu_y)}{n}+O(s_n^2m_n^{-2d})+\frac{2(\max_{j\in T}w_{nj})^2}{\gamma_0c_1}\lambda_{n2}^2s_n\\
&+\frac{\gamma_0c_1\gamma_2^{2s_n}m_n^{-1}}{8}\|\bbeta^*-\bbeta^0\|_2^2
\end{align*}
Then by assumption 6,
$$\frac{\gamma_0c_1\gamma_2^{2s_n}m_n^{-1}}{8}\|\bbeta^*_T-\bbeta^0_T\|_2^2\leq\frac{(\bbeta^*_T-\bbeta^0_T)^T\Phi^T_T(\boldy-\bmu_y)}{n}+O(s_n^2m_n^{-2d})+O(\lambda_{n2}^2s_n)$$
Use the fact that $|a^Tb|\leq\|a\|_2^2+\|b\|_2^2/4$ on the first term of the right hand side, we have
\begin{align*}
\frac{\gamma_0c_1\gamma_2^{2s_n}m_n^{-1}}{8}\|\bbeta^*_T-\bbeta^0_T\|_2^2\leq&\frac{\gamma_0c_1\gamma_2^{2s_n}m_n^{-1}}{16}\|\bbeta^*_T-\bbeta^0_T\|_2^2\\
&+\frac{4}{\gamma_0c_1\gamma_2^{2s_n}m_n^{-1}n^2}\|\Phi_T^T(\boldy-\bmu_y)\|_2^2+O(s_n^2m_n^{-2d})+O(\lambda_{n2}^2s_n)
\end{align*}
By norm inequality and lemma \ref{highprob1}, we have
$$\frac{4}{\gamma_0c_1\gamma_2^{2s_n}m_n^{-1}n^2}\|\Phi_T^T(\boldy-\bmu_y)\|_2^2\leq\frac{4}{\gamma_0c_1\gamma_2^{2s_n}m_n^{-1}n^2}s_nm_n\|\Phi_T^T(\boldy-\bmu_y)\|_{\infty}=O_P\left(s_n\gamma_2^{-2s_n}m_n\frac{\log(s_nm_n)}{n}\right)$$
Combine the last two results, we have with probability converging to 1,
$$\|\bbeta^*_T-\bbeta^0_T\|_2^2=O_p\left(s_n\gamma_2^{-2s_n}m_n^2\frac{\log(s_nm_n)}{n}\right)+O(s_n^2\gamma_2^{-2s_n}m_n^{1-2d})+O(\lambda_{n2}^2m_n^2s_n\gamma_2^{-2s_n})$$
Then similar to the argument in the proof of part (ii) of theorem \ref{screening1}, we have
$$\sum_{j\in T}\|\hat{\bbeta}_{AGLj}-\bbeta^0_j\|_2^2=O_p\left(s_n\gamma_2^{-2s_n}m_n^2\frac{\log(s_nm_n)}{n}\right)+O(s_n^2\gamma_2^{-2s_n}m_n^{1-2d})+O(\lambda_{n2}^2m_n^2s_n\gamma_2^{-2s_n})$$
In the unbounded response case, we replace lemma \ref{highprob1} with lemma \ref{highprob2} and get
$$\sum_{j\in T}\|\hat{\bbeta}_{AGLj}-\bbeta^0_j\|_2^2=O_p\left(s_n\gamma_2^{-2s_n}m_n^2a_n\frac{\log(s_nm_n)}{n}\right)+O(s_n^2\gamma_2^{-2s_n}m_n^{1-2d})+O(\lambda_{n2}^2m_n^2s_n\gamma_2^{-2s_n})$$
for any diverging sequence $a_n$. Part (ii) is proved.
\end{proof}

\noindent
\textbf{Proof of theorem \ref{GICconsistency}}
\begin{proof}
The idea of the proof is similar to the proofs in \cite{fan2013tuning}, but due to the group penalization structure, some changes have to be made. First, the GIC criterion has the solution of adaptive group lasso, which is not easy to study. So we use a proxy, the MLE on the nonzero components selected by the adaptive group lasso estimator. Let
\begin{equation}
\hat{\bbeta}^*(A)=\argmax_{\{\bbeta\in\R^{p_nm_n}:\text{supp}_B(\bbeta)=A\}}\frac{1}{n}\sum_{i=1}^n\left[y_i\left(\bbeta^T\Phi_i\right)-b\left(\bbeta^T\Phi_i\right)\right]
\end{equation}
for a given $A\subset\{1,...,p\}$, and the proxy of GIC is defined as
\begin{equation}
GIC_{a_n}^*(A)=\frac{1}{n}\{D(\hat{\mu}^*_A;\boldY)+a_n|A|\}
\end{equation}
where $\hat{\mu}^*_A=b'(\Phi\hat{\bbeta}^*(A))$. The first result is that the proxy $GIC_{a_n}^*(T)$ well approximates $GIC_{a_n}(\lambda_{0})$. To prove this, observe by the definition of $\hat{\bbeta}_0=\hat{\bbeta}^*(T)$, we have the first order necessary condition
\begin{equation}
\frac{\partial}{\partial\bbeta}l_n(\hat{\bbeta}_0)=\boldzero
\end{equation}
Use Taylor expansion and by assumptions 1 and 2, we have
\begin{align}
\label{GICproxybound}
0&\geq GIC_{a_n}^*(T)-GIC_{a_n}(\lambda_{0})\nonumber\\
&=\frac{1}{n}\left(l_n(\hat{\bbeta}(\lambda_{n0}))-\l_n(\hat{\bbeta}_0)\right)\nonumber\\
&=-\frac{1}{n}\left(\hat{\bbeta}(\lambda_{n0})-\hat{\bbeta}_0\right)^T\Phi^T\bSigma(\bbeta^*)\Phi\left(\hat{\bbeta}(\lambda_{n0})-\hat{\bbeta}_0\right)\nonumber\\
&\geq -c_1\gamma_0\left\|\hat{\bbeta}(\lambda_{n0})-\hat{\bbeta}_0\right\|_2^2
\end{align}
where $\bbeta^*$ lies on the line segment joining $\hat{\bbeta}(\lambda_{n0})$ and $\hat{\bbeta}_0$. Then we need to bound $\left\|\hat{\bbeta}(\lambda_{n0})-\hat{\bbeta}_0\right\|_2^2$. By the definition of $\hat{\bbeta}(\lambda_{n0})$, we have
\begin{equation}
\Phi_T^T\left(\boldy-b'(\Phi_T\hat{\bbeta}_T(\lambda_{n0}))\right)+n\lambda_{n0}\bnu_T=\boldzero
\end{equation}
where the elements of $\bnu_T$ are $w_{nj}\hat{\bbeta}_j(\lambda_{n0})/\|\hat{\bbeta}_j(\lambda_{n0})\|_2$ for $j\in T$. On the other hand, by the definition of $\hat{\bbeta}_0$, we have
\begin{equation}
\Phi^T_T\left(\boldy-b'(\Phi_T\hat{\bbeta}_{0T})\right)=\boldzero
\end{equation}
Together we have
\begin{equation}
\Phi_T^T\left(b'(\Phi_T\hat{\bbeta}_{0T})-b'(\Phi_T\hat{\bbeta}_T(\lambda_{n0}))\right)+n\lambda_{n0}\bnu_T=\boldzero
\end{equation}
Use Taylor expansion on the left hand side of the equation, we have
\begin{equation}
\Phi^T_T\bSigma(\bbeta^{**})\Phi_T\left(\hat{\bbeta}_T(\lambda_{n0})-\hat{\bbeta}_{0T}\right)=n\lambda_{n0}\bnu_T
\end{equation}
where $\bbeta^{**}$ lies on the line segment joining $\hat{\bbeta}_T(\lambda_{n0})$ and $\hat{\bbeta}_{0T}$. Taking 2 norm and together with assumptions 1 and 2 and the results in theorem \ref{screening1}, we have
\begin{equation}
\left\|\hat{\bbeta}_T(\lambda_{n0})-\hat{\bbeta}_{0T}\right\|_2\leq C\lambda_{n0}\|\boldw_T\|_2\leq C\lambda_{n0}\sqrt{s_n}\|\boldw_T\|_{\infty}
\end{equation}
where $\boldw_T=(w_{nj},j\in T)'$. Then we have
\begin{equation}
\|\hat{\bbeta}(\lambda_{n0})-\hat{\bbeta}_0\|_2=O(\lambda_{n0}\sqrt{s_n})
\end{equation}
Choose $a_n$ to be any diverging sequence, then we have
\begin{equation}
\|\hat{\bbeta}(\lambda_{n0})-\hat{\bbeta}_0\|_2=o(\lambda_{n0}\sqrt{s_na_n})
\end{equation}
Then by (\ref{GICproxybound}), we have
\begin{equation}
GIC_{a_n}(\lambda_{0})-GIC_{a_n}^*(T)=o(\lambda_{n0}\sqrt{s_na_n})
\end{equation}
As a direct result,
\begin{align}
\label{GICapprox}
GIC_{a_n}(\lambda)-GIC_{a_n}(\lambda_{n0})&\geq (GIC_{a_n}^*(\alpha_{\lambda})-GIC_{a_n}^*(T))+(GIC_{a_n}^*(T)-GIC_{a_n}(\lambda_{n0}))\nonumber\\
&=(GIC_{a_n}^*(\alpha_{\lambda})-GIC_{a_n}^*(T))+o_p(\lambda_{n0}\sqrt{s_na_n})
\end{align}
The using this proxy, next we prove that the proxy $GIC^*$ is able to detect the distance between a selected model and the true model. Since the $GIC^*$ depends only on the MLE and has nothing to do with the penalization, this is the same as the generalised linear model, but with the spline line approximation error being considered.

Due to the estimation problem, we are only interested in the models $A$ such that $|A|\leq K$ where $Km_n=o(n)$. As the proof in \cite{fan2013tuning}, we consider the underfitted model and overfitted model (defined in their paper). Briefly, the underfitted models are $A$ such that $A\not\supset T$ and the overfitted models are $A$ such that $A\supsetneq T$. Also in the result of theorem \ref{screening1}, the model size $|A|=O(s_n)=o(n)$ and thus the KL divergence has a unique minimiser for every such model $A$, as discussed in \cite{fan2013tuning}.

Lemma \ref{gicunderfitted} implies that for all underfitted models
\begin{align*}
GIC_{A_n}^*(A)-GIC_{a_n}^*(T)&=2|A|I(\bbeta^*(A))+(|A|-|T|)a_nn^{-1}+|A|O_P(R_n)\\
&\geq \delta_n-s_na_nn^{-1}-O_P(KR_n)\\
&\geq\frac{\delta_n}{2}
\end{align*}
if $\delta_nK^{-1}R_n^{-1}\rightarrow\infty$ and $a_n=o(\delta_ns_n^{-1}n)$. This result states that there is a negligible increment on the $GIC^*$ if one of the nonzero component is missed, when the parameters satisfy the conditions. Lemma \ref{gicoverfitted} implies that for all overfitted models
$$GIC_{a_n}^*(A)-GIC_{a_n}^*(T)=\frac{|A|-|T|}{n}[a_n-O_P(\psi_n)]>\frac{a_n}{2n}$$
if $a_n\psi_n\rightarrow\infty$. This result states that there is a negligible increment on the $GIC^*$ if one of the zero component is selected along with the true model, when the parameters satisfy the conditions. Therefore,
\begin{equation}
\prob\left(\inf_{A\not\supset T}GIC^*_{a_n}(A)-GIC^*_{a_n}(T)>\frac{\delta_n}{2}\ \text{and}\ \inf_{A\supsetneq T}GIC^*_{a_n}(A)-GIC^*_{a_n}(T)>\frac{a_n}{2n}\right)\rightarrow 1
\end{equation}
Combine this result with (\ref{GICapprox}) and theorem assumptions, we have
$$\prob\{\inf_{\lambda\in\Omega_-\cup\Omega_+}GIC_{a_n}(\lambda)>GIC_{a_n}(\lambda_{n0})\}\rightarrow 1$$
\end{proof}
\begin{lemma}
\label{gicunderfitted}
Under assumptions 2 and 3, as $n\rightarrow\infty$, we have
$$\sup_{\substack{|A|\leq K\\A\subset\{1,...,p_n\}}}\frac{1}{n|A|}\left|D(\hat{\bmu}_{A}^*;\boldY)-D(\hat{\bmu}_0^*;\boldY)-2I(\bbeta^*(A))\right|=O_P(R_n)$$
where either a) the responses are bounded or Gaussian distributed, $R_n=\sqrt{\gamma_nm_n\log(p_n)/n}$, and $m_n\log(p_n)=o(n)$; or b) the responses are unbounded and non-Gaussian distributed, $R_n=\sqrt{\gamma_nm_n\log(p_n)/n}+\gamma_n^2m_nM_n^2\log(p_n)/n$ and $\log(p)=o(\min\{n(\log n)^{-1}K^{-2}m_n^{-1}\gamma_n^{-1},nM_n^{-2}\})$.
\end{lemma}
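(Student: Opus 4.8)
The plan is to reduce the claim to a uniform comparison of empirical and population log-likelihoods and then to a concentration argument combined with a union bound over subsets. Throughout, write the (unnormalised) log-likelihood as $l(\bbeta)=\sum_{i=1}^n[y_i\Phi_i^T\bbeta-b(\Phi_i^T\bbeta)]$ and its population counterpart $\bar l(\bbeta)=\sum_{i=1}^n[\mu_{y_i}^*\Phi_i^T\bbeta-b(\Phi_i^T\bbeta)]$, with $\mu_{y_i}^*=b'(\Phi_i^T\bbeta^0)$. Since $D(\hat\bmu_A^*;\boldY)-D(\hat\bmu_0^*;\boldY)=2[l(\hat\bbeta^*(T))-l(\hat\bbeta^*(A))]$ and $I(\bbeta^*(A))=\bar l(\bbeta^0)-\bar l(\bbeta^*(A))$, while $\bbeta^*(T)=\bbeta^0$ is the population MLE on $T$, the quantity inside the supremum equals $\frac{2}{n|A|}\big|[l(\hat\bbeta^*(T))-\bar l(\bbeta^0)]-[l(\hat\bbeta^*(A))-\bar l(\bbeta^*(A))]\big|$. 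It therefore suffices to bound, uniformly over all $A$ with $|A|\le K$, the single-set quantity $G(A):=l(\hat\bbeta^*(A))-\bar l(\bbeta^*(A))$ by $O_P(n|A|R_n)$, treating $G(T)$ as the special case $|A|=s_n$.

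The central decomposition is $G(A)=\text{(I)}+\text{(II)}$, where $\text{(I)}=l(\hat\bbeta^*(A))-l(\bbeta^*(A))$ and $\text{(II)}=l(\bbeta^*(A))-\bar l(\bbeta^*(A))$. For term (I) I expand $l$ to second order about its maximiser $\hat\bbeta^*(A)$; the score vanishes there, so (I) equals $\tfrac12(\hat\bbeta_A^*-\bbeta_A^*)^T\Phi_A^T\bSigma(\bar\bbeta)\Phi_A(\hat\bbeta_A^*-\bbeta_A^*)$ for an intermediate $\bar\bbeta$. Subtracting the empirical score equation (driven by $\boldy$) from the population one (driven by $\bmu_y^*$) and Taylor-expanding $b'$ yields $\hat\bbeta_A^*-\bbeta_A^*=(\Phi_A^T\bSigma\Phi_A)^{-1}\Phi_A^T\bepsilon^*$ with $\bepsilon^*=\boldy-\bmu_y^*$, so (I) is, up to the curvature being evaluated at nearby points, the projection-type quadratic form $\tfrac12\bepsilon^{*T}\bSigma^{-1/2}P_A\bSigma^{-1/2}\bepsilon^*$, where $P_A$ is an orthogonal projection of rank $|A|m_n$. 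Using Assumptions 2 and 3 to keep all intermediate Hessians comparable to $nm_n^{-1}$ on the restricted neighbourhood of $\bbeta^0$, this term has conditional mean of order $|A|m_n$ and sub-exponential fluctuations governed by its rank. Term (II) is the linear form $\bepsilon^{*T}\Phi_A\bbeta_A^*$, bounded by Cauchy--Schwarz together with $\|\Phi_A\bbeta_A^*\|_2^2\asymp n|A|$ from the restricted eigenvalue bound and the spline-norm equivalence. In both terms I split $\bepsilon^*=\bepsilon+(\bmu_y-\bmu_y^*)$: the noise part $\bepsilon$ is mean-zero sub-Gaussian by Assumption 3, while the deterministic spline-bias part is handled through $\frac1n\|\bmu_y-\bmu_y^*\|_2^2=O(s_n^2m_n^{-2d})$ from (\ref{splineapproxerr}).

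Uniformity comes from a union bound over the $\binom{p_n}{|A|}\le p_n^{|A|}$ subsets of each fixed cardinality, so that $\log\binom{p_n}{|A|}\lesssim|A|\log p_n$ supplies the $\log p_n$ factor and $m_n$ the block dimension in $R_n$; the diverging $\gamma_n$ provides the slack needed to turn the tail bounds into ``with probability tending to one'' statements, exactly as in Lemmas \ref{highprob1}--\ref{highprob2}. For bounded or Gaussian responses, Hoeffding's inequality for the linear form and a chi-square/sub-Gaussian quadratic-form tail bound give the clean rate $R_n=\sqrt{\gamma_nm_n\log p_n/n}$. For unbounded non-Gaussian responses the quadratic form has only sub-exponential increments, so I invoke a Hanson--Wright-type bound after controlling the response scale through the curvature bound $M_n$ of Assumption 3; this produces the extra additive term $\gamma_n^2m_nM_n^2\log p_n/n$ and forces the stated growth restriction $\log p=o(\min\{n(\log n)^{-1}K^{-2}m_n^{-1}\gamma_n^{-1},nM_n^{-2}\})$, which is precisely what keeps the exponents in the union bound summable.

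The hardest step is the uniform control of the quadratic form in term (I) over exponentially many subsets in the unbounded, non-Gaussian case: one cannot use boundedness, the projections $P_A$ vary with $A$, and the Taylor intermediate point $\bar\bbeta$ must be kept inside the restricted-eigenvalue neighbourhood of $\bbeta^0$ uniformly in $A$. Managing this through a Hanson--Wright bound combined with the $M_n$-truncation, while matching the precise growth conditions on $\log p$, is the crux; once it is in place, term (II) and the entire bounded/Gaussian case follow from the simpler maximal inequalities already recorded in Lemmas \ref{highprob1} and \ref{highprob2}.
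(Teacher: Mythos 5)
Your reduction coincides with the paper's: writing the statistic, up to the factor $2$, as $\frac{1}{n|A|}\bigl|[l_n(\hat{\bbeta}^*(T))-E\,l_n(\bbeta^0)]-[l_n(\hat{\bbeta}^*(A))-E\,l_n(\bbeta^*(A))]\bigr|$ and splitting each bracket into the MLE fluctuation $l_n(\hat{\bbeta}^*(A))-l_n(\bbeta^*(A))$ and the centered process $l_n(\bbeta^*(A))-E[l_n(\bbeta^*(A))]$ is exactly how the paper assembles the lemma from Lemmas \ref{prop4} and \ref{prop5}, and your treatment of the centered term (Hoeffding for bounded responses, a Bernstein moment computation for unbounded ones, a union bound over the ${p_n\choose k}\le(p_ne/k)^k$ subsets, spline bias absorbed via (\ref{splineapproxerr})) is Lemma \ref{prop5} essentially verbatim. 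Where you genuinely depart is the MLE term. The paper never passes to a quadratic form there: it first localizes $\hat{\bbeta}^*(A)$ uniformly in $A$ by the convex-combination trick (Lemma \ref{prop2}), then bounds the localized empirical process $Z_A(N)$ by symmetrization, a Lipschitz contraction with constant $L_n$ obtained by conditioning on $\Omega_n=\{\|\bepsilon\|_\infty\le C\sqrt{\log n}\}$ --- which is the source of $L_n=O(M_n+\sqrt{\log n})$ and hence of the extra term $\gamma_n^2m_nM_n^2\log p_n/n$ in case b) --- and Massart's concentration inequality (Lemmas \ref{lemma2} and \ref{prop4}). Your route, the score-equation representation $\hat{\bbeta}^*(A)-\bbeta^*(A)=(\Phi_A^T\bSigma\Phi_A)^{-1}\Phi_A^T\bepsilon^*$ followed by a Hanson--Wright/decoupling bound on the resulting projection quadratic form, is precisely the machinery the paper deploys only for the \emph{overfitted} Lemma \ref{gicoverfitted} (via Lemmas \ref{lemma3} and \ref{prop3}), transplanted to the underfitted side. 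Both mechanisms deliver the stated orders; the paper's buys the localization for free from concavity of the likelihood, yours buys an explicit pivotal quantity.

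Two steps you flag but do not supply are where the real work sits, and the paper's route was arguably chosen to dodge both. First, the quadratic-form representation with comparable intermediate curvature is only valid once you know, uniformly over exponentially many $A$, that $\hat{\bbeta}^*(A)$ lies in the restricted neighbourhood of $\bbeta^*(A)$; deducing this from the representation itself is circular, so an a priori localization in the style of Lemma \ref{prop2} must be established first. Second, your expansion carries third- and fourth-order Taylor remainders; in the paper's own bookkeeping (Lemma \ref{prop3}) these contribute terms of order $|A|^3m_n^3\gamma_n^3L_n^3(\log p_n)^{3/2}/\sqrt{n}$ and $|A|^4m_n^4\gamma_n^4L_n^4(\log p_n)^2/n$, and forcing them below the target $n|A|R_n$ requires growth conditions with higher powers of $K$ and $m_n$ than the lemma states (roughly $K^2m_n^{5/2}\gamma_n^{5/2}L_n^3\log p_n=O(n)$ rather than the stated $K^2m_n\gamma_n\log p_n\log n=o(n)$). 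The contraction argument of Lemmas \ref{lemma2} and \ref{prop4} produces no such remainders, which is why the paper's weaker conditions suffice there. So: correct skeleton, identical decomposition and union-bound strategy, but to complete your version you must import a localization lemma and verify the remainder bookkeeping, plausibly at the cost of mildly stronger conditions on $K$, $m_n$ and $\log p_n$.
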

\begin{proof}
lemma \ref{gicunderfitted} is a direct result from lemma \ref{prop4} and lemma \ref{prop5}.
\end{proof}

\begin{lemma}
\label{gicoverfitted}
Under assumption 1, 2 and 3, and suppose $\log p=O(n^{\kappa})$ for some $0<\kappa<1$, as $n\rightarrow\infty$, we have
$$\frac{1}{|A|-|T|}\left(D(\hat{\bmu}_{A}^*;\boldY)-D(\hat{\bmu}_0^*;\boldY)\right)=O_P(\psi_n)$$
uniformly for all $A\supsetneq T$ with $|A|<K$ and either a) $\psi_n=m_n\sqrt{\gamma_n\log (p_n)}$ when the responses are bounded, $K=O(\min\{n^{(1-2\kappa)/6},n^{(1-3\kappa)/8}\})$ and $\kappa\leq 1/2$; or b) $\psi_n=m_n\gamma_n\log (p_n)$ when the responses are Gaussian bounded; or when the response are unbounded and non-Gaussian distributed, and the last three terms in lemma \ref{prop3} are dominated by $m_n\gamma_n\log p_n$.
\end{lemma}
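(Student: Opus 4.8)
The plan is to reduce the deviance difference to the excess log-likelihood attained by the maximum-likelihood estimator on the enlarged support $A\supsetneq T$ over the estimator on the true support $T$, and then to control this excess uniformly over all such $A$ by a quadratic-form concentration argument combined with a union bound over the $\binom{p_n-s_n}{|A|-s_n}$ possible choices of spurious variables. First I would use $D(\hat{\bmu};\boldY)=2\{l(\boldY;\boldY)-l(\hat{\bmu};\boldY)\}$ to write
\[
D(\hat{\bmu}_A^*;\boldY)-D(\hat{\bmu}_0^*;\boldY)=2\{l(\hat{\bmu}_0^*;\boldY)-l(\hat{\bmu}_A^*;\boldY)\}=-2\{L(\hat{\bbeta}^*(A))-L(\hat{\bbeta}^*(T))\},
\]
where $L$ is the unnormalised log-likelihood and $\hat{\bbeta}^*(\cdot)$ is the MLE restricted to the indicated block-support. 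Because $T\subseteq A$, the enlarged model fits at least as well, so this difference is nonpositive and its magnitude is exactly the excess log-likelihood $2\{L(\hat{\bbeta}^*(A))-L(\hat{\bbeta}^*(T))\}$, which must be bounded by $(|A|-|T|)\,O_P(\psi_n)$. Since both supports contain $T$, the spline bias $\bmu_y-\bmu_y^*$ of order $s_n^2 m_n^{-2d}$ is common to both fits and does not enter the leading order of this difference.

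Next I would expand $L$ to second order about $\hat{\bbeta}^*(T)$. As the first-order score vanishes at each MLE, the leading contribution is a quadratic form
\[
L(\hat{\bbeta}^*(A))-L(\hat{\bbeta}^*(T))\approx\tfrac12\,\bepsilon^T\left(\boldP_A-\boldP_T\right)\bepsilon,
\]
where $\bepsilon=\boldy-\bmu_y$ and $\boldP_A,\boldP_T$ are the $\bSigma$-weighted projections onto the column spaces of $\Phi_A$ and $\Phi_T$. The matrix $\boldP_A-\boldP_T$ is a projection of rank $(|A|-|T|)m_n$, so under Assumption 3 the relevant noise is sub-Gaussian and this quadratic form concentrates around its mean, which is of order $(|A|-|T|)m_n$.

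I would then invoke concentration for quadratic forms of sub-Gaussian vectors together with a union bound. For each fixed $A$ the deviation of $\bepsilon^T(\boldP_A-\boldP_T)\bepsilon$ from its mean is controlled exponentially; taking the union over the $\exp\{O((|A|-|T|)\log p_n)\}$ admissible supports inflates the threshold by a factor $\log p_n$, and by the diverging factor $\gamma_n$ in the unbounded case, inherited from Lemma \ref{highprob2}. This produces the per-variable bound $m_n\sqrt{\gamma_n\log p_n}$ in the bounded regime (a), where a Bernstein-type inequality yields the square-root rate, and $m_n\gamma_n\log p_n$ in the Gaussian/sub-Gaussian unbounded regime (b); the three regimes differ only in which concentration inequality is applied and hence in the surviving $\psi_n$.

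Finally I would control the Taylor remainder: since the GLM log-likelihood is not exactly quadratic, the third-order term carries $b'''$, bounded by $c_1^{-1}$ under Assumption 3, and the curvature is handled by the restricted-eigenvalue lower bound of Assumption 2. Bounding this remainder uniformly over all $A$ is what forces the growth restrictions on $K$, namely $K=O(\min\{n^{(1-2\kappa)/6},n^{(1-3\kappa)/8}\})$ with $\kappa\le 1/2$ in the bounded case, and the $M_n$-dependent conditions in the unbounded case, so that the remainder is dominated by $\psi_n$. The hardest part will be precisely this uniform control: the number of candidate overfitted models is exponential in $p_n$, the dimension $(|A|-|T|)m_n$ of each quadratic form itself diverges, and the non-quadratic remainder must be shown negligible simultaneously for all $A$ — this is exactly where the interplay among $K$, $m_n$, $\gamma_n$ and $\log p_n$ becomes delicate and where the growth conditions on $K$ enter.
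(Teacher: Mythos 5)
Your plan is essentially the paper's own route: the paper proves this lemma by combining Lemma \ref{prop3} (a Taylor expansion of $l_n(\hat{\bbeta}^*(A))-l_n(\bbeta^*(A))$ whose leading term is the weighted-projection quadratic form $\tfrac12(\boldy-\bmu_0)^T\bSigma_0^{-1/2}\boldB_A\bSigma_0^{-1/2}(\boldy-\bmu_0)$, with remainders controlled via the uniform MLE rate of Lemma \ref{prop2}) with Lemma \ref{lemma3} (concentration of the rank-$(|A|-|T|)m_n$ projection quadratic form, handled by splitting into diagonal and off-diagonal parts with a decoupling inequality, then Hoeffding in the bounded case and Bernstein in the unbounded case, plus a $\binom{p}{k}\leq(pe/k)^k$ union bound), exactly matching your decomposition, concentration-plus-union-bound step, and remainder analysis that forces the growth conditions on $K$. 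The only quibble is a label swap: in the paper it is Hoeffding (not a Bernstein-type bound) that yields the $\sqrt{\gamma_n\log p_n}$ rate in the bounded regime, with Bernstein giving the $\gamma_n\log p_n$ rate in the unbounded one, though the per-regime rates you state are correct.
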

\begin{proof}
lemma \ref{gicoverfitted} is a direct result from lemma \ref{lemma3} and \ref{prop3}.
\end{proof}
\begin{lemma}
\label{lemma2}
Under assumptions 2-3, let $\gamma_n$ be a slowly diverging sequence, if $\gamma_nL_n\sqrt{Km_n\log P_n/n}\rightarrow 0$ as $n\rightarrow\infty$, where $L_n=O(1)$ for the bounded case and $L_n=O(M_n+\sqrt{\log n})$ for the unbounded case, then we have
$$\sup_{|A|\leq K}\frac{1}{|A|}Z_{A}\left(\gamma_nL_n\sqrt{|A|m_n\frac{\log p_n}{n}}\right)=O_P\left(\gamma_n^2L_n^2\frac{m_n\log p_n}{n}\right)$$
where
$$Z_A(N)=\sup_{\bbeta\in\mathcal{B}_A(N)}\frac{1}{n}\left|l_n(\bbeta)-l_n(\bbeta^*(A))-E\left[l_n(\bbeta)-l_n(\bbeta^*(A))\right]\right|$$
and
$$\mathcal{B}_A(N)=\left\{\bbeta\in\R^P:\|\bbeta-\bbeta^*(A)\|_2\leq N,\text{supp}_B(\bbeta)=A\right\}\cup\left\{\bbeta^*(A)\right\}$$
\end{lemma}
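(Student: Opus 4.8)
The plan is to exploit the fact that the centered process defining $Z_A(N)$ is \emph{linear} in $\bbeta$, which collapses the supremum to the norm of a score vector. Writing $\bdelta = \bbeta - \bbeta^*(A)$ and recalling $l_n(\bbeta) = \frac{1}{n}\sum_{i=1}^n[y_i(\bbeta^T\Phi_i) - b(\bbeta^T\Phi_i)]$, the difference $l_n(\bbeta) - l_n(\bbeta^*(A))$ splits into a term linear in $\boldy$ and the deterministic increment $\frac{1}{n}\sum_i[b(\bbeta^T\Phi_i) - b(\bbeta^*(A)^T\Phi_i)]$. Since this increment does not involve $\boldy$, it is unchanged by centering and cancels against its own expectation, leaving
$$l_n(\bbeta) - l_n(\bbeta^*(A)) - E[l_n(\bbeta) - l_n(\bbeta^*(A))] = \frac{1}{n}(\boldy - \bmu_y)^T\Phi\bdelta = \frac{1}{n}\bepsilon^T\Phi_A\bdelta_A,$$
where $\Phi_A$ is the submatrix of columns indexed by the blocks in $A$. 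Because $\bdelta_A$ ranges over (a subset of) the $\ell_2$-ball of radius $N$, the dual characterization of a linear functional yields the exact bound $Z_A(N) = \frac{N}{n}\|\Phi_A^T\bepsilon\|_2$. First I would establish this reduction, since it removes the nonlinear $b(\cdot)$ contribution entirely and turns the claim into a pure concentration problem.

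Substituting $N = \gamma_n L_n\sqrt{|A|m_n\log p_n/n}$ and dividing by $|A|$, the claimed rate $O_P(\gamma_n^2 L_n^2 m_n\log p_n/n)$ is, after cancelling one factor of $\gamma_n L_n\sqrt{m_n\log p_n/n}$, equivalent to the single uniform statement
$$\sup_{|A|\leq K}\frac{1}{\sqrt{|A|}}\cdot\frac{1}{n}\|\Phi_A^T\bepsilon\|_2 = O_P\!\left(\gamma_n L_n\sqrt{\frac{m_n\log p_n}{n}}\right),$$
that is, $\frac{1}{n}\|\Phi_A^T\bepsilon\|_2 = O_P(\gamma_n L_n\sqrt{|A|m_n\log p_n/n})$ uniformly over all block-supports $A$ of cardinality at most $K$. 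This is the object I would control next.

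For the bounded (or Gaussian) case this follows almost immediately from lemma \ref{highprob1}: that lemma controls $\frac{1}{n}\|\Phi^T\bepsilon\|_\infty$ simultaneously over \emph{all} $p_nm_n$ coordinates by $r_n \asymp \sqrt{\log(p_nm_n)/n}$ with probability tending to one, so the crude passage $\frac{1}{n}\|\Phi_A^T\bepsilon\|_2 \leq \sqrt{|A|m_n}\,\frac{1}{n}\|\Phi^T\bepsilon\|_\infty$ already gives $\sqrt{|A|m_n\log(p_nm_n)/n}$ uniformly in $A$, with the uniform $\ell_\infty$ bound supplying the union bound over supports for free. Since $\log(p_nm_n) = \log p_n\,(1+o(1))$ in the regime $\log m_n = o(\log p_n)$ and the target carries the diverging slack factor $\gamma_n L_n$, this is comfortably within the required rate.

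The main obstacle is the unbounded sub-Gaussian case, where the errors are no longer uniformly bounded and a naive union bound over the $\binom{p_n}{|A|}\leq p_n^{|A|}$ supports is not immediate. The plan there is to condition on the high-probability event $\{\max_i|\epsilon_i|\leq C\sqrt{\log n}\}$, valid for sub-Gaussian $\epsilon_i$ by assumption 3, so that $|y_i|\leq|\mu_{y_i}|+|\epsilon_i|\leq M_n + C\sqrt{\log n}=O(L_n)$, using the bound on $b'$ in assumption 3 to control the mean part by $M_n$; this is exactly why the noise-level factor $L_n = O(M_n + \sqrt{\log n})$ enters linearly into the rate. On this event the bounded-response concentration of lemma \ref{highprob1} applies with effective bound $c = L_n$ (equivalently, one invokes the sub-Gaussian maximal inequality of lemma \ref{highprob2}), and the union over supports together with a peeling over $|A| = 1,\dots,K$ goes through as before. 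The delicate bookkeeping is verifying that the union-bound exponent $|A|\log p_n$ is dominated after multiplication by the diverging $\gamma_n$, which is precisely what the hypothesis $\gamma_n L_n\sqrt{Km_n\log p_n/n}\to 0$ guarantees.
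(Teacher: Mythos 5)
Your proposal is correct, but it takes a genuinely different route from the paper. Your key observation---that under the paper's fixed-design assumption the terms $b(\Phi_i^T\bbeta)$ are deterministic, so centering $l_n(\bbeta)-l_n(\bbeta^*(A))$ at its expectation cancels the nonlinear part exactly and leaves the linear score $\bepsilon^T\Phi_A\bdelta_A$---is valid, and it collapses $Z_A(N)$ to (at most, given the support-equality constraint in $\mathcal{B}_A(N)$, so your ``exact bound'' should be a ``$\leq$'') $\frac{N}{n}\|\Phi_A^T\bepsilon\|_2 \leq \frac{N}{n}\sqrt{|A|m_n}\,\|\Phi^T\bepsilon\|_\infty$. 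Since the $\ell_\infty$ bound is support-free, lemma \ref{highprob1} (bounded case) and the sub-Gaussian maximal inequality of lemma \ref{highprob2} with diverging factor $a_n=\gamma_nL_n$ (unbounded case) deliver the uniform rate with no union bound over the $\binom{p_n}{|A|}$ supports at all; your bookkeeping checks out and in fact yields a slightly sharper bound (one factor of $\gamma_nL_n$ rather than two in the bounded case, and no truncation event is needed in the unbounded case, so your conditioning on $\{\max_i|\epsilon_i|\leq C\sqrt{\log n}\}$ is dispensable and avoids the conditional-mean shift it would otherwise introduce). The paper instead treats the increment as a genuinely nonlinear empirical process: it truncates on $\Omega_n=\{\|\bepsilon\|_\infty\leq\tilde L_n\}$, controls the resulting shift term $R_A(\bbeta)$, applies symmetrization with a Rademacher sequence together with the contraction/concentration inequality using the Lipschitz constant $(\tilde L_n+2M_n)$ of the loss---which is precisely where $L_n=O(M_n+\sqrt{\log n})$ enters their rate---and then invokes Massart's inequality with a union bound over supports via $\binom{p_n}{k}\leq(p_ne/k)^k$. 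What the paper's heavier machinery buys is robustness: if the design were random (as the paper's appendix contemplates when deriving assumption 2), the $b(\cdot)$ terms would no longer cancel against their expectations and your linearization would fail, while the symmetrization--contraction argument would survive essentially unchanged; what your argument buys, under the fixed design actually assumed, is a substantially shorter proof and a cleaner explanation of why the hypothesis $\gamma_nL_n\sqrt{Km_n\log p_n/n}\to 0$ is not needed for the concentration itself (it matters for the shift term in the paper's truncation step and for the shrinking-neighborhood applications in lemma \ref{prop2}), with the nonlinearity of the link playing no role in this particular lemma.
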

\begin{proof}
Define
$$\Omega_n=\{\|\bepsilon\|_{\infty}\leq\tilde{L_n}\}$$
If we take $\tilde{L_n}=C\sqrt{\log n}$, \cite{fan2013tuning} has showed that $\prob(\Omega_n)\rightarrow 1$. Let
$$\tilde{Z}_A(N)=\sup_{\bbeta\in\mathcal{B}_A(N)}\frac{1}{n}\left|l_n(\bbeta)-l_n(\bbeta^*(A))-E\left[l_n(\bbeta)-l_n(\bbeta^*(A))|\Omega_n\right]\right|$$
Then we have
$$\sup_{|A|\leq K)}\frac{1}{|A|}Z_A(N)\leq \sup_{|A|\leq K)}\frac{1}{|A|}\tilde{Z}_A(N)+\sup_{|A|\leq K,\bbeta\in\mathcal{B}_A(N)}\frac{1}{|A|}R_A(\bbeta)$$
where
$$R_A(\bbeta)=\frac{1}{n}\left|E\left[l_n(\bbeta)-l_n(\bbeta^*(A))\right]-E\left[l_n(\bbeta)-l_n(\bbeta^*(A))|\Omega_n\right]\right|$$
By the definition of $l_n$, we have
\begin{align*}
R_A(\bbeta)&=\frac{1}{n}\left|E[\bepsilon|\Omega_n]^T\Phi(\bbeta-\bbeta^*(A))\right|\\
&\leq\frac{1}{n}\left\|E[\bepsilon|\Omega_n]\right\|_2\left\|\Phi(\bbeta-\bbeta^*(A)\right\|_2\\
&=\sqrt{\frac{1}{n}\sum_{i=1}^n(E[\epsilon_i|\Omega_n])^2}\cdot\frac{1}{\sqrt{n}}\left\|\Phi(\bbeta-\bbeta^*(A)\right\|_2\\
&\leq C\tilde{L}_n\exp(-C\tilde{L}_n)\|\bbeta-\bbeta^*(A)\|_2
\end{align*}
where the first inequality is Cauchy-Schwartz inequality, and the second inequality is lemma 1 in \cite{fan2013tuning} and assumption 1. Then we have
$$\sup_{|A|\leq K,\bbeta\in\mathcal{B}_A(N)}\frac{1}{|A|}R_A(\bbeta)=C\tilde{L}_n\exp(-C\tilde{L}_n)N$$
Taking $\tilde{L}_n=C\sqrt{\log n}$, $N=\gamma_nL_n\sqrt{|A|\log(p_nm_n)/n}$ and under the lemma assumption, we have
\begin{equation}
\sup_{|A|\leq K,\bbeta\in\mathcal{B}_A(N)}\frac{1}{|A|}R_A(\bbeta)=o(\log(p_nm_n)/n)
\end{equation}
Then let's consider $\tilde{Z}_A(N)$. For any $\bbeta_1,\bbeta_2\in\mathcal{B}_A(N)$, by the mean value theorem, we have $b(\Phi_i^T\bbeta_1)-B(\Phi_i^T\bbeta_2)=b'(\Phi_i^T\tilde{\bbeta})\Phi_i^T(\bbeta_1-\bbeta_2)$, where $\tilde{\bbeta}$ lies on the line segment joining $\bbeta_1$ and $\bbeta_2$. We have the likelihood function
\begin{align*}
&|-y_i\Phi_i^T\bbeta_1+b(\Phi_i^T\bbeta_1)-(-y_i\Phi_i^T\bbeta_2+b(\Phi_i^T\bbeta_2))|\\
=&|(-y_i+b'(\Phi_i^T\tilde{\bbeta}))|\Phi_i^T\bbeta_1-\Phi_i^T\bbeta_2|\\
\leq& (\tilde{L}_n+2M_n)|\Phi_i^T(\bbeta_1-\bbeta_2)|
\end{align*}
to be Lipschitz continuous. Let $w_1,...,w_n$ be a Rademacher sequence independent of $\bepsilon$. By the symmetrization theorem and the concentration inequality, see chapter 14 of \cite{Bhlmann:2011:SHD:2031491}, we have
\begin{align*}
E[\tilde{Z}_A(N)|\Omega_n]\leq& 2E\left[\sup_{\bbeta\in\mathcal{B}_A(N)}\frac{1}{n}\left|\sum_{i=1}^nw_i[-y_i\Phi_i^T\bbeta+b(\Phi_i^T\bbeta)-(-y_i\Phi_i^T\bbeta^*(A)+b(\Phi_i^T\bbeta^*(A)))]|\Omega_n\right|\right]\\
\leq&4L_nE\left[\sup_{\bbeta\in\mathcal{B}_A(N)}\frac{1}{n}\left|\sum_{i=1}^nw_i[\Phi_i(\bbeta-\bbeta^*(A)]|\Omega_n\right|\right]\\
\leq& 4L_nE\left[\left(\sup_{\bbeta\in\mathcal{B}_A(N)}\|\bbeta-\bbeta^*(A)\|_2\right)\left(\sum_{j\in A}\sum_{i=1}^n\sum_{k=1}^{m_n}\left|\frac{1}{n^2}(w_i\phi_{ijk})^2\right|\right)^{1/2}\right]\\
\leq&4L_nN\sqrt{\frac{|A|m_n}{n}}
\end{align*}
where the second last inequality is by Cauchy-Schwartz inequality, and the last inequality is by the definition of $\mathcal{B}_A(N)$ and $w_i$. Then since
$$\frac{1}{n}\sum_{i=1}^n(L_n\Phi_i^T(\bbeta(A)-\bbeta^0))^2\leq CL_n^2N^2$$
Apply Massart's inequality, see theorem 14.2 in \cite{Bhlmann:2011:SHD:2031491}, we have
$$\prob\left(\tilde{Z}_A(N)\geq E[\tilde{Z}_A(N)|\Omega_n]+t\right)\leq\exp\left(-\frac{1}{CL_n^2N^2}\frac{nt^2}{2}\right)$$
Take $t=4L_nNu\sqrt{|A|m_n/n}$ with $u>0$, $N=L_n\sqrt{|A|m_n/n}(1+u)$, $u=\gamma_n\sqrt{\log p_n}$ and observe that ${p_n\choose k}\leq(pe/k)^k$, we have
\begin{align*}
&\prob\left(\sup_{|A|\leq K}\frac{1}{|A|}\tilde{Z}_A(N)\geq 4L_n^2\frac{m_n}{n}(1+u)^2|\Omega_n\right)\\
\leq&\sum_{|A|\leq K}\prob\left(\tilde{Z}_A(N)\geq 4|A|L_n^2\frac{m_n}{n}(1+u)^2|\Omega_n\right)\\
\leq&\sum_{k\leq K}\left(\frac{pe}{k}\right)^k\exp(-CKm_nu^2)\\
\leq&\sum_{k\leq K}\left(\frac{pe}{k}\right)^k\exp(-CKm_n\gamma_n\log p_n)\rightarrow 0
\end{align*}
Then we have
$$\prob\left(\sup_{|A|\leq K}\frac{1}{|A|}\tilde{Z}_A(N)\geq \gamma_n^2L_n^2\frac{m_n}{n}\log p_n\right)=o(1)+\prob(\Omega^c)\rightarrow 0$$
\end{proof}
\begin{lemma}
\label{prop2}
Under assumptions 1-3, we have
$$\sup_{|A|\leq K}\frac{1}{\sqrt{|A|}}\|\hat{\bbeta}^*(A)-\bbeta^*(A)\|_2=O_P\left(\gamma_nL_n\sqrt{\frac{m_n\log p_n}{n}}\right)$$
\end{lemma}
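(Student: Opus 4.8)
The plan is to prove a uniform rate of convergence for the support-constrained maximum-likelihood estimator $\hat{\bbeta}^*(A)$ toward its population target $\bbeta^*(A)$, by combining a basic inequality coming from the optimality of $\hat{\bbeta}^*(A)$, the restricted strong convexity of the expected log-likelihood from Remark \ref{REremark}, and the uniform empirical-process bound already established in Lemma \ref{lemma2}. The estimate is stated for all supports $A$ with $|A|\le K$ simultaneously, so the whole argument must be carried out uniformly in $A$.

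First I would fix $A$ and record the key structural fact: since $\bbeta^*(A)$ minimizes the Kullback--Leibler divergence over models supported on $A$, it maximizes $E[l_n(\cdot)]$ over such models, so the population score restricted to $A$ vanishes at $\bbeta^*(A)$. Consequently $\tfrac1n\Phi_A^T(\bmu_y-b'(\Phi_A\bbeta^*(A)))=0$, and the empirical score at $\bbeta^*(A)$ reduces to the pure-noise term $\tfrac1n\Phi_A^T\bepsilon$; this is what lets me treat $\bbeta^*(A)$ as the correctly centered reference point, with the spline bias $\|\bmu_y-\bmu_y^*\|$ of order $O(s_nm_n^{-d})$ absorbed into $\bbeta^*(A)$ rather than appearing as a separate drift. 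A second-order Taylor expansion of $\bar G(\bbeta):=E[l_n(\bbeta)]-E[l_n(\bbeta^*(A))]$ about $\bbeta^*(A)$, using the vanishing linear term, the lower eigenvalue bound of Assumption 2, and $b''\ge c_1$, then yields the quadratic lower bound $-\bar G(\bbeta)\ge \tfrac{\kappa}{2}\|\bbeta-\bbeta^*(A)\|_2^2$ for $\bbeta$ supported on $A$, with $\kappa$ the restricted-strong-convexity constant of Remark \ref{REremark}.

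Next comes the basic inequality. Optimality gives $l_n(\hat{\bbeta}^*(A))-l_n(\bbeta^*(A))\ge 0$; splitting this into the empirical fluctuation $[l_n-El_n]$ and the population drift $\bar G$, and bounding the fluctuation difference by $Z_A(\|\hat{\bbeta}^*(A)-\bbeta^*(A)\|_2)$, produces the self-referential inequality $\tfrac{\kappa}{2}\|\hat{\bbeta}^*(A)-\bbeta^*(A)\|_2^2\le Z_A(\|\hat{\bbeta}^*(A)-\bbeta^*(A)\|_2)$. To turn this into an explicit radius I would run a localization (peeling) argument: exploiting convexity of $-l_n$, if $\hat{\bbeta}^*(A)$ left the ball $\mathcal{B}_A(N_A)$, the point where the segment from $\bbeta^*(A)$ to $\hat{\bbeta}^*(A)$ meets the sphere of radius $N_A$ would still satisfy the basic inequality, so the event $\{\|\hat{\bbeta}^*(A)-\bbeta^*(A)\|_2>N_A\}$ forces $\tfrac{\kappa}{2}N_A^2\le Z_A(N_A)$. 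Taking $N_A$ to be the fixed point of $\kappa N^2\asymp Z_A(N)$ and invoking Lemma \ref{lemma2} to control $\sup_{|A|\le K}|A|^{-1}Z_A(N_A)$ simultaneously over all supports then shows this event has vanishing probability; dividing by $\sqrt{|A|}$ delivers the advertised order, with $L_n$ tracking the bounded versus unbounded regimes exactly as in Lemma \ref{lemma2}.

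The main obstacle will be the uniformity over the exponentially many supports $A$ with $|A|\le K$: a single-$A$ rate is routine, but bounding the supremum of the empirical process over all such supports while losing at most a $\gamma_n\sqrt{\log p_n}$ factor is the delicate step, and it is precisely what Lemma \ref{lemma2} supplies through symmetrization, the Lipschitz/contraction bound on the loss increments, Massart's concentration inequality, and the union bound $\binom{p_n}{k}\le(ep_n/k)^k$. A secondary point is to verify that $\kappa$ and the empirical-process width combine to the stated fixed point, and that the hypothesis $\gamma_nL_n\sqrt{Km_n\log p_n/n}\to 0$ keeps the localization radius inside the neighborhood where the restricted strong convexity of Remark \ref{REremark} is valid.
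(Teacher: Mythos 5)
Your proposal is correct and follows essentially the same route as the paper's own proof: the paper likewise combines the two opposite optimality inequalities (sample optimality of $\hat{\bbeta}^*(A)$ and population optimality of $\bbeta^*(A)$) to sandwich the population drift by $nZ_A(N)$, lower-bounds that drift quadratically via the vanishing population score and the restricted eigenvalue condition, localizes through the convex combination $\hat{\bbeta}_u(A)=t\hat{\bbeta}^*(A)+(1-t)\bbeta^*(A)$ (your segment--sphere intersection is the same device), and closes by applying Lemma \ref{lemma2} at the radius $N=\gamma_nL_n\sqrt{|A|m_n\log p_n/n}$ uniformly over all supports with $|A|\le K$.
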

\begin{proof}
Define the convex combination of $\hat{\bbeta}^*(A)$ and $\bbeta^*(A)$ to be the same way as we did in proving theorem \ref{screening1} as $\hat{\bbeta}_u(A)$. Then is remains to show
$$\sup_{|A|\leq K}\frac{1}{\sqrt{|A|}}\|\hat{\bbeta}_u(A)-\bbeta^*(A)\|_2=O_P\left(\gamma_nL_n\sqrt{\frac{m_n\log p_n}{n}}\right)$$
By the definition of $\hat{\bbeta}^*(A)$ and the concavity of the likelihood function, we have
$$l_n(\hat{\bbeta}_u(A))-l_n(\bbeta^*(A))\geq 0$$
By the definition of $\bbeta^*(A)$, we have
$$E[l_n(\bbeta^*(A)-l_n(\hat{\bbeta}_u(A)]\geq 0$$
Combine the two inequalities above, we have
\begin{equation}
\label{lemma2nZbound}
0\leq E[l_n(\bbeta^*(A)-l_n(\hat{\bbeta}_u(A)]\leq l_n(\hat{\bbeta}_u(A))-l_n(\bbeta^*(A))-E[l_n(\hat{\bbeta}_u(A)-l_n(\bbeta^*(A)]\leq nZ_A(N)
\end{equation}
On the other hand, for any $\bbeta_A\in\mathbb{B}_A(N)$, we have
\begin{align*}
E[l_n(\bbeta_A)-l_n(\bbeta^*(A))]&=E[\boldy^T\Phi\bbeta_A-\boldone^Tb(\Phi\bbeta_A)-\boldy^T\Phi\bbeta^*(A)+\boldone^Tb(\Phi\bbeta^*(A))]\\
&=b'(\sum_{j=1}^{p_n}f_j)^T\Phi[\bbeta_A-\bbeta^*(A)]-\boldone^T[b(\Phi\bbeta_A)-b(\Phi\bbeta^*(A))]
\end{align*}
Observe that by the definition of $\bbeta^*(A)$, we have
$$\Phi[b'(\sum_{j=1}^{p_n}f_j)-b'(\Phi\bbeta^*(A))]=\boldzero$$
use Taylor expansion, we have
\begin{align*}
E[l_n(\bbeta_A)-l_n(\bbeta^*(A))]&=b'(\Phi\bbeta^*(A))^T\Phi[\bbeta_A-\bbeta^*(A)]-\boldone^T[b(\Phi\bbeta_A)-b(\Phi\bbeta^*(A))]\\
&=-\frac{1}{2}(\bbeta_A-\bbeta^*(A))^T\Phi_A^T\tilde{\bSigma}\Phi_A(\bbeta_A-\bbeta^*(A))\\
&\leq Cn\|\bbeta_A-\bbeta^*(A)\|_2^2
\end{align*}
where the last inequality is by assumptions 1 and 2. Then we have
$$\|\bbeta_A-\bbeta^*(A)\|_2^2\leq CZ_A(N)$$
Take $N=\gamma_nL_n\sqrt{|A|m_n\log p_n/n}$ and by lemma \ref{lemma2}, we have
$$\sup_{|A|\leq K}\frac{1}{\sqrt{|A|}}\|\hat{\bbeta}_u(A)-\bbeta^*(A)\|_2=O_P\left(\gamma_nL_n\sqrt{\frac{m_n\log p_n}{n}}\right)$$
Then lemma \ref{prop2} follows.
\end{proof}
\begin{lemma}
\label{prop4}
Under assumptions 1-3, we have
$$\sup_{|A|\leq K}\frac{1}{n|A|}\left(l_n(\hat{\bbeta}^*(A))-l_n(\bbeta^*(A))\right)\leq \frac{\gamma_n^2L_n^2m_n\log p_n}{n}$$
\end{lemma}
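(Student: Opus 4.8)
The plan is to split the excess log-likelihood into a centered empirical fluctuation and a deterministic (mean) part, control the mean part by optimality of $\bbeta^*(A)$, and control the fluctuation by the uniform empirical-process bound $Z_A(N)$ already established in lemma \ref{lemma2}.

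First I would write, for each fixed $A$ with $|A|\le K$,
\begin{align*}
\frac{1}{n}\left(l_n(\hat{\bbeta}^*(A))-l_n(\bbeta^*(A))\right)
=&\;\frac{1}{n}\Big(l_n(\hat{\bbeta}^*(A))-l_n(\bbeta^*(A))-E\big[l_n(\hat{\bbeta}^*(A))-l_n(\bbeta^*(A))\big]\Big)\\
&+\frac{1}{n}E\big[l_n(\hat{\bbeta}^*(A))-l_n(\bbeta^*(A))\big].
\end{align*}
The second summand is nonpositive: by definition $\bbeta^*(A)$ minimizes the Kullback--Leibler divergence $I(\bbeta(A))$ over all $\bbeta$ supported on $A$, which is the same as maximizing $E[l_n(\bbeta)]$ over that set; since $\text{supp}_B(\hat{\bbeta}^*(A))\subseteq A$, we obtain $E[l_n(\hat{\bbeta}^*(A))]\le E[l_n(\bbeta^*(A))]$, so this term drops out of the upper bound.

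Next I would control the first (centered) summand. By the very definition of $Z_A(N)$, if $\hat{\bbeta}^*(A)\in\mathcal{B}_A(N)$ then the centered term is at most $Z_A(N)$. Lemma \ref{prop2} supplies exactly the radius needed: uniformly over $|A|\le K$,
$$\|\hat{\bbeta}^*(A)-\bbeta^*(A)\|_2=O_P\Big(\gamma_nL_n\sqrt{|A|m_n\log p_n/n}\Big),$$
so taking $N=\gamma_nL_n\sqrt{|A|m_n\log p_n/n}$ we have, with probability tending to one and simultaneously for all such $A$, that $\hat{\bbeta}^*(A)\in\mathcal{B}_A(N)$. On this event,
$$\frac{1}{n|A|}\left(l_n(\hat{\bbeta}^*(A))-l_n(\bbeta^*(A))\right)\le\frac{1}{|A|}Z_A(N),$$
and taking the supremum over $|A|\le K$ and invoking lemma \ref{lemma2} with this same $N$ gives
$$\sup_{|A|\le K}\frac{1}{n|A|}\left(l_n(\hat{\bbeta}^*(A))-l_n(\bbeta^*(A))\right)\le\sup_{|A|\le K}\frac{1}{|A|}Z_A(N)=O_P\!\left(\gamma_n^2L_n^2\frac{m_n\log p_n}{n}\right),$$
which is the asserted bound.

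The main point to handle carefully is the interplay of the two uniform statements: the radius $N$ produced by lemma \ref{prop2} must coincide with the radius fed into lemma \ref{lemma2}, and the event $\{\hat{\bbeta}^*(A)\in\mathcal{B}_A(N)\text{ for all }|A|\le K\}$ must hold with probability tending to one uniformly over the combinatorially many subsets $A$. This uniformity is precisely what the supremum inside lemma \ref{prop2} delivers, so the real content of the argument is matching the two rates rather than any fresh estimate. A secondary clarification is that the displayed inequality should be read as an $O_P$ statement (equivalently, as holding with probability tending to one), since the ingredients being combined are themselves bounds in probability.
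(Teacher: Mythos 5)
Your proposal is correct and follows essentially the same route as the paper's own proof: the same decomposition into a centered empirical-process term plus an expectation term that is nonpositive by optimality of $\bbeta^*(A)$, with the radius supplied uniformly by lemma \ref{prop2} and the fluctuation controlled by $Z_A(N)$ via lemma \ref{lemma2}, concluding on the high-probability event. Your closing remark that the bound must be read as an $O_P$ statement is also right, and indeed matches what the paper's argument actually establishes.
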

\begin{proof}
Define the event
$$\mathcal{E}=\left\{\sup_{|A|\leq K}\frac{1}{\sqrt{|A|}}\|\hat{\bbeta}^*(A)-\bbeta^*(A)\|_2=O_P\left(\gamma_nL_n\sqrt{\frac{m_n\log p_n}{n}}\right)\right\}$$
By lemma \ref{prop2}, we have $\prob(\mathbb{E})\rightarrow 1$. Using the same argument as in (\ref{lemma2nZbound}) in proving lemma \ref{prop2}, we have
$$0\leq l_n(\hat{\bbeta}^*(A))-l_n(\bbeta^*(A))\leq l_n(\hat{\bbeta}_u(A))-l_n(\bbeta^*(A))-E[l_n(\hat{\bbeta}_u(A)-l_n(\bbeta^*(A)]\leq nZ_A(N)$$
By lemma \ref{lemma2}, conditioning on $\mathcal{E}$, we have
$$l_n(\hat{\bbeta}^*(A))-l_n(\bbeta^*(A))\leq nO_P\left(\gamma_n^2L_n^2\frac{|A|m_n\log p_n}{n}\right)$$
Then the lemma follow from $\prob(A)\leq\prob(A|\mathcal{E})+\prob(\mathcal{E}^c)$.
\end{proof}
\begin{lemma}
\label{prop5}
Under assumption 1-3, we have
$$\sup_{|A|\leq K}\frac{1}{n|A|}\left|l_n(\bbeta^*(A))-E[l_n(\bbeta^*(A))]\right|=O_P\left(\sqrt{\frac{\gamma_nm_n\log p_n}{n}}\right)$$
where $\log p_n=o(n)$ for bounded response and $\gamma_nm_nK^2\log p_n=o(n)$ for unbounded response.
\end{lemma}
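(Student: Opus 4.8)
The plan is to exploit the fact that, unlike the maximum-likelihood fits appearing in Lemmas \ref{prop2} and \ref{prop4}, the vector $\bbeta^*(A)$ is the \emph{deterministic} population Kullback--Leibler projection onto the support $A$. Writing the (unnormalised) log-likelihood used in these lemmas as $l_n(\bbeta)=\sum_{i=1}^n\left[y_i\Phi_i^T\bbeta-b(\Phi_i^T\bbeta)\right]$ and noting that $b(\Phi_i^T\bbeta^*(A))$ is non-random, the curvature term cancels on centring, leaving a purely linear functional of the noise:
\begin{equation*}
l_n(\bbeta^*(A))-E\left[l_n(\bbeta^*(A))\right]=\sum_{i=1}^n\left(y_i-\mu_{y_i}\right)\Phi_i^T\bbeta^*(A)=\bepsilon^T\Phi\bbeta^*(A).
\end{equation*}
Thus for each fixed $A$ the object is a mean-zero weighted sum of the independent sub-Gaussian errors $\epsilon_i=y_i-\mu_{y_i}$ guaranteed by Assumption 3, and the whole problem collapses to a maximal inequality for sub-Gaussian linear forms indexed by the finitely many supports $A$.

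First I would fix $A$ with $|A|=k$ and record that $\bepsilon^T\Phi\bbeta^*(A)$ is sub-Gaussian with variance proxy proportional to $\|\Phi\bbeta^*(A)\|_2^2=\bbeta^*(A)^T\Phi^T\Phi\bbeta^*(A)$. I would control this proxy through the upper restricted-eigenvalue bound of Assumption 2 (the $\gamma_1 m_n^{-1}$ side, which for the block structure of $\Phi$ is furnished by Assumption 1 exactly as in Appendix \ref{appendix1}), together with the block-norm estimate $\|\bbeta^*_j\|_2^2=O(m_n)$, a consequence of $\bbeta^*(A)\in\mathcal{B}_1$ and the spline norm equivalence $\|f_{nj}\|_2^2\asymp m_n^{-1}\|\bbeta_j\|_2^2$ used throughout; this gives $\|\bbeta^*(A)\|_2^2=O(km_n)$ and hence $\|\Phi\bbeta^*(A)\|_2^2=O(nk)$. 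The sub-Gaussian tail then yields $\prob\big((nk)^{-1}|\bepsilon^T\Phi\bbeta^*(A)|>t\big)\leq 2\exp(-cnkt^2)$ for a single $A$. Peeling over the subset size, there are at most $\binom{p_n}{k}\leq e^{k\log p_n}$ supports of cardinality $k$, so a union bound over $k=1,\dots,K$ produces $\sum_{k\le K}e^{k\log p_n}\,2\exp(-cnkt^2)$. Choosing the threshold $t_n=M\sqrt{\gamma_n m_n\log p_n/n}$ makes the concentration exponent $cnkt_n^2=cMk\gamma_n m_n\log p_n$ dominate the combinatorial entropy $k\log p_n$, since $\gamma_n m_n\to\infty$; taking $M$ large then sends the series to zero and delivers the claimed $O_P\big(\sqrt{\gamma_n m_n\log p_n/n}\big)$ rate.

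The main obstacle is the uniform control of the variance proxy $\|\Phi\bbeta^*(A)\|_2^2$ across all $\binom{p_n}{k}$ supports with $k$ as large as $K$, rather than the concentration step itself, which is routine once the $O(nk)$ proxy is pinned down and the diverging buffer $\gamma_n$ absorbs the entropy. In the bounded (or Gaussian) case the errors and the fitted predictors $\Phi_i^T\bbeta^*(A)$ are automatically bounded, so membership in $\mathcal{B}_1$ and the proxy bound hold under the mild $\log p_n=o(n)$. In the unbounded sub-Gaussian case one must additionally prevent the linear predictors, and hence the effective sub-Gaussian parameter of each $\epsilon_i\Phi_i^T\bbeta^*(A)$, from degrading as $k\to K$; this is precisely what the stronger requirement $\gamma_n m_n K^2\log p_n=o(n)$ secures, in the same spirit as the $M_n$-type tail control already invoked in Lemma \ref{highprob2}. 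Once these uniform bounds are in hand, assembling the pieces is immediate.
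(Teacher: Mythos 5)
Your proposal is correct in its core mechanism and, for the bounded case, is essentially the paper's own proof: the paper likewise observes that centring kills the deterministic $b(\Phi_i^T\bbeta^*(A))$ term, leaving $l_n(\bbeta^*(A))-E[l_n(\bbeta^*(A))]=\bepsilon^T\Phi\bbeta^*(A)$, applies a Hoeffding-type tail with variance proxy bounded by $n|A|m_n$, takes $t=|A|\sqrt{n\gamma_nm_n\log p_n}$, and lets the diverging $\gamma_n$ absorb the $\binom{p_n}{k}\leq(p_ne/k)^k$ entropy in the union bound. Two points of divergence are worth flagging. First, your tighter $O(nk)$ variance proxy invokes the upper restricted-eigenvalue bound of Assumption 2, which is stated only for vectors with $\text{card}_B(\bdelta)=o(s_n)$, whereas here $|A|$ ranges up to $K\gg s_n$; the paper avoids this by using the cruder proxy $n|A|m_n$ (obtainable from Assumption 1 and compactness of $\mathcal{B}$ alone), which still yields the stated rate since the normalised exponent $c|A|\gamma_n\log p_n$ dominates $|A|\log p_n$ — so your tightening is unnecessary and, as justified, slightly overreaches. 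Second, and more substantively, your treatment of the unbounded case is only gestured at: the paper handles it by verifying the moment condition $E[|\Phi_i^T\bbeta^*(A)\epsilon_i|^m]\leq m!\,C^{m-2}\,|\Phi_i^T\bbeta^*(A)|^2(\|\Phi\bbeta^*(A)\|_{\infty}C)^{m-2}/2$ and applying Bernstein's inequality, and it is the resulting term $C\sqrt{n}\|\Phi_A\bbeta^*(A)\|_{\infty}t$ in the Bernstein denominator that concretely forces $K^2\gamma_nm_n\log p_n/n\rightarrow 0$; your sketch asserts this condition ``secures'' the sub-Gaussian parameter without deriving where it enters. Indeed, since Assumption 3 grants the errors uniform sub-Gaussian tails even when unbounded, your single sub-Gaussian Hoeffding argument would formally cover both regimes without that condition, so your attribution of its role is imprecise relative to the paper's actual (Bernstein-based) route — harmless for the truth of the lemma as stated, but the unbounded half of your proof as written is an assertion, not an argument.
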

\begin{proof}
By the definition, we have $l_n(\bbeta^*(A))-E[l_n(\bbeta^*(A))]=\bepsilon^T\Phi\bbeta^*(A)$. For bounded response, by Hoeffding's inequality, we have
\begin{align*}
\prob(|\bepsilon^T\Phi\bbeta^*(A)|\geq t)&\leq C\exp\left(\frac{Ct^2}{\sum_{i=1}^n(\Phi_i^T\bbeta^*(A))^2}\right)\\
&\leq C\exp\left(-\frac{Ct^2}{n|A|m_n}\right)
\end{align*}
Take $t=|A|\sqrt{n\gamma_nm_n\log p_n}$, we have
$$\prob(|\bepsilon^T\Phi\bbeta^*(A)|\geq|A|\sqrt{n\gamma_nm_n\log p_n})\leq C\exp(-C|A|\gamma_n\log p_n)$$
Then we have
$$\sup_{|A|\leq K}\frac{1}{n|A|}\left|l_n(\bbeta^*(A))-E[l_n(\bbeta^*(A))]\right|=O_P\left(\sqrt{\frac{\gamma_nm_n\log p_n}{n}}\right)$$
If the responses are unbounded, we use Bernstein's inequality. First check the condition
\begin{align*}
E[|\Phi_i\bbeta^*(A)\epsilon_i|^m]&=m\int_0^{\infty}x^{m-1}\prob(|\Phi_i\bbeta^*(A)\epsilon_i\geq x)dx\\
&=m|\Phi_i^T\bbeta^*(A)|^m\int_0^{\infty}\left(\frac{x}{|\Phi_i^T\bbeta^*(A)|}\right)^{m-1}\prob\left(|\epsilon_i|\geq\frac{x}{|\Phi_i^T\bbeta^*(A)|}\right)d\frac{x}{|\Phi_i^T\bbeta^*(A)|}\\
&\leq m|\Phi_i^T\bbeta^*(A)|^m\int_0^{\infty}t^{m-1}C\exp(-Ct^2)dt)\\
&\leq m|\Phi_i^T\bbeta^*(A)|^m(\|\Phi\bbeta^*(A)\|_{\infty}C)^{m-2}\frac{m!}{2}
\end{align*}
Then by Bernstein's inequality, we have
$$\prob(|\bepsilon^T\Phi\bbeta^*(A)|\geq\sqrt{n}t)\leq 2\exp\left(-\frac{1}{2}\frac{nt^2}{C\|\Phi_A\bbeta^*(A)\|_2^2+C\sqrt{n}\|\Phi_A\bbeta^*(A)\|_{\infty}t}\right)$$
Taking $t=|A|\sqrt{\gamma_nm_n\log p_n}$, we have
\begin{align*}
&\prob(|\bepsilon^T\Phi\bbeta^*(A)|\geq\sqrt{n}|A|\sqrt{\gamma_nm_n\log p_n})\\
\leq &2\exp\left(-\frac{1}{2}\frac{n|A|^2\gamma_nm_n\log p_n}{C\|\Phi_A\bbeta^*(A)\|_2^2+C\sqrt{n}\|\Phi_A\bbeta^*(A)\|_{\infty}|A|\sqrt{\gamma_nm_n\log p_n}}\right)\\
\rightarrow& 0
\end{align*}
if $K^2\gamma_nm_n\log p_n/n\rightarrow 0$.
\end{proof}
\begin{lemma}
\label{lemma3}
Under assumptions 1-3, we have
$$\sup_{\substack{A\supset T\\|A|\leq K}}\frac{1}{|A|-|T|}(\boldy-\bmu_0)^T\bSigma_0^{-1/2}\boldB_A\bSigma_0^{-1/2}(\boldy-\bmu_0)=O_P(m_n(\gamma_n\log p_n)^{\xi})$$
where
$$\boldB_A=\bSigma_0^{1/2}\Phi_A(\Phi_A^T\bSigma_0\Phi_A)^{-1}\Phi_A^T\bSigma_0^{1/2}$$
and $\xi=1/2$ for bounded response and $\xi=1$ for unbounded response.
\end{lemma}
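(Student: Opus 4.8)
The plan is to recognize the quadratic form as the squared length of a mean-zero random vector projected onto a low-dimensional subspace, and then to combine a pointwise concentration inequality for such quadratic forms with a union bound over the combinatorial family of overfitted models. Write $\boldz=\bSigma_0^{-1/2}(\boldy-\bmu_0)$, where $\bmu_0$ and $\bSigma_0$ are the true mean and (diagonal) variance matrix. By Assumption 3 the diagonal entries of $\bSigma_0$ lie in $[c_1,c_1^{-1}]$, so $\bSigma_0^{\pm1/2}$ are diagonal and well conditioned; consequently $\boldz$ has independent, mean-zero, unit-variance coordinates $z_i=b''(\theta_i)^{-1/2}\epsilon_i$ that inherit the uniform sub-Gaussian tails of $\bepsilon$ from (\ref{assbddtailprob}). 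Since $\boldB_A=\bSigma_0^{1/2}\Phi_A(\Phi_A^T\bSigma_0\Phi_A)^{-1}\Phi_A^T\bSigma_0^{1/2}$ is the orthogonal projection onto the column space of $\bSigma_0^{1/2}\Phi_A$, the target quantity equals $\boldz^T\boldB_A\boldz=\|\boldB_A\boldz\|_2^2$, with $\mathrm{tr}(\boldB_A)=\mathrm{rank}(\boldB_A)$. Assumption 2 guarantees that $\Phi_A^T\bSigma_0\Phi_A$ is invertible for every $A$ with $|A|\le K$ (as $Km_n=o(n)$), so $\boldB_A$ is well defined; and because each group adds $m_n$ columns, the effective rank driving the comparison with the baseline model $T$ scales like $(|A|-|T|)m_n$, which is the source of the factor $m_n$ in the asserted rate.

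First I would fix a model $A\supsetneq T$ and prove a pointwise tail bound. In the bounded case the coordinates of $\boldz$ have a uniformly bounded sub-Gaussian norm, so the Hanson--Wright inequality (equivalently, chi-square concentration, using $\|\boldB_A\|_F^2=\mathrm{rank}(\boldB_A)$ and $\|\boldB_A\|_{\mathrm{op}}=1$) gives
\[
\prob\!\left(\boldz^T\boldB_A\boldz-\mathrm{tr}(\boldB_A)>t\right)\le 2\exp\!\left(-c\min\{t^2/\mathrm{rank}(\boldB_A),\,t\}\right).
\]
In the unbounded case I would first restrict to the high-probability event $\Omega_n=\{\|\bepsilon\|_\infty\le C\sqrt{\log n}\}$, shown to satisfy $\prob(\Omega_n)\to1$ exactly as in the proof of Lemma \ref{lemma2}; on $\Omega_n$ the effective scale is $L_n=O(M_n+\sqrt{\log n})$, which enters the exponent above through a factor $L_n^{-2}$ and pushes the relevant deviation into the linear regime of the tail, degrading $\xi$ from $1/2$ to $1$.

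Next I would take a union bound over the overfitted models. For each incremental size $\ell=|A|-|T|\in\{1,\dots,K-s_n\}$ there are at most $\binom{p_n-s_n}{\ell}\le p_n^{\ell}$ such models and $\mathrm{rank}(\boldB_A)\asymp \ell m_n$, so the logarithmic complexity contributed by the union bound is of order $\ell\log p_n$. Choosing the threshold $t_\ell\asymp \ell\,m_n(\gamma_n\log p_n)^{\xi}$ makes the cross term $\sqrt{\mathrm{rank}(\boldB_A)\cdot\ell\gamma_n\log p_n}$ (bounded case) and the linear term $L_n^2\,\ell\gamma_n\log p_n$ (unbounded case) both dominated by $t_\ell$, so that $\sum_\ell p_n^{\ell}\exp(-c\,\ell\gamma_n\log p_n)\to0$ since $\gamma_n\to\infty$. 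Dividing by $\ell$ and taking the supremum then yields the stated rate, the mean term $\mathrm{tr}(\boldB_A)/\ell\asymp m_n$ being absorbed into $m_n(\gamma_n\log p_n)^{\xi}$ under the standing relations among $m_n,p_n,\gamma_n$ and $K$ (in particular $\gamma_n\log p_n=O(m_n^2)$ in the bounded case).

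I expect the main obstacle to be the unbounded sub-Gaussian case: a quadratic form of unbounded errors has only sub-exponential tails, so controlling its maximum over the exponentially many subsets $A$ forces the truncation argument and a careful accounting of how $L_n^2$ enters the concentration exponent; this is precisely the mechanism behind the jump from $(\gamma_n\log p_n)^{1/2}$ to $(\gamma_n\log p_n)$. A secondary but necessary point is establishing uniform invertibility and boundedness of $(\Phi_A^T\bSigma_0\Phi_A)^{-1}$ across all $A$ with $|A|\le K$, which I would obtain from the restricted-eigenvalue Assumption 2 together with $Km_n=o(n)$, ensuring $\boldB_A$ is a genuine projection with $\|\boldB_A\|_{\mathrm{op}}=1$ for every model in the supremum.
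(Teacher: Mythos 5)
Your overall architecture matches the paper's: standardize the errors, recognize the quadratic form as a projection, prove a pointwise concentration bound, union bound over the at most $\binom{p_n-s_n}{\ell}$ overfitted models at each increment $\ell=|A|-|T|$ using ${p\choose k}\le(pe/k)^k$, and handle the unbounded case by truncating on $\Omega_n=\{\|\bepsilon\|_\infty\le C\sqrt{\log n}\}$ exactly as in Lemma \ref{lemma2} (yielding $\xi=1$ via Bernstein rather than $\xi=1/2$ via Hoeffding). The one genuinely different tool is that you invoke Hanson--Wright, whereas the paper hand-rolls the same bound by splitting $\tilde{\boldy}^T\boldP_A\tilde{\boldy}$ into its diagonal part and its off-diagonal chaos part, the latter controlled by a decoupling inequality with independent copies $\tilde{y}_i^*$ followed by Hoeffding/Bernstein. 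That substitution is legitimate and arguably cleaner, but note that Hanson--Wright's sub-exponential linear branch is exactly what forces your extra side condition $\gamma_n\log p_n=O(m_n^2)$, which appears nowhere in the paper's assumptions; the paper's coarser diagonal/off-diagonal treatment keeps the exponent of order $m_nk\gamma_n\log p_n$ and needs no such condition.

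There is, however, one genuine gap: the centering. You work with $\boldB_A$ itself and assert that the mean term satisfies $\mathrm{tr}(\boldB_A)/\ell\asymp m_n$. But $\mathrm{tr}(\boldB_A)=|A|m_n\ge(s_n+\ell)m_n$, so at $\ell=1$ the mean term is of order $s_nm_n$, which is not $O\bigl(m_n(\gamma_n\log p_n)^{\xi}\bigr)$ once $s_n$ diverges faster than $(\gamma_n\log p_n)^{\xi}$ --- and diverging $s_n$ is explicitly allowed in this paper. The paper's first move is precisely to replace $\boldB_A$ by the increment $\boldP_A=\boldB_A-\boldB_T$, which is again a projection (the column space of $\bSigma_0^{1/2}\Phi_T$ is contained in that of $\bSigma_0^{1/2}\Phi_A$) with $\mathrm{tr}(\boldP_A)=\ell m_n$ exactly, so that dividing by $|A|-|T|$ leaves a mean term of order $m_n$; and indeed where the lemma is applied, in Lemma \ref{prop3}, the relevant object is the difference $\tilde{R}_{1,A}=(\boldB_A-\boldB_T)\bSigma_0^{-1/2}\bepsilon$, not $\boldB_A$ alone. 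To be fair, the printed statement of the lemma writes $\boldB_A$, so you followed it literally; but as written that statement is not attainable with the claimed normalization, and your remark about the ``effective rank'' $(|A|-|T|)m_n$ shows you sensed the right object without implementing it. Replacing $\boldB_A$ by $\boldP_A$ throughout your argument (so $\|\boldP_A\|_F^2=\ell m_n$ and $\|\boldP_A\|_{\mathrm{op}}=1$ in the Hanson--Wright bound, with thresholds $t_\ell\asymp\ell m_n(\gamma_n\log p_n)^{\xi}$ unchanged) repairs the proof with no other modification.
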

\begin{proof}
Let $k=|A|-|T|$ and $\boldP_A=\boldB_A-\boldB_T$. It's easy to verify that $\boldP_A$ is a projection matrix, thus we have $tr(\boldP)=km_n$, $\sum_{i=1}^nP_{ii}=km_n$ and $\sum_{i,j}P_{ij}=km_n$. Let
$$\tilde{\boldy}=\bSigma_0^{-1/2}(\boldy-\bmu_0)$$
We have the decomposition
$$\frac{1}{m_nk}\tilde{\boldy}^T\boldP_A\tilde{\boldy}=\frac{1}{m_nk}\sum_{i=1}^nP_{ii}\tilde{y}_i^2+\frac{1}{m_nk}\sum_{i\neq j}P_{ij}\tilde{Y}_i\tilde{Y}_j\equiv I_1(A)+I_2(A)$$
Let $\tilde{y}_i^*$ be independent copies of $\tilde{y}_i$, then by the decoupling inequality, there exists a constant $C>0$ such that
$$\prob\left(\frac{1}{m_nk}|\sum_{i\neq j}P_{ij}\tilde{y}_i\tilde{y}_j|\geq t\right)\leq C\prob\left(\frac{1}{m_nk}|\sum_{i\neq j}P_{ij}\tilde{Y}_i\tilde{Y}_j^*|\geq C^{-1}t\right)$$
For bounded response, apply Hoeffding's inequality, we have
$$\prob(I_1(A)\geq 1+x)\leq 2\exp\left(2\frac{Cx^2}{\sum_{i=1}^n(m_nk)^{-2}P_{ii}^2}\right)\leq 2\exp(-Cm_nkx^2)$$
Taking $x=\sqrt{\gamma\log p_n}$, use the inequality ${p\choose k}\leq (pe/k)^k$ and use the same technique as we used in proving lemma \ref{lemma2}, we have
$$\prob\left(\sup_{|A|\leq K}I_1(A)\geq 1+\sqrt{\gamma_n\log p_n}\right)\leq 2C\sum_{k=1}^K\left(\frac{(p_n-s_n)e}{k}\right)^k\exp(-Cm_nk\gamma_n\log p_n)\rightarrow 0$$
Then observe $\sum_{i\neq j}P_{ij}^2=\sum_i(P_{ii}-P_{ii}^2)\leq m_nk$, we have following the decoupling inequality that
\begin{align*}
\prob(|I_2(A)|\geq t)&\leq C\prob\left(\frac{1}{m_nk}|\sum_{i\neq j}P_{ij}\tilde{Y}_i\tilde{Y}_j^*|\geq C^{-1}t\right)\\
&\leq C\exp\left(-\frac{C^{-2}(m_nk)^2t^2}{\sum_{i\neq j}P_{ij}^2}\right)\\
&\leq C\exp(-Cm_nkt^2)
\end{align*}
Taking $t=\sqrt{\gamma_n\log p_n}$ and use the same technique as in the previous step, we have
$$\prob\left(\sup_{|A|\leq K}I_2(A)\geq 1+\sqrt{\gamma_n\log p_n}\right)\rightarrow 0$$
In the unbounded case, we apply the Bernstein's inequality. In the same way as we did in proving lemma \ref{prop5}, we check the condition
$$E|P_{ii}\tilde{Y}_i^2|^m\leq m!C^{m-2}\frac{P_{ii}^2}{2}$$
By Bernstein's inequality, we have
$$\prob(I_1(A)\geq x^2)\leq 2\exp(-Cm_nkx^2)$$
Taking $x=\sqrt{\gamma_n\log p_n}$, we have
$$\sup_{|A|\leq K}I_1(A)=O_P(\gamma_n\log p_n)$$
For $I_2(A)$, we have
$$\sum_{i\neq j}|P_{ij}|^mE[|\tilde{\boldy}_i\tilde{\boldy}_j^*|^m]\leq m!C^{m-2}\frac{P_{ij}^2}{2}$$
Then by Berstein's inequality and taking $x=\sqrt{\gamma_n\log p_n}$, we have
$$\prob(|I_2(A)|\geq \gamma_n\log p_n)\rightarrow 0$$
\end{proof}
\begin{lemma}
\label{prop3}
Under assumptions 1-3, for all $A\supset T$ and $|A|\leq K$, we have
\begin{align*}
l_n(\hat{\bbeta}^*(A))-l_n(\bbeta^(A))&=\frac{1}{2}(\boldy-\bmu_0)^T\bSigma_0^{-1/2}\boldB_A\bSigma_0^{-1/2}(\boldy-\bmu_0)+|A|^{5/2}O_P\left(m_n^{5/2}\gamma_n^{5/2}L_n^2\frac{(\log p_n)^{1+\xi/2}}{\sqrt{n}}\right)\\
&+|A|^4O_P\left(m_n^4\gamma_n^4L_n^4\frac{(\log p_n)^2}{n}\right)+|A|^3O_P\left(m_n^3\gamma_n^3L_n^3\frac{(\log p_n)^{3/2}}{\sqrt{n}}\right)
\end{align*}
\end{lemma}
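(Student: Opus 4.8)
The plan is to read this as a second-order Taylor expansion of the restricted log-likelihood about the population Kullback--Leibler minimiser $\bbeta^*(A)$, with the $M$-estimation error supplied by Lemma \ref{prop2}. The first step is to pin down $\bbeta^*(A)$ for an overfitted model. Since $A\supset T$, setting the coefficients on $A\setminus T$ to zero makes $I(\bbeta(A))=0$, the global minimum of the divergence, so $\bbeta^*(A)=\bbeta^0$, $\bmu^*(A)=b'(\Phi\bbeta^*(A))=\bmu_0$ and $\bSigma(\bbeta^*(A))=\bSigma_0$. Consequently the score at $\bbeta^*(A)$ is $\Phi_A^T(\boldy-\bmu_0)$ and the negative Hessian is $\Phi_A^T\bSigma_0\Phi_A$, which is invertible and well conditioned under Assumptions 1--2.

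Writing $\delta=\hat\bbeta^*(A)-\bbeta^*(A)$, supported on $A$, I would first expand the stationarity condition $\Phi_A^T(\boldy-b'(\Phi_A\hat\bbeta_A))=\boldzero$ about $\bbeta^0$ to get $\Phi_A^T\bSigma_0\Phi_A\,\delta=\Phi_A^T(\boldy-\bmu_0)+(\text{cubic remainder})$, hence the leading relation $\delta=(\Phi_A^T\bSigma_0\Phi_A)^{-1}\Phi_A^T(\boldy-\bmu_0)+\cdots$. Substituting this into the second-order expansion
\begin{equation*}
l_n(\hat\bbeta^*(A))-l_n(\bbeta^*(A))=(\boldy-\bmu_0)^T\Phi_A\,\delta-\tfrac12\,\delta^T\Phi_A^T\bSigma_0\Phi_A\,\delta-\tfrac16\sum_{i=1}^n b'''(\xi_i)(\Phi_i^T\delta)^3,
\end{equation*}
the score and Hessian contributions combine into $\tfrac12(\boldy-\bmu_0)^T\bSigma_0^{-1/2}\boldB_A\bSigma_0^{-1/2}(\boldy-\bmu_0)$ via the identity $\Phi_A(\Phi_A^T\bSigma_0\Phi_A)^{-1}\Phi_A^T=\bSigma_0^{-1/2}\boldB_A\bSigma_0^{-1/2}$. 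This is exactly the stated leading term.

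The three displayed error terms are the Taylor remainders, and controlling them uniformly over all $A\supset T$ with $|A|\le K$ is the heart of the argument. The plan is to feed the rate from Lemma \ref{prop2}, namely $\|\delta\|_2=O_P\big(\sqrt{|A|}\,\gamma_nL_n\sqrt{m_n\log p_n/n}\big)$, into bounds for the cubic term $\tfrac16\sum_i b'''(\xi_i)(\Phi_i^T\delta)^3$ and for the discrepancy between the realised Hessian and $\Phi_A^T\bSigma_0\Phi_A$; Assumption 3 ($|b'''|\le c_1^{-1}$) bounds the cubic coefficient, Assumption 1 controls $\max_i|\Phi_i^T\delta|$, and Assumption 2 controls $\|\Phi_A\delta\|_2$. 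The pure cubic remainder yields the $|A|^3$-term of order $n^{-1/2}$, its square/quartic analogue yields the $|A|^4$-term of order $n^{-1}$, and the cross term between the residual quadratic form and the remainder in the expansion of $\delta$ yields the $|A|^{5/2}$-term; in that last term the residual factor $\boldy-\bmu_0$ is handled on the high-probability event of Lemma \ref{lemma3}, which is what injects the exponent $\xi$ (equal to $1/2$ for bounded and $1$ for unbounded responses) and hence the $(\log p_n)^{1+\xi/2}$ factor.

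The main obstacle will be this uniform-in-$A$ bookkeeping: each of the three bounds must hold simultaneously across the combinatorially many overfitted supports, and the precise powers of $|A|$, $m_n$, $\gamma_n$, $L_n$ and $\log p_n$ must be tracked carefully through every substitution rather than absorbed into generic constants. I would manage the uniformity through the same union-bound and maximal-inequality machinery (together with the event $\Omega_n=\{\|\bepsilon\|_\infty\le \tilde L_n\}$) already developed in Lemmas \ref{lemma2} and \ref{lemma3}, since those lemmas are precisely engineered to give control uniform over $|A|\le K$ at the cost of the $\gamma_n\log p_n$ factors appearing in the remainders.
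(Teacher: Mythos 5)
Your proposal follows essentially the same route as the paper's proof: a second-order Taylor expansion of $l_n$ about $\bbeta^*(A)$ combined with the Taylor-expanded stationarity condition for $\hat{\bbeta}^*(A)$, so that the score and Hessian contributions merge into $\tfrac12(\boldy-\bmu_0)^T\bSigma_0^{-1/2}\boldB_A\bSigma_0^{-1/2}(\boldy-\bmu_0)$, with the remainders bounded uniformly via the rate of Lemma \ref{prop2} and the quadratic-form control of Lemma \ref{lemma3}. Your identification of the three error terms --- the cubic remainder giving the $|A|^3$ term, its quartic analogue giving the $|A|^4$ term, and the noise--remainder cross term (handled through Lemma \ref{lemma3}, which injects the exponent $\xi$) giving the $|A|^{5/2}$ term --- matches the paper's $I_3(A)$, $R_{2,A}$ and $R_{1,A}$ precisely, so this is the same argument.
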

\begin{proof}
Use Taylor's expansion, we have
\begin{align*}
&l_n(\hat{\bbeta}^*(A))-l_n(\bbeta^*(A))\\
=&(\hat{\bbeta}^*(A)-\bbeta^*(A))^T\Phi^T(\boldy-b'(\Phi\bbeta^*(A))-\frac{1}{2}(\hat{\bbeta}^*(A)-\bbeta^*(A))^T\Phi^T\bSigma_0\Phi(\hat{\bbeta}^*(A)-\bbeta^*(A))+\text{Remainder}\\
\equiv&I_1(A)+I_2(A)+I_3(A)
\end{align*}
First, by the definition of $\hat{\bbeta}^*(A)$, we have
$$\Phi_A^T[\boldy-b'(\Phi\hat{\bbeta}^*(A))]=0$$
Then by Taylor expansion, we have
\begin{align*}
\Phi_A^T\boldy&=\Phi_A^Tb'(\Phi\hat{\bbeta}^*(A))\\
&=\Phi_A^Tb'(\Phi\bbeta^*(A))+\Phi_A^T\bSigma_0\Phi(\hat{\bbeta}^*(A)-\bbeta^*(A))+\Phi_A^T\bnu_A
\end{align*}
where $\nu_{Ai}=b'''(\Phi_i^T\tilde{\bbeta}^*(A))(\Phi_i^T(\hat{\bbeta}^*(A)-\bbeta^*(A)))^2/2$ and $\tilde{\bbeta}^*(A)$ lies on the line segment joining $\hat{\bbeta}^*(A)$ and $\bbeta^*(A)$. By the definition of $\bbeta^*(A)$, we have
$$\Phi_A^T[b'(\sum_{j=1}^{p_n}f_j)-b'(\Phi_A\bbeta^*(A))]=0$$
we have
$$\hat{\bbeta}^*(A)-\bbeta^*(A)=(\Phi_A^T\bSigma_0\Phi_A)^{-1}\Phi_A^T(\boldy-b'(\sum_{j=1}^{p_n}-\bnu_A))$$
Therefore, we have
$$I_1(A)=(\boldy-\bmu_y)^T\bSigma_0^{-1/2}\boldB_A\bSigma_0^{-1/2}(\boldy-\bmu_y)+R_{1,A}$$
where $R_{1,A}=-\bmu_A^T\bSigma_0^{-1/2}\boldB_A\bSigma_0^{-1/2}\bepsilon$. By Cauchy-Schwartz inequality, we have
\begin{align*}
|R_{1,A}|&\leq\|\boldB_A\bSigma_0^{-1/2}\bepsilon\|_2\|\bSigma_0^{-1/2}\bnu_A\|_2\\
&\leq (\|\boldB_T\bSigma_0^{-1/2}\bepsilon\|_2+\|\tilde{R}_{1,A}\|_2)\|\bSigma_0^{-1/2}\bnu_A\|_2
\end{align*}
where $\tilde{R}_{1,A}=(\boldB_A-\boldB_0)\bSigma_0^{-1/2}\bepsilon$. Observe that $\bSigma_0=E[\epsilon\epsilon^T]$ and $tr(\boldB_T\boldB_T)=m_ns_n$, take $\gamma_n\rightarrow\infty$, by Markov's inequality, we have
\begin{align*}
\prob\left(\|\boldB_T\bSigma_0^{-1/2}\bepsilon\|_2\geq\sqrt{m_ns_n\gamma_n}\right)&\leq\frac{1}{m_ns_n\gamma_n}E[\|\boldB_T\bSigma_0^{-1/2}\bepsilon\|_2^2]\\
&=\frac{1}{m_ns_n\gamma_n}tr\{\boldB_T\bSigma_0^{-1/2}E[\bepsilon\bepsilon^T]\bSigma_0^{-1/2}\boldB_T\}\\
&=\frac{1}{\gamma_n}\rightarrow 0
\end{align*}
Then we have
\begin{equation}
\label{lastlemma1}
\|\boldB_T\bSigma_0^{-1/2}\bepsilon\|_2=O_P(\sqrt{m_ns_n\gamma_n})
\end{equation}
By lemma \ref{lemma3}, we have
\begin{equation}
\label{lastlemma2}
(|A|-|T|)^{-1/2}\|\tilde{R}_{1,A}\|_2=O_P(m_n^{1/2}(\gamma_n\log p_n)^{\xi})
\end{equation}
Finally, we have
\begin{align}
\label{lastlemma3}
\|\bSigma_0^{-1/2}\bnu_A\|_2&\leq C\|\bnu_A\|_2\nonumber\\
&\leq C\left(\sum_{i=1}^n|\Phi_i^T(\hat{\bbeta}^*(A)-\bbeta)^*(A))|^4\right)^{1/2}\nonumber\\
&\leq C\left(\sum_{i=1}^n\|\Phi_{iA}\|_2^4\|\hat{\bbeta}^*(A)-\bbeta)^*(A)\|_2^4\right)^{1/2}\nonumber\\
&\leq Cm_n|A|n^{1/2}\|\hat{\bbeta}^*(A)-\bbeta)^*(A)\|_2^2\nonumber\\
&=m_n^2|A|^2O_P\left(\gamma_n^2L_n^2\frac{\log p_n}{\sqrt{n}}\right)
\end{align}
Combining (\ref{lastlemma1}), (\ref{lastlemma2}) and (\ref{lastlemma3}), we have
$$I_1(A)=(\boldy-\bmu_y)^T\bSigma_0^{-1/2}\boldB_A\bSigma_0^{-1/2}(\boldy-\bmu_y)+O_P\left(|A|^{5/2}m_n^{5/2}\gamma_n^{5/2}L_n^2\frac{(\log p_n)^{1+\xi/2}}{\sqrt{n}}\right)$$
Then we look at $I_2(A)$. We have
\begin{align*}
I_2(A)&=\frac{1}{2}(\hat{\bbeta}^*(A)-\bbeta^*(A))^T\Phi^T\bSigma_0\Phi(\hat{\bbeta}^*(A)-\bbeta^*(A))\\
&=\frac{1}{2}(\boldy-\bmu_y)^T\bSigma_0^{-1/2}\boldB_A\bSigma_0^{-1/2}(\boldy-\bmu_y)+\frac{1}{2}R_{2,A}-R_{1,A}
\end{align*}
where
\begin{align*}
R_{2,A}&=\bnu_A\bSigma_0^{-1/2}\boldB_A\bSigma_0^{-1/2}\bmu_A\\
&\leq C\|\bnu_A\|_2^2\\
&\leq Cm_n^2|A|^2n\|\hat{\bbeta}^*(A)-\bbeta^*(A)\|_2^4\\
&=O\left(m_n^2|A|^4\gamma_n^4L_n^4\frac{m_n^2(\log p_n)^2}{n^2}n\right)\\
&=O\left(|A|^4m_n^4\gamma_n^4L_n^4\frac{(\log p_n)^2}{n}\right)
\end{align*}
Therefore,
\begin{align*}
I_2(A)=&\frac{1}{2}(\boldy-\bmu_y)^T\bSigma_0^{-1/2}\boldB_A\bSigma_0^{-1/2}(\boldy-\bmu_y)+O_P\left(|A|^{5/2}m_n^{5/2}\gamma_n^{5/2}L_n^2\frac{(\log p_n)^{1+\xi/2}}{\sqrt{n}}\right)\\
&+O\left(|A|^4m_n^4\gamma_n^4L_n^4\frac{(\log p_n)^2}{n}\right)
\end{align*}
Finally, we have for $I_3(A)$ that
\begin{align*}
|I_3(A)|&\leq Cn|A|^{3/2}m_n^{3/2}\|\hat{\bbeta}^*(A)-\bbeta^*(A)\|_2^3\\
&=O_P\left(|A|^3m_n^3\gamma_n^3L_n^3\frac{(\log p_n)^{3/2}}{\sqrt{n}}\right)
\end{align*}
Combining the three results for $I_1(A)$, $I_2(A)$ and $I_3(A)$, we get the desired result.
\end{proof}

\end{appendices}

\end{document}